\numberwithin{equation}{section}
\newtheorem{theorem}{Theorem}[section]
\newtheorem{definition}[theorem]{Definition}
\newtheorem{lemma}[theorem]{Lemma}
\newtheorem{proposition}[theorem]{Proposition}
\newtheorem{remark}{Remark}
\newtheorem{claim}[theorem]{Claim}
\newtheorem{algorithm}{Algorithm}
\numberwithin{equation}{section}
\def\1{{\mathchoice {\rm 1\mskip-4mu l} {\rm 1\mskip-4mu l}
{\rm 1\mskip-4.5mu l} {\rm 1\mskip-5mu l}}}
\definecolor{Blue}{cmyk}{1,1,0,0}
\definecolor{Green}{cmyk}{1,0,1,0}
\DeclareMathOperator{\var}{Var}
\DeclareMathOperator{\argmin}{argmin}
\DeclareMathOperator{\argmax}{argmax}
\title[Long-range divisible sandpiles on the torus]{Constructing fractional Gaussian fields from long-range divisible sandpiles on the torus}
\author[L. Chiarini]{\small Leandro Chiarini}
\address{IMPA, Estrada Dona Castorina 110, 22460-320, Rio de Janeiro, Brazil}
\address{Utrecht University, Budapestlaan 6, 3584 CD Utrecht, The Netherlands}
\email{chiarini@impa.br / l.chiarinimedeiros@uu.nl}
\author[M. Jara]{\small Milton Jara}
\address{IMPA, Estrada Dona Castorina 110, 22460-320, Rio de Janeiro, Brazil}
\email{mjara@impa.br}
\author[W. M. Ruszel]{\small Wioletta M. Ruszel}
\address{Utrecht University, Budapestlaan 6, 3584 CD Utrecht, The Netherlands}
\email{w.m.ruszel@uu.nl}
\begin{document}

\begin{abstract}
In \cite{Cipriani2016}, the authors proved that, with the appropriate rescaling,
the odometer of the (nearest neighbours) divisible sandpile on the unit torus
converges to a bi-Laplacian field. Here, we study $\alpha$-long-range divisible
sandpiles, similar to those introduced in  \cite{Frometa2018}. We show that, for
$\alpha \in (0,2)$, the limiting field is a fractional Gaussian field on the
torus with parameter $\alpha/2$.  However, for $\alpha \in [2,\infty)$, we recover the bi-Laplacian field.
This provides an alternative construction of fractional Gaussian fields such
as the Gaussian Free Field or membrane model using a diffusion based on the generator of L\'evy walks. The central
tool for obtaining our results is a careful study of the spectrum of the fractional
Laplacian on the discrete torus. More specifically, we need the rate of
divergence of the eigenvalues as we let the side length of the discrete torus
go to infinity.  As a side result, we obtain precise asymptotics for the eigenvalues of
discrete fractional Laplacians. Furthermore, we determine the order of the
expected maximum of the discrete fractional Gaussian field with parameter $\gamma=\min \{\alpha,2\}$ and $\alpha \in \mathbb{R}_+\backslash\{2\}$ on a finite grid.
\end{abstract}

\keywords{Divisible sandpile, odometer, bi-laplacian field, fractional Gaussian fields, Green's function, abstract Wiener space, long-range random walks, scaling limits}

\subjclass[2010]{60G50, 60G15, 60J45, 82C20}

\maketitle

\section{Introduction} \label{sec-intro}
The divisible sandpile model is the continuous fixed energy counterpart of the
Abelian sandpile model, which was introduced in \cite{Bak87} as a discrete toy
model displaying self-organised criticality. Self-organised critical models are
characterised by a power-law behaviour of certain quantities such as two-point correlation 
functions without fine-tuning any external parameter. The divisible sandpile model was introduced in
\cite{Levine09}.  It gives insight into the behaviour of internal diffusion limited aggregation
growth models on $\mathbb{Z}^d$ due to its similarity.

Consider a finite graph $G$ (e.g. a discrete torus $(\mathbb{Z}/n\mathbb{Z})^d$)
and initially assign randomly to each vertex a real number drawn from a given
distribution. This real number plays the role of a \textit{mass} in case the
number is positive and a \textit{hole} otherwise. At each time step, topple all
vertices with mass strictly larger than 1 by keeping mass 1 and redistributing the excess to
its neighbours. Two different redistribution types can be considered: either
redistribution of mass happens to nearest neighbours (we will call the associated model nearest neighbour divisible sandpile) 
or to all neighbours according to their relative distance to the unstable vertex
and depending on a parameter $\alpha>0$ (long-range divisible sandpile). Under certain conditions (described in
\cite{Levine2016}), the sandpile configuration will stabilise, meaning that all
heights will be equal to 1.

If we depict now the total amount of mass emitted from each vertex of the graph
upon stabilisation (odometer), we can interpret the odometer function as a
\textit{random interface model} on the discrete graph $G$. Examples of
interfaces in nature are hypersurfaces separating ice and water at $0^o$ C. Fractional Laplacians $(-\Delta)^{\sfrac{\alpha}{2}}$ describe
diffusion processes due to random displacement over long distances. Applications
in physics include turbulent fluid motions \cite{Churba2016, Epps2018} or
anomalous transport in fractured media \cite{Obe2018}. For a general reference
and more applications see also \cite{Valdi17, Poz16}. A survey about random interface models can be found in \cite{Funaki16} and about
scaling limits of odometers of divisible sandpiles on the torus in \cite{WSurvey}.

For the nearest neighbour divisible sandpile, we get the following central limit type of behaviour.
If the initial configuration satisfies a second moment and a certain independence condition, then the
rescaled odometer converges to a bi-Laplacian field in some appropriate Sobolev
space, see Theorem~2 in \cite{Cipriani2016}.

In this paper, we study the divisible sandpile model, which is redistributing its
excess mass to all the vertices of the d-dimensional torus upon each toppling. The amount of
mass emitted from $x$ and received by $y$ depends on the distance
$||x-y||^{-\alpha}$ (where $\| \cdot \|$ denotes the Euclidean norm) and is tuned by some
parameter $\alpha$, for $\alpha \in (0,\infty)$. A related problem was studied
in \cite{Frometa2018} where the authors consider a divisible sandpile model on
$\mathbb{Z}^d$ with a deterministic initial configuration, supported on a finite
domain and redistributing the excess mass according to a truncated long-range
random walk. They study the scaling limit of the odometer by exploring the connection of the limiting distribution
to an obstacle problem for a truncated fractional Laplacian.
This connection was established for the nearest neighbour divisible sandpile model in Lemma~2.2 in \cite{LevineLap}. 

The main results and novelty of the paper include determining upper and lower bounds for the expected odometer on the discrete torus for an initial Gaussian configuration for all $\alpha \neq 2$, which is stated in Theorem~\ref{thm-bounds-odometer-finite}, the scaling limit of the odometer function to a fractional Gaussian field  fGF$_{\gamma}(\mathbb{T}^d)$, $\gamma=\min\{\alpha,2\}$ and $\alpha \in
(0,\infty)$ on the continuous torus $\mathbb{T}^d$ in an appropriate Sobolev space depending on $\alpha$ in Theorem~\ref{theorem-main-non-Gaussian} and explicit asymptotics for the eigenvalues of discrete fractional Laplacians in Lemmas~\ref{lem-bounds-on-eigenvalues-1}, \ref{lem-bounds-on-eigenvalues-4} and \ref{lem-bounds-on-eigenvalues-5} for all $\alpha >0$. Note that the expected odometer is equal to the expected maximum of the discrete (massive) fractional Gaussian field on the discrete torus, when the initial configuration is Gaussian.

The structure of the proof of Theorem~\ref{thm-bounds-odometer-finite} is similar to the proof of Theorem~1.2 in \cite{Levine2016} and for the scaling limit in Theorem~\ref{theorem-main-non-Gaussian}  we rely on Theorem~2 in  \cite{Cipriani2016} proven for the nearest
neighbour case. The crucial part of the proofs involves a careful
analysis of the eigenvalues of the discrete fractional Laplacian for the different values of
$\alpha$.

In \cite{Jan} the authors constructed fractional Gaussian fields fGF$_{\gamma}(\mathbb{T}^d)$ with $\gamma\geq
2$ for correlated initial Gaussian configurations and nearest neighbour
redistribution. Note that starting initially with correlated Gaussians can only
produce fields which are in some sense \textit{smoother} than the bi-Laplacian ($\gamma=2$)
and never of Gaussian Free Field (GFF) type ($\gamma=1$) which is included in our results. 
The GFF is a very well known interface model which plays a crucial role in random field
theory, lattice statistical physics, stochastic partial differential equations
and quantum gravity theory in dimension $d=2$. 

Let us stress two interesting facts. 
Firstly,  we are constructing the
GFF on the continuous torus as a scaling limit of a discrete fractional field on the discrete torus. 

Secondly, for all $\alpha\geq 2$ our
limiting field will be the bi-Laplacian field, also known as the membrane model, which is an
important variation of the GFF. This model is becoming more studied over
the past few years from a mathematical perspective, due to its own interest
\cite{Bolthausen2017,Cipriani2018} and its connections with uniform spanning
trees \cite{Lawler2016,Sun2013}. 

Let us give some heuristics for the
change in behaviour according to $\alpha$. For $\alpha \in (0,2)$, the long-range
random walk on the torus  has a mixing
time of order at most $n^\alpha \log(n)$ versus the usual $n^2 \log(n)$ of the
simple random walk. The mixing time of the random walk is increasing in $\alpha$, we
expect the same to hold for the speed with which the sandpile configuration converges to its stable
configuration. In this case, choosing small  $\alpha$ implies that the sandpile configuration
is close to stability after fewer toppling steps, hence the short-term behaviour of the
dynamics dominates the odometer. Intuitively, each vertex $x$ emits less mass upon stabilisation and its final odometer becomes less dependent on the odometer of
 vertices far away from $x$.  As $\alpha$ increases,  the long-time
behaviour of the dynamics becomes more relevant for the odometer at each point $x$,
smoothing the effects of the initial condition since toppling happens to close neighbours of $x$.
For $\alpha >2$, the central
limit theorem guarantees that the long-term behaviour of the random walk (and
therefore the sandpile dynamics) will behave similarly to the simple random
walk. In other words, as the long-range random walk mixes faster, the odometer field
becomes less regular and has a larger expectation.

This paper is organised as follows. Section~\ref{sec-not-and-def} provides all
necessary definitions and notations. In particular, we define the long-range
divisible sandpile model, abstract Wiener spaces and introduce notations for the
Fourier analysis on the torus. The subsequent Section~\ref{sec-res} contains 
our results regarding bounds for the expected odometer (expected maximum of the discrete fractional Gaussian field) and the scaling limit, including a few comments about generalizations.
Finally, Section~\ref{sec-proofs} contains all the proofs, in particular asymptotics for the eigenvalues of the discrete fractional Laplacian.
\section{Notation and definitions}
\label{sec-not-and-def}
In this section, we will introduce all necessary notations and definitions.
Let $\mathbb{T}^d$ denote the d-dimensional torus, also defined as $\left [-\frac{1}{2},
\frac{1}{2}\right )^d \subset \mathbb{R}^d$. We will denote by $o$ the origin. The discretization will be denoted by
$\mathbb{T}^d_n := \left [-\frac{1}{2}, \frac{1}{2}\right )^d \cap (n^{-1} \mathbb{Z})^d$ for all $n \in \mathbb{N}$
and finally the discrete torus of side-length $n$ is denoted by
$\mathbb{Z}^d_n:=\left [-\frac{n}{2}, \frac{n}{2} \right )^d \cap \mathbb{Z}^d$.
We call $B(x,r)$ the ball centered at $x$ with radius $r$ in the $l^{\infty}(\mathbb{R}^d)$-metric and  $B_2(x,r)$ the corresponding ball in the  Euclidean metric (which will be denoted by $\|\cdot\|$).
In order to shorten the already lengthy notation, we will also use
$\|\cdot\|$  to denote the $L^2(\mathbb{R}^d)$ norm,  $\|\cdot\|_p$, to denote the $L^p(\mathbb{R}^d)$ norm. Moreover, we will simply denote by $\| \cdot \|_{D}$ the $L^{\infty}(D)$-norm in some domain $D\subset \mathbb{R}^d$.
Constants simply named $c$ or $C$ will always be positive, depending at most on $\alpha$ and $d$.
However, their values might change from line to, but their exact values are not relevant to our purposes.

\subsection{Long-range divisible sandpile models and discrete  fractional Laplacians}
\label{subsec-long-range-model}
First, we will define long-range random walks $(X_t)_{t\in \mathbb{N}}$ on the torus $\mathbb{Z}^d_n$. Let $\alpha \in (0,\infty)$ and consider the transition probabilities
$p^{(\alpha)}_n: \mathbb{Z}^d_n \times \mathbb{Z}^d_n \longrightarrow [0,1]$ defined by
$p^{(\alpha)}_n(x,y):=p^{(\alpha)}_n(o,y-x)$, with
\begin{equation}\label{def-random-walk-in-zdn}
	p^{(\alpha)}_n(o,x): =
	c^{(\alpha)}
	\sum_{\substack{z \in  \mathbb{Z}^d\backslash \{o\} \\ z \equiv x
	\!\!\!\!\!
	\mod \mathbb{Z}^d_n }} \frac{1}{\|z\|^{d+\alpha}} ,
\end{equation}
where $c^{(\alpha)}= \left(\sum_{z \in  \mathbb{Z}^d \backslash \{o\}} \frac{1}{\|z\|^{d+\alpha}} \right)^{-1}$ is the constant, such that $\sum_{x \in \mathbb{Z}^d_n} p^{(\alpha)}_n(o,x) = 1$. $x \equiv z \mod \mathbb{Z}^d_n$ will be short for $x_j \equiv z_j \mod n$ for all $j \in \{1,\dots,d\}$. Write from now on $p^{(\alpha)}_n(x):=p^{(\alpha)}_n(o,x)$.

Let $\mathbf{P}_x$  be
the law of the random walk $(X_t)_{t \ge 0}$ on $\mathbb{Z}^d_n$ starting at $x$,
with transition probabilities given by \eqref{def-random-walk-in-zdn} resp.~$\mathbf{E}_x$ its expectation.
We also define $\tau_z:=\inf\{t\ge 0: X_t =z\}$ and
$g^{(\alpha)}_z(x,y):=\mathbf{E}_x \left[\sum_{t=0}^{\tau_z-1} \1_{\{X_t=y\}} \right]$.
Call furthermore
\begin{equation}\label{char-field-long-range-divisible-eq-3}
	g^{(\alpha)}(x,y)
:=
	\frac{1}{n^d} \sum_{z \in \mathbb{Z}^d_n }g^{(\alpha)}_z(x,y).
\end{equation}
The \textit{discrete fractional Laplacian} on $\mathbb{Z}^d_n$ is the generator of $(X_t)_{t\ge0}$ and  is given by
\begin{align}\label{def-alpha-lap}
\nonumber
	-(-\Delta)^{\sfrac{\alpha}{2}}_{n} f (x)
&:=
	\Big(\sum_{y \in \mathbb{Z}^d_n} f(y)p^{(\alpha)}_n(x-y) \Big)- f(x)
\\&=
	\frac{1}{2}\sum_{y \in \mathbb{Z}^d_n} (f(x+y)+f(x-y) - 2f(x))p^{(\alpha)}_n(y),
\end{align}
where $f:\mathbb{Z}^d_n \longrightarrow \mathbb{R}$.

For $\alpha \in (0,2)$, we can define the \textit{continuous fractional Laplacian} of a periodic function $f \in C^\infty(\mathbb{T}^d)$ as
\begin{equation}\label{def-fractional-laplacian-1}
	-(-\Delta)^{\sfrac{\alpha}{2}}f(x)
:=
	\frac {2^{\alpha} \Gamma (\frac{d+\alpha}{2})}{\pi^{d/2}|\Gamma(-\frac{\alpha}{2})|}
	\int_{\mathbb{R}^d}\frac{f(x+y)+f(x-y)-2f(x)}{\|y\|^{d+\alpha}} \text{d} y.
\end{equation}
More precisely, $f \in C^{\infty}(\mathbb{R}^d)$ and $f(\cdot + e_i)=f(\cdot)$ for all $i =1, \dots,d$ where $\{e_i\}_{i=1}^d$ is the canonical basis of $\mathbb{R}^d$ and the integral above is defined in the sense of principal value. The constant in front of the integral is chosen to guarantee that for  $\alpha,\beta \in (0,2)$ such that $\alpha +\beta <2$, we have $(-\Delta)^{\sfrac{\alpha}{2}}(-\Delta)^{\sfrac{\beta}{2}}f = (-\Delta)^{\sfrac{(\alpha+\beta)}{2}}f$ for all $f \in C^\infty(\mathbb{T}^d)$. In Subsection~\ref{subsec-aws} we will introduce an equivalent definition of the fractional Laplacian in \eqref{def-fractional-laplacian-2}.  Alternative definitions of fractional Laplacians can be found in \cite{Kwasnicki15}.

Let us remark that  for all $\alpha \in (0,2)$, $f \in C^\infty(\mathbb{T}^d)$ and all $x \in \mathbb{T}^d$ we have
\begin{align}\label{eq-conv-of-fractional-laplacians}
	\lim_{n \longrightarrow \infty}n^\alpha  (-\Delta)^{\sfrac{\alpha}{2}}_{n }f(nx)
&=
	\frac{c^{(\alpha)}\pi^{d/2}|\Gamma(-\frac{\alpha}{2})|}{ 2^{\alpha} \Gamma (\frac{d+\alpha}{2})} (-\Delta)^{\sfrac{\alpha}{2}}f(x).
\end{align}

\begin{definition}
A \textit{divisible sandpile configuration} $s$ is a function $s:\mathbb{Z}^d_n \rightarrow \mathbb{R}$.
\end{definition}

For $x \in \mathbb{Z}^d_n$, if $s(x)\ge 0$, one should think of $s(x)$ as the quantity of mass at the site $x$. If $s(x)<0$, it can be interpreted as the size of a hole in $x$. If a site $x$ has mass $s(x) >1$, we call it \textit{unstable} and otherwise \textit{stable}. We then evolve the sandpile according to the following dynamics: unstable vertices will \text{topple} by keeping mass $1$ and distributing the excess over the other vertices proportionally according to the transition probabilities $p^{(\alpha)}_n$ at each discrete time step. Note that unstable sites in long-range divisible sandpile models distribute mass to \textit{all} vertices (including itself) at every time step, contrary to nearest neighbour divisible sandpile models which distribute mass only to their nearest neighbours. One could generate a divisible sandpile on  a graph from any random walk defined on it, where at each time step the mass is distributed proportional to the transition probabilities. We will elaborate more on this possibility in Remark \ref{rem:genRW}.

Let $s_t=(s_t(x))_{x\in \mathbb{Z}^d_n}$ denote the sandpile configuration after
$t\in \mathbb{N}$ discrete time steps (set $s_0:=s$ the initial configuration).
The parallel toppling procedure is given by the following algorithm.

\begin{algorithm}[Long-range divisible sandpile]\label{alg-divisible-long-range}
Set $t=1$, then run the following loop:
\begin{enumerate}
	\item if $\max_{x \in \mathbb{Z}^d_n} s_{t-1}(x) \le 1$, stop the algorithm;
	\item for all $x \in \mathbb{Z}^d_n$, set $e_{t-1}(x):=(s_{t-1}(x)-1)^+$;
	\item set $s_t(x):=s_{t-1}(x) - (-\Delta)^{\sfrac{\alpha}{2}}_n e_{t-1}(x) $;
	\item increase the value of $t$ by $1$ and go back to step $1$.
\end{enumerate}
\end{algorithm}

\begin{definition}
For each $t > 0$, the odometer function is a map $u^{\alpha}_t: \mathbb{Z}^d_n \rightarrow [0,\infty)$, defined as
\[
  u^\alpha_t(x):= \sum_{i=0}^{t-1} e_{i}(x)
\]
for all $x\in \mathbb{Z}^d_n$.
Using the fact that for each $x \in \mathbb Z^d_n$, $u^\alpha_t(x)$ is non-decreasing in $t$,
the limit $\lim_{t\rightarrow \infty} u^{\alpha}_t(x)$ is well-defined in $\mathbb R\cup \{\infty\}$,
for all $x\in \mathbb{Z}^d_n$. We will denote such a limit by $u^\alpha_\infty(x)$.
\end{definition}
Analogously to Section~2 in \cite{Levine2016} we have for every $x\in \mathbb{Z}^d_n$ and $t>0$:
\begin{equation}\label{equilibrium-equation-long-range-divisible-sandpile}
	s_{t}(x)
=
	s_0(x) - (-\Delta)^{\sfrac{\alpha}{2}}_{n} u^{\alpha}_{t}(x).
\end{equation}
From \cite{Levine2016} we have the following dichotomy: either for all $x\in \mathbb{Z}^d_n$ we have \textit{stabilisation}, 
i.e. $u^{\alpha}_{\infty}(x)<\infty$ or \textit{explosion}, i.e.  for all $x\in \mathbb{Z}^d_n:$ $u^{\alpha}_{\infty}(x)=\infty$.
We will see in Lemma~\ref{prop:stabilisation}
that given an initial configuration $s_0$, satisfying $\sum_{x \in \mathbb{Z}^d_n}s_0(x)= n^d$,  we  have $u^{\alpha}_{\infty}(x)<\infty$ for all $x\in \mathbb{Z}^d_n$ and $s_{\infty}\equiv 1$.

It is important to notice that the long-range divisible sandpile can be studied in 
terms of other toppling procedures as well, see Definition~2.1 in \cite{Levine2016}. Moreover, the abelian property and least
action principle, see Proposition~2.5 in \cite{Levine2016}, can be proved using 
essentially the same techniques used for the nearest neighbour divisible sandpile. 

Define the initial configuration $s_0$ for $x\in \mathbb{Z}^d_n$ by
\[
	s_0( x ) :=
	1+ \sigma (x) - \frac {1}{n^d} \sum_{y \in \mathbb{Z}^d_n} \sigma(y),
\]
where $(\sigma(x))_{x \in \mathbb{Z}^d_n}$ is a collection of i.i.d random variables with $ \mathbb{E}[\sigma(x)]=0$ and $\var [\sigma(x)]=1$. $s_0$ chosen in this way guarantees that $\sum_{x\in \mathbb{Z}^d_n} s_0(x)=n^d$. 
We will show in Proposition~\ref{prop:rescaledGauss} the following equality in law
\begin{equation}\label{char-field-long-range-divisible-eq-1}
	(u^{\alpha}_\infty(x) )_{x \in \mathbb{Z}^d_n}
\overset{d}=
	\Big(
		\eta^{\alpha}(x)- \min_{z \in \mathbb{Z}^d_n}(\eta^{\alpha}(z))
	\Big)_{x \in \mathbb{Z}^d_n },
\end{equation}
where $(\eta^{\alpha}(x))_{x\in \mathbb{Z}^d_n}$ are defined by
\[
	\eta^\alpha(x) := \sum_{y \in \mathbb{Z}^d_n}g^{(\alpha)}(x,y)(s_0(y)-1),
\]
and $g^{(\alpha)}$ was defined in \eqref{char-field-long-range-divisible-eq-3}.

Note that the distribution of $\eta^\alpha$ is invariant by translations.
Moreover, it has mean $0$ and covariance given by
\begin{equation}\label{char-field-long-range-divisible-eq-2}
	\mathbb{E}[\eta^{\alpha}(x) \eta^{\alpha}(y)] =
	\sum_{w \in \mathbb{Z}^d_n} g^{(\alpha)}(x,w)g^{(\alpha)}(w,y).
\end{equation}
We can see easily that the covariance solves the equation
\[
	\Big(-(-\Delta)^{\sfrac{\alpha}{2}}_{n }\Big)^2
	\Big[\mathbb{E}[\eta^{\alpha}(x) \eta^{\alpha}(y)]\Big]
	=
	\delta_x(y) - \frac{1}{n^d}.
\]
Remark that when $(\sigma(x))_{x \in \mathbb{Z}^d_n}$ are i.i.d. Gaussians,  $(\eta^{\alpha}(x))_{x \in \mathbb{Z}^d_n}$ can be interpreted as a massive discrete fractional Gaussian field on $\mathbb{Z}^d_n$.

\begin{figure}[ht]
  \begin{subfigure}[b]{0.5\linewidth}
    \centering
    \includegraphics[width=0.8\linewidth]{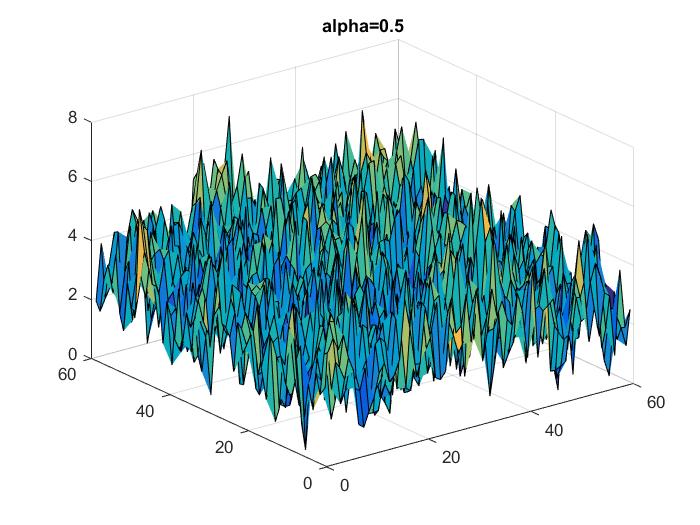}
    \vspace{4ex}
  \end{subfigure}
  \begin{subfigure}[b]{0.5\linewidth}
    \centering
    \includegraphics[width=0.8\linewidth]{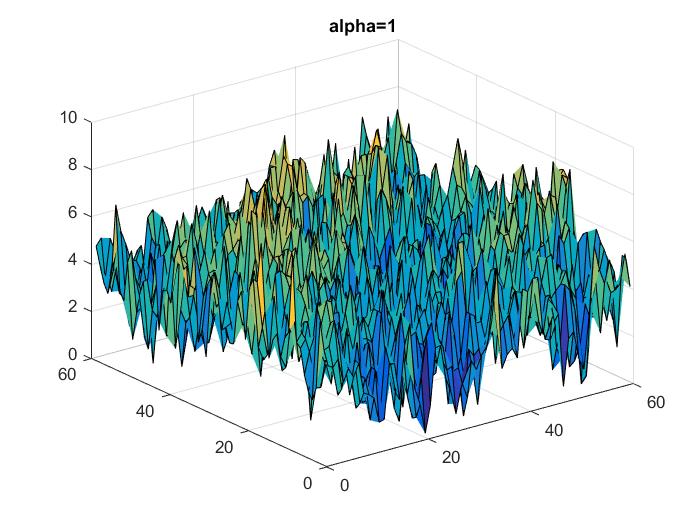}
    \vspace{4ex}
  \end{subfigure}
  \begin{subfigure}[b]{0.5\linewidth}
    \centering
    \includegraphics[width=0.8\linewidth]{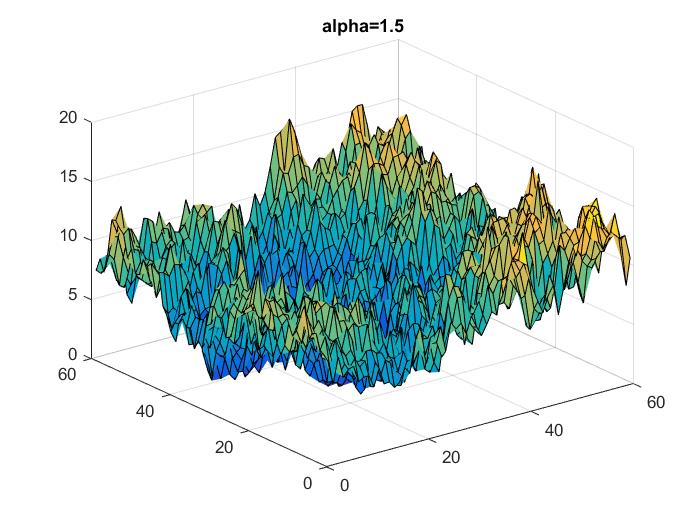}
  \end{subfigure}
  \begin{subfigure}[b]{0.5\linewidth}
    \centering
    \includegraphics[width=0.8\linewidth]{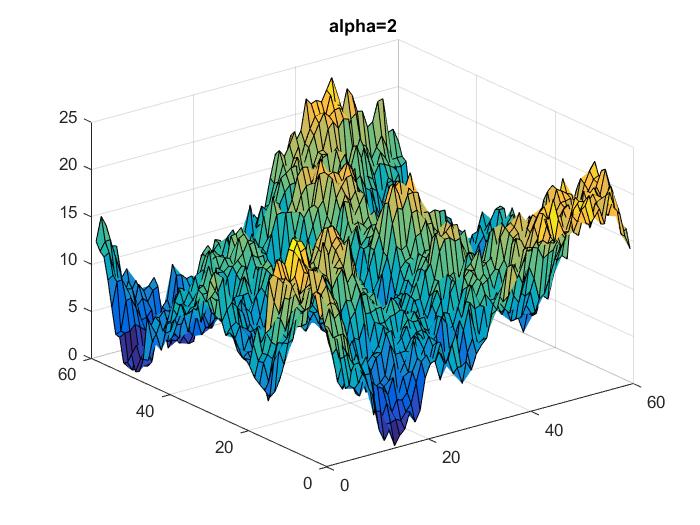}
  \end{subfigure}
  \caption{ Simulations of the odometer for different values of $\alpha\in [0.5,2] $ in the discrete torus of length $60$
	and standard Gaussian initial random variables.}
  \label{Fig1}
\end{figure}

\begin{remark}
	One might feel tempted to write $\Big((-\Delta)^{\sfrac{\alpha}{2}}_{n }\Big)^2=(-\Delta)^{\alpha}_{n}$, however this is not correct in the discrete case. Such property is valid in the continuous case because fractional Laplacians are fractional powers of each other. It fails in the discrete case as $\mathbb{Z}^d$ is not invariant by arbitrary rotations. The easiest way of seeing that, is to study the eigenvalues of $(-\Delta)_n^{\sfrac{\alpha}{2}}$. In case the property was valid, there should be a constant $c = c(\alpha,d,n)$ such that $(\lambda^{(\alpha,n)}_w)^2 = c \lambda^{(2\alpha,n)}_w$ which is not true. For more discussion on the fractional powers of the discrete Laplacian, we refer to \cite{Ciaurri2015}. However, for $\alpha, \beta \in (0,2)$ such that $\alpha+\beta < 2$, we have
\[
	n^{\alpha+\beta} c^{-1}_{d,\alpha}c^{-1}_{d,\beta}
	(-\Delta)^{\sfrac{\beta}{2}}_{n }(-\Delta)^{\sfrac{\alpha}{2}}_{n }f(nx)
	\longrightarrow (-\Delta)^{\sfrac{(\alpha+\beta)}{2}}f(x),
\]
as $n \rightarrow \infty$, where $c_{d,\alpha}$ and $c_{d,\beta}$ are
the respective constants found on the right-hand side of \eqref{eq-conv-of-fractional-laplacians}.
Therefore, the powers of the fractional Laplacians are additive in the limit.
\end{remark}

\subsection{Fourier analysis on the torus}
\label{subsec-fourier-torus}

We will use the following inner product for the space $\ell^2(\mathbb{Z}^d_n)$, the Hilbert space of complex valued functions on the discrete torus,

\[
	\langle {f,g} \rangle
	=
	\frac{1}{n^d} \sum_{z \in \mathbb{Z}^d_n}
	f(z) \overline{g(z)}.
\]
Consider then the Fourier basis  given by the eigenfunctions $(\psi_w)_{w\in \mathbb{Z}^d_n}$ with
\begin{equation}\label{def-fourier-basis-discrete}
	\psi_w(z)=\psi^{(n)}_w(z):=\exp \Big(2\pi i z \cdot\frac{w}{n}\Big).
\end{equation}
Given $f \in \ell^2(\mathbb{Z}^d_n)$, we define the discrete Fourier transform by
\[
	\widehat{f}(w)
=
	\langle  f, \psi_{w}\rangle
=
	\frac{1}{n^d}\sum_{z \in \mathbb{Z}^d_n}
	f(z) \exp \Big(-2\pi i z\cdot\frac{w}{n}\Big).
\]
Similarly, if $f, g \in L^2(\mathbb{T}^d)$ we will denote the inner product by
\[
	(f,g)_{L^2(\mathbb{T}^d)}:= \int_{\mathbb{T}^d} f(z) \overline{g(z)}
	\text{d}z.
\]
Consider the Fourier basis of $L^2(\mathbb{T}^d)$, given by the eigenvectors $\phi_\nu(x):=\exp(2\pi i \nu \cdot x)$, $\nu \in \mathbb{Z}^d$, and  denote the Fourier transform by
\[
	\widehat{f}(\nu):=(f,\phi_\nu)_{L^2(\mathbb{T}^d)} =
	\int_{\mathbb{T}^d} f(z) e^{ - 2 \pi i \nu \cdot z} \text{d}z.
\]
In this article, we will use $\widehat{\cdot}$ to refer to both the Fourier transform in $\ell^2(\mathbb{Z}_n^d)$ and in $L^2(\mathbb{T}^d)$, which will be clear from the context. However, it will be important to notice that for $f \in C^\infty(\mathbb{T}^d)$, if we define $f_n: \mathbb{Z}^d_n \longrightarrow \mathbb{R}$ by $f_n(z) = f( \frac{z}{n})$, then for all $w \in \mathbb{Z}^d$, $\widehat{f_n}(w) \longrightarrow \widehat{f}(w)$
as $n\rightarrow \infty$.
Let us show that $(\psi_w)_{w \in \mathbb{Z}^d_n}$ are indeed eigenvectors of $-(-\Delta)^{\sfrac{\alpha}{2}}_n $. Let $w\in \mathbb{Z}^d_n \backslash \{o\}$, then
\begin{align*}
	-(-\Delta)^{\sfrac{\alpha}{2}}_n \psi_w(x)
	&=
	\sum_{y \in \mathbb Z^d_n} p^{(\alpha)}_n(y-x)(\psi_w(y) - \psi_w(x))
	\\ &=
	\psi_w(x)\sum_{y \in \mathbb Z^d_n} p^{(\alpha)}_n(y-x)(\psi_w(y-x) - \psi_w(o))
	\\ &=
	\psi_w(x)\sum_{z \in \mathbb Z^d_n} p^{(\alpha)}_n(z)(\psi_w(z) - \psi_w(o))
\end{align*}
and the last sum does not depend on $x$. We will denote by $\lambda^{(\alpha,n)}_w$ the respective eigenvalue, and we will properly
evaluate it in Section \ref{sec-proofs}. The following will be a very useful identity
for the discrete Fourier transform $ \widehat{g^{(\alpha)}}(x,w)$ of $g^{(\alpha)}(x,\cdot)$ when $w \neq o$,
\begin{align}\label{identity-fourier-of-green}
	\lambda^{(\alpha,n)}_w	 \widehat{g^{(\alpha)}}(x,w)
&=
	\lambda^{(\alpha,n)}_w
	\langle g^{(\alpha)}(x,\cdot), \psi_w (\cdot) \rangle
\\& =
	\langle g^{(\alpha)}(x,\cdot),
	-(-\Delta)^{\sfrac{\alpha}{2}}_{n} \psi_w (\cdot) \rangle
\nonumber  =
	\langle -(-\Delta)^{\sfrac{\alpha}{2}}_{n} g^{(\alpha)}(x,\cdot),
	\psi_w (\cdot) \rangle
\nonumber\\ & =
	-\langle \delta_x,
	\psi_w (\cdot) \rangle =
	-\frac{1}{n^d} \psi_{-w}(x).
\end{align}

\subsection{Abstract Wiener Spaces and continuum fractional Laplacians}
\label{subsec-aws}
We need to define an abstract Wiener space (AWS) appropriately since the scaling limit will be a random distribution. Let us remark that we have to construct a different AWS than in \cite{Cipriani2016}, since we are dealing with general fractional Gaussian fields.  Our presentation is based on Section~2 in \cite{Sheffield07} and Sections~6.1, 6.2 in \cite{Silvestri17}.

An {\em abstract Wiener space} (AWS) is a triple $(H, B, \mu)$, where:
\begin{enumerate}
    \item $(H, (\cdot,\cdot)_H)$ is a Hilbert space;
    \item $(B, \|\cdot\|_{B})$ is the Banach space completion of $H$ with respect to the measurable norm $\|\cdot\|_B$ on $H$, equipped with the Borel $\sigma$-algebra $\mathcal B$ induced by $\|\cdot\|_B$; and
    \item $\mu$  is the unique Borel probability measure on $(B,\mathcal B)$ such that, if $B^*$ denotes the dual space of $B$, then $\mu\circ\phi^{-1}\sim \mathcal N(0, \|\widetilde \phi\|^2_{H}) $ for all $\phi\in B^*$, where $\widetilde \phi$ is the unique element of $H$ such that $\phi(h)=(\widetilde \phi, h)_H$ for all $h\in H$.
\end{enumerate}
Note that, in order to construct a measurable norm $\|\cdot\|_B$ on $H$, it suffices to find a Hilbert-Schmidt operator $T$ on $H$, and set $\|\cdot\|_B :=\|T\cdot\|_H$.

Let us present the class of AWS which we will study and which is connected to the fractional powers of the Laplacian. Consider again $(\phi_\nu)_{\nu\in \mathbb{Z}^d}$ as the Fourier basis of $L^2(\mathbb{T}^d)$ given in the previous subsection, we have $(\phi_\nu)_{\nu\in \mathbb{Z}^d}$ is a basis of eigenvectors of $-(-\Delta)^{\sfrac{\alpha}{2}}$, satisfying
\[
    (-\Delta)^{\sfrac{\alpha}{2}}\phi_\nu = \|\nu\|^\alpha \phi_\nu.
\]
Also notice that
\[
    (-\Delta)\phi_\nu = \|\nu\|^2 \phi_\nu,
\]
for the usual Laplacian. Hence,  we can extend the definition \eqref{def-alpha-lap} of the discrete fractional Laplacian to $L^2(\mathbb{T}^d)$-functions in a very natural way, which also supports any power $a \in \mathbb{R}$ of $(-\Delta)$. Let $f \in L^2(\mathbb{T}^d)$ with Fourier expansion $\sum_{\nu\in \mathbb{Z}^d} \widehat{f}(\nu) \phi_\nu(\cdot)$,
and $a\in \mathbb{R}$. We define the operator $(-\Delta)^a$ as
\begin{equation}\label{def-fractional-laplacian-2}
    (-\Delta)^a f (\cdot)
=
    \sum_{\nu\in \mathbb{Z}^d \backslash\{o\}}\|\nu\|^{2a}\widehat f(\nu)\mathbf \phi_{\nu}(\cdot).
\end{equation}
For all $a \in \mathbb{R}$, $(-\Delta)^a(f)=0$ for all constant functions, hence we can study the operator  $(-\Delta)^a$ acting only on functions $f \in L^2(\mathbb{T}^d)$ such that $\int_{\mathbb{T}^d} f(z) \text{d}z=0$. With this in mind, let ``$\sim$'' be the equivalence relation on $C^\infty(\mathbb{T}^d)$ which identifies two functions differing by a constant. Let $H^a=H^a(\mathbb{T}^d)$ be the Hilbert space completion of $C^\infty(\mathbb{T}^d)/{\sim}$ under the norm
\begin{equation}\label{eq:norm}
	(f,\,g)_{a}
:=
	\sum_{\nu\in \mathbb{Z}^d\backslash \{o\}}
	\|\nu\|^{4a}\widehat {f}(\nu) \overline{\widehat{g}(\nu)}.
\end{equation}
Define the Hilbert space
\[
    \mathcal H_{a}:=\Big\{u\in L^2(\mathbb{T}^d):\,(-\Delta)^a u\in L^2(\mathbb{T}^d)\Big\}/{\sim}
\]
equipped with the norm
\begin{equation}\label{def-sob-norm}
    \|f\|_{\mathcal H_{a}(\mathbb{T}^d)}^2
=
    \Big((-\Delta)^a f,\,(-\Delta)^a f\Big)_{L^2(\mathbb{T}^d)}.
\end{equation}
In fact, $(-\Delta)^{-a}$ provides a Hilbert space isomorphism between $\mathcal{H}_a$ and $H^a$, which we identify when needed.
For
\begin{equation}\label{ineq-epsilon}
    b<a-\frac{d}{4}
\end{equation}
one shows that $(-\Delta)^{b-a}$ is a Hilbert-Schmidt operator on $H^a$ (cf.~also \cite[Proposition~5]{Silvestri17}). In our case, we will be setting $a:=-\frac{\gamma}{2}$, where $\gamma := \min\{\alpha,2\}$. Therefore, by \eqref{ineq-epsilon},
for any $-\varepsilon:=b<0$ which satisfies $\varepsilon>\frac{\gamma}{2}+\frac{d}{4}$, we have that $(H^{-\frac{\gamma}{2}},\,\mathcal{H}_{-\varepsilon},\,\mu_{-\varepsilon})$ is an AWS. The measure $\mu_{-\varepsilon}$ is the unique Gaussian law on $\mathcal H_{-\varepsilon}$ whose characteristic functional is equal to
\[
    \Phi(f):=\exp\left(-\frac{\|f\|_{{-\frac{\gamma}{2}}}^2}{2}\right).
\]
The norm $\| \cdot \|_{-\gamma/2}$ is defined in \eqref{eq:norm} taking $a=-\gamma/2$.
The field associated to $\Phi$ is called (continuous) fractional
Gaussian Field with paramater $\gamma$, and it will be denoted by either
$\Xi^\gamma$ or fGF$_{\gamma}(\mathbb{T}^d)$. It corresponds to the limiting field
appearing in Theorem~\ref{theorem-main-non-Gaussian}.

\section{Results}\label{sec-res}
\subsection{Stabilisation and law of the odometer on $\mathbb{Z}^d_n$}
\label{subsec-stabilisation}

The following lemma is a simple result concerning stabilisation of a divisible sandpile model. The proof is analogous to the counterpart in the nearest neighbours case, which can be found in Lemma~7.1 in \cite{Levine2016} and will be left for the reader. We consider $\alpha \in (0,\infty)$ and the toppling defined according to Algorithm~\ref{alg-divisible-long-range}.

\begin{lemma}\label{prop:stabilisation}
Let $s_0: \mathbb{Z}^d_n \longrightarrow \mathbb{R}$ be any initial configuration
    satisfying $\sum_{x \in \mathbb{Z}^d_n} s_0(x) = n^d$. Then $s$ stabilises to
    the all $1$ configuration and its odometer $u^{\alpha}_{\infty}$ is the unique function satisfying
    $s_0(x)-(-\Delta)^{\sfrac{\alpha}{2}}_{n} u_{\infty}^{\alpha}(x) =1$ for all $x\in \mathbb{Z}^d_n$ and
    $\min_{x \in \mathbb{Z}^d_n } u_{\infty}^{\alpha}(x)=0$.
\end{lemma}

Applying the above result, in an analogous manner as in Proposition~1.3 in \cite{Levine2016}, we get the following result.

\begin{proposition}\label{prop:rescaledGauss}
Let $(\sigma(x))_{x \in \mathbb{Z}^d_n}$ be i.i.d such that $\mathbb{E}[\sigma(x)]=0$ and $\var[\sigma(x)]=1$. Consider the long-range divisible sandpile with initial condition
    \[
        s_0(x)=
        1+ \sigma(x) - \frac{1}{n^d}\sum_{y \in \mathbb{Z}^d_n}  \sigma(y).
    \]
Then $s$ stabilises to the all $1$ configuration and the distribution of the  odometer $u_{\infty}^{\alpha} : \mathbb{Z}^d_n \rightarrow \mathbb{R}$ is equal  to 
\[
    (u_{\infty}^{\alpha}(x))_{x \in \mathbb{Z}^d_n } \overset{d}= \left(\eta^{\alpha}(x) - \min_{z\in \mathbb{Z}^d_n} (\eta^{\alpha}(z)) \right)_{x\in \mathbb{Z}^d_n}.
\]
 $\eta^{\alpha}$ is given by
\begin{align*}
    \eta^{\alpha}(x)
&=
    \sum_{z \in \mathbb{Z}^d_n}g^{(\alpha)}(x,z) (s_0(z)-1)
\\&=
    \sum_{z \in \mathbb{Z}^d_n}g^{(\alpha)}(x,z)
    \Big(\sigma(z)- \frac{1}{n^d} \sum_{y \in \mathbb{Z}^d_n}\sigma(y)\Big)
\end{align*}
with $g^{(\alpha)}$ defined as in \eqref{char-field-long-range-divisible-eq-3} and $x\in \mathbb{Z}^d_n$.
In particular, 
\[
    \mathbb{E}[\eta^{\alpha}(x)\eta^{\alpha}(y)]
=
    \sum_{z \in \mathbb{Z}^d_n}g^{(\alpha)}(x,z) g^{(\alpha)}(z,y).
\]
\end{proposition}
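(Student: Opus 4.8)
The plan is to deduce everything from Proposition~\ref{prop:stabilisation} and the Green's function identity \eqref{identity-fourier-of-green}, exactly in the spirit of \cite{Levine2016}. First I would check that the random initial datum meets the hypothesis of Proposition~\ref{prop:stabilisation}: for \emph{every} realisation of $(\sigma(x))_{x\in\mathbb{Z}^d_n}$,
\[
	\sum_{x\in\mathbb{Z}^d_n} s(x)
	= n^d + \sum_{x\in\mathbb{Z}^d_n}\sigma(x) - \frac{n^d}{n^d}\sum_{y\in\mathbb{Z}^d_n}\sigma(y)
	= n^d .
\]
Hence $s$ stabilises pathwise to the all-$1$ configuration, and its odometer $u^{\alpha}$ is characterised as the unique function on $\mathbb{Z}^d_n$ solving $(-\Delta)^{\sfrac{\alpha}{2}}_{n}u^{\alpha}=s-1$ with $\min_{x}u^{\alpha}(x)=0$.

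Next I would exhibit $\eta^{\alpha}$ as a solution of this Poisson problem. Reading the chain of equalities \eqref{identity-fourier-of-green} mode by mode shows that the Fourier coefficients of $(-\Delta)^{\sfrac{\alpha}{2}}_{n}g^{(\alpha)}(x,\cdot)$ and of $\delta_x-n^{-d}$ agree for $w\neq 0$ and both vanish at $w=0$, so that
\[
	(-\Delta)^{\sfrac{\alpha}{2}}_{n}g^{(\alpha)}(x,\cdot)=\delta_x-\frac{1}{n^d}.
\]
Using the symmetry of the averaged Green's function \eqref{char-field-long-range-divisible-eq-3} and applying the operator in the free variable, linearity gives
\[
	(-\Delta)^{\sfrac{\alpha}{2}}_{n}\eta^{\alpha}(x)
	=\sum_{z\in\mathbb{Z}^d_n}\Big(\delta_z(x)-\frac{1}{n^d}\Big)(s(z)-1)
	=(s(x)-1)-\frac{1}{n^d}\sum_{z\in\mathbb{Z}^d_n}(s(z)-1)
	=s(x)-1,
\]
the last sum vanishing by the mass condition. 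Since the random walk is irreducible on $\mathbb{Z}^d_n$, the kernel of $(-\Delta)^{\sfrac{\alpha}{2}}_{n}$ consists of the constants, so $\eta^{\alpha}$ and $u^{\alpha}$ differ by a constant; the normalisation $\min_x u^{\alpha}(x)=0$ then forces $u^{\alpha}(x)=\eta^{\alpha}(x)-\min_{z}\eta^{\alpha}(z)$ almost surely, which is the asserted identity.

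It remains to identify the law of $\eta^{\alpha}$. Writing $s(z)-1=\sigma(z)-n^{-d}\sum_{w}\sigma(w)$, linearity together with $\mathbb{E}[\sigma(z)]=0$ yields $\mathbb{E}[\eta^{\alpha}(x)]=0$. For the second moment I would first compute, from $\mathbb{E}[\sigma(z)\sigma(z')]=\delta_{z,z'}$,
\[
	\mathbb{E}\big[(s(z)-1)(s(z')-1)\big]=\delta_{z,z'}-\frac{1}{n^d},
\]
and substitute it into
\[
	\mathbb{E}[\eta^{\alpha}(x)\eta^{\alpha}(y)]
	=\sum_{z,z'\in\mathbb{Z}^d_n}g^{(\alpha)}(z,x)\,g^{(\alpha)}(z',y)\,\mathbb{E}\big[(s(z)-1)(s(z')-1)\big].
\]
The $\delta_{z,z'}$ term produces $\sum_{z}g^{(\alpha)}(z,x)g^{(\alpha)}(z,y)$, while the $-n^{-d}$ term equals $-n^{-d}\big(\sum_z g^{(\alpha)}(z,x)\big)\big(\sum_{z'}g^{(\alpha)}(z',y)\big)$; because the row sums of $g^{(\alpha)}$ are constant in the free variable (a Kemeny-type identity, valid since the stationary law is uniform on $\mathbb{Z}^d_n$), this piece is a constant independent of $x,y$. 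Adding such a constant to the covariance amounts to adjoining one common Gaussian variable to every $\eta^{\alpha}(x)$, which cancels in $\eta^{\alpha}(x)-\min_{z}\eta^{\alpha}(z)$; hence it is immaterial for the law of $u^{\alpha}$, and the covariance relevant for the odometer is the stated $\sum_{z}g^{(\alpha)}(z,x)g^{(\alpha)}(z,y)$.

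The routine parts are the mass check and the moment computation. The crux I expect is the second step: promoting the modewise identity \eqref{identity-fourier-of-green} to the pointwise equation $(-\Delta)^{\sfrac{\alpha}{2}}_{n}g^{(\alpha)}(x,\cdot)=\delta_x-n^{-d}$, i.e.\ verifying that $g^{(\alpha)}$ genuinely inverts $(-\Delta)^{\sfrac{\alpha}{2}}_{n}$ on the mean-zero subspace, together with the symmetry $g^{(\alpha)}(x,y)=g^{(\alpha)}(y,x)$ that lets me apply the operator in either argument. Once these are secured, uniqueness up to constants closes the argument.
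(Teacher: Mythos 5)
Your proposal is correct and takes essentially the same route as the paper, which offers no separate proof and simply invokes Proposition~\ref{prop:stabilisation} together with the argument of \cite{Levine2016}: check the mass condition, show via \eqref{identity-fourier-of-green} (valid for $w\neq 0$, with the $w=0$ mode handled by mean-zero considerations, as you correctly note) that $(-\Delta)^{\sfrac{\alpha}{2}}_{n}g^{(\alpha)}(x,\cdot)=\delta_x-n^{-d}$, and conclude by uniqueness up to constants. Your treatment of the covariance is in fact slightly more careful than the paper's statement: the cross term really does contribute the nonzero constant $-n^{-d}\bigl(\sum_{z}g^{(\alpha)}(z,0)\bigr)^2$ (the paper itself acknowledges this implicitly in \eqref{prop-convergence-moments-bounded-case-eq-1}, where $\sum_{x}g^{(\alpha)}(x,y)g^{(\alpha)}(x,y')=n^dL^2+C^{(\alpha)}_n(y,y')$), so the displayed covariance formula holds only modulo an additive constant, which — as you observe — corresponds to an independent common Gaussian shift and hence is invisible in $\eta^{\alpha}-\min\eta^{\alpha}$.
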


\subsection{The expected odometer on the finite torus}
\label{subsec-mean-value-odometer}

In this section we ask how the behaviour of the odometer is affected by the introduction of the long-range distribution on the finite grid $\mathbb{Z}^d_n$ when $(\sigma(x))_{x\in \mathbb{Z}^d_n}$ are i.i.d.~ standard Gaussians. We will prove here the equivalent version of Theorem~1.2 from \cite{Levine2016}.

\begin{theorem}\label{thm-bounds-odometer-finite}
	Let $\alpha \in \mathbb{R}_+\backslash \{2\}$, $d \ge 1$ and $(\sigma(x))_{x \in \mathbb{Z}^d_n}$  i.i.d standard normal random variables. Furthermore, let $s_0$ be the initial sandpile configuration given by
    \[
        s_0(x)= 1 + \sigma(x) - \frac{1}{n^d}\sum_{y \in \mathbb{Z}^d_n} \sigma(y), \, \, x\in \mathbb{Z}^d_n
    \]
and the redistribution rule defined by Algorithm \ref{alg-divisible-long-range}.
    Then $s$ stabilises to the all $1$ configuration and there exists a positive constant $C_{d,\alpha}>0$, such that the final odometer $u^{\alpha}_{\infty}$ satisfies for all $x\in \mathbb{Z}^d_n$
    \[
        C^{-1}_{d,\alpha} \Phi_{d,\gamma}(n)
\le
        \mathbb{E}[u^\alpha_\infty(x)]
\le
        C_{d,\alpha} \Phi_{d,\gamma}(n),
    \]
    where $\gamma:= \min\{\alpha,2\}$ and $\Phi_{d,\gamma}$ is given by
    \begin{equation}
        \Phi_{d,\gamma}(n)
:=
        \begin{cases}
            n^{\gamma- \frac{d}{2}}    ,& \text{ if } \gamma > \frac{d}{2}\\
            \log(n)                        ,& \text{ if } \gamma = \frac{d}{2}\\
           (\log(n))^{\frac{1}{2}}, & \text{ if } \gamma < \frac{d}{2}.
        \end{cases}
    \end{equation}
\end{theorem}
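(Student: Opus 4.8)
The plan is to reduce the computation of $\mathbb{E}[u^\alpha_\infty]$ to estimating the expected maximum of the centered Gaussian field $\eta^\alpha$ and then to feed in the spectral asymptotics of $(-\Delta)^{\sfrac{\alpha}{2}}_n$. By Proposition~\ref{prop:rescaledGauss} we have $u^\alpha_\infty(x) = \eta^\alpha(x) - \min_z\eta^\alpha(z)$ in distribution; since $\sigma$ is Gaussian, $\eta^\alpha$ is a centered, translation-invariant Gaussian field which is symmetric ($\eta^\alpha\sim-\eta^\alpha$). Taking expectations and using $\mathbb{E}[\eta^\alpha(x)]=0$ gives $\mathbb{E}[u^\alpha_\infty(x)] = -\mathbb{E}[\min_z\eta^\alpha(z)] = \mathbb{E}[\max_z\eta^\alpha(z)]$, independently of $x$. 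Thus the whole problem becomes a sharp two-sided estimate of the expected maximum of $\{\eta^\alpha(z)\}_{z\in\mathbb{Z}^d_n}$.

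First I would record the second-order structure of $\eta^\alpha$ in the Fourier basis. Diagonalising the covariance \eqref{char-field-long-range-divisible-eq-2} with \eqref{identity-fourier-of-green} yields, for all $x$,
\begin{equation*}
    \var(\eta^\alpha(x)) = \frac{1}{n^d}\sum_{w\neq 0}\frac{1}{(\lambda^{(\alpha,n)}_w)^2}, \qquad \mathbb{E}[(\eta^\alpha(x)-\eta^\alpha(y))^2] = \frac{1}{n^d}\sum_{w\neq 0}\frac{|\psi_w(x)-\psi_w(y)|^2}{(\lambda^{(\alpha,n)}_w)^2}.
\end{equation*}
The central input is the two-sided spectral asymptotic $|\lambda^{(\alpha,n)}_w|\asymp(\|w\|/n)^{\gamma}$, uniform over $w\in\mathbb{Z}^d_n\setminus\{0\}$, from the spectral analysis --- this is precisely where $\alpha=2$ is excluded, since there a logarithmic correction appears. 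Plugging it in and comparing $\sum_{w}\|w\|^{-2\gamma}$ to an integral produces the three regimes $\var(\eta^\alpha)\asymp n^{2\gamma-d}$, $\log n$, $1$ according as $\gamma>d/2$, $\gamma=d/2$, $\gamma<d/2$, so that $\sqrt{\var(\eta^\alpha)}\asymp\Phi_{d,\gamma}(n)$ exactly when $\gamma\ge d/2$. A parallel computation, splitting the $w$-sum at $\|w\|\asymp n/\|x-y\|$, gives the canonical metric $d(x,y):=\sqrt{\mathbb{E}[(\eta^\alpha(x)-\eta^\alpha(y))^2]}\asymp\|x-y\|^{\gamma-d/2}$ for $\gamma>d/2$ (with a Lipschitz variant when $\gamma>\tfrac d2+1$, which only occurs for $d=1$), so that the diameter of $\mathbb{Z}^d_n$ in this metric is $\asymp n^{\gamma-d/2}$.

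For the two regimes $\gamma\le d/2$ the matching bounds are soft. The upper bound is the Gaussian maximal inequality $\mathbb{E}[\max_z\eta^\alpha(z)]\le\sqrt{2\,\var(\eta^\alpha)\,\log(n^d)}$, which gives $\log n$ for $\gamma=d/2$ and $(\log n)^{1/2}$ for $\gamma<d/2$. The lower bound is Sudakov minoration applied to a sublattice of spacing $\rho$: its $\asymp(n/\rho)^d$ points are pairwise separated in the metric $d$ by $\asymp 1$ when $\gamma<d/2$ and by $\asymp\sqrt{\log\rho}$ when $\gamma=d/2$, and optimising $\rho$ --- the choice $\rho\asymp\sqrt n$ being the relevant one at criticality --- recovers $(\log n)^{1/2}$ and $\log n$ respectively.

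The regime $\gamma>d/2$ is where the real work lies, and I expect the upper bound there to be the main obstacle: the crude first-moment bound loses a factor $(\log n)^{1/2}$, so one must exploit the regularity of $\eta^\alpha$. Using the Hölder increment estimate above, the covering numbers of $\mathbb{Z}^d_n$ in the metric $d$ satisfy $\log N(\varepsilon)\asymp d\log(n/\varepsilon^{1/(\gamma-d/2)})$, and Dudley's entropy bound $\mathbb{E}[\max_z\eta^\alpha(z)]\le C\int_0^{\diam}\sqrt{\log N(\varepsilon)}\,\text{d}\varepsilon$ evaluates (after the substitution $\varepsilon=n^{\gamma-d/2}e^{-s}$) to $\asymp n^{\gamma-d/2}$, with no spurious logarithm. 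The matching lower bound is then immediate from the two-point inequality $\mathbb{E}[\max_z\eta^\alpha(z)]\ge\mathbb{E}[\max(\eta^\alpha(x),\eta^\alpha(y))]=\tfrac1{\sqrt{2\pi}}\,d(x,y)$, taking $x,y$ antipodal so that $d(x,y)\asymp n^{\gamma-d/2}$. Throughout, the delicate point is that the chaining argument needs the eigenvalue asymptotics to hold uniformly up to the spectral edge $\|w\|\asymp n$, not merely for $\|w\|\ll n$; controlling the increment metric across all scales is the technical heart of the proof.
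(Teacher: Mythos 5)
Your proposal follows essentially the same route as the paper: reduce $\mathbb{E}[u^\alpha_\infty]$ to $\mathbb{E}\max_z\eta^\alpha(z)$ by Gaussian symmetry, obtain two-sided bounds on the canonical metric $d_\eta$ from the uniform spectral asymptotics $|\lambda^{(\alpha,n)}_w|\asymp(\|w\|/n)^\gamma$ (Lemmas~\ref{lem-bounds-on-eigenvalues-0} and \ref{lem-bounds-on-eigenvalues-4}, which is precisely where $\alpha=2$ is excluded), and then convert these into matching expected-maximum bounds via Dudley's entropy bound and Sudakov minoration --- exactly the conversion the paper states as Propositions~\ref{prop-upper-bounds-gaussian-distance} and \ref{prop-lower-bounds-gaussian-distance} and then outsources to the calculations of \cite{Levine2016}. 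Two cosmetic points: at $\gamma=\frac d2$ one has $\sqrt{\var(\eta^\alpha)}\asymp(\log n)^{1/2}\neq\Phi_{d,\gamma}(n)$, so your ``exactly when $\gamma\ge\frac d2$'' should be strict (your subsequent criticality bounds are nonetheless correct), and in $d=1$ the boundary case $\gamma=\frac d2+1$ (i.e.\ $\alpha=\frac32$) carries a $\sqrt{\log(n/r)}$ correction to the increment metric --- included in the paper's $\Psi_{d,\alpha}$ --- which your entropy integral absorbs without changing the conclusion.
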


Let us make two remarks about this result. First, note that for $\alpha>2$, comparing the result above with its counterpart Theorem~1.2 in \cite{Levine2016}, the asymptotic behaviour of the expected odometer is
the same as for the nearest-neighbours divisible sandpile model. Secondly,
for $\alpha=2$ we expect that $\mathbb{E}[u^\alpha_\infty(x)]$ behaves like $\Phi_{d,2}(n)$ times some
$\log(n)$ factors that might depend on the dimension.

\subsection{Scaling limit of the odometer}
\label{subsec-scaling-odometer}

\begin{theorem} \label{theorem-main-non-Gaussian}
Let $\alpha \in \mathbb{R}_+$, $d \ge 1$, assume $(\sigma(x))_{x\in \mathbb{Z}^d_n}$ is a collection of i.i.d.~random variables with
 $\var [\sigma(x)]=1$ for all $x\in \mathbb{Z}^d_n$.
Consider the long-range divisible
sandpile in $\mathbb{Z}^d_n$ with initial configuration
\[
     s_0( x ) = 1+ \sigma ( x ) -
     \frac { 1} { n^d } \sum _ { y \in \mathbb{Z}^d_n } \sigma ( y )
\]
and redistribution defined by Algorithm \ref{alg-divisible-long-range}. 
Define the formal field on $\mathbb{T}^d$ by
\begin{equation}\label{def-aproaching-field-eq}
    \Xi^{\alpha}_n(x):=  \frac{a_\alpha(n)}{\tilde{c}^{(\alpha)}}
    \sum_{z \in  \mathbb{T}^d_n}
    u^{\alpha}_{\infty}(nz)\1_{B\left(z,\frac{1}{2n}\right)}(x), \, \, x \in \mathbb{T}^d,
\end{equation}
where
\[
  \tilde{c}^{(\alpha)}:=
  \begin{cases}
    \lim_{n\rightarrow \infty} \frac{-n^{\alpha} \lambda^{(\alpha,n)}_w}{\|w\|^{\alpha}} >0, & \text{ if } \alpha<2 \\
    \frac{c^{(2)}\pi^{\frac{d+4}{2}}}{d\cdot\Gamma(d/2)}, &
    \text{ if } \alpha=2 \\
     \sum_{x \in \mathbb Z^d \setminus \{o\}}  \frac{c^{(\alpha)}\pi^2 x_1^2}{\|x\|^{d+\alpha}}, &
    \text{ if } \alpha>2
  \end{cases}
\]
and 
\[
    a_\alpha(n)
=
\begin{cases}
    n^{\frac{d-2\alpha}{2}},  & \text{if } \alpha<2;\\
    n^{\frac{d-4}{2}}\sqrt{\log(n)}, & \text{if } \alpha=2;\\
    n^{\frac{d-4}{2}}, & \text{if } \alpha>2.
\end{cases}
\]

We identify
$\Xi^{\alpha}_n$ with the distribution
acting on mean zero test functions $f \in C^\infty(\mathbb{T}^d)$ by
$(\Xi^\alpha_n, f) :=(\Xi^\alpha_n, f)_{L^2(\mathbb{T}^d)}$.
Then, we have that $\Xi_n^\alpha$ converges in law to a fractional Gaussian field
with parameter $\gamma$, denoted by
$\Xi^\gamma$ or fGF$_{\gamma}(\mathbb{T}^d)$, with mean zero and covariance defined by
\begin{equation}\label{thm-main-cov-eq}
    \mathbb{E} \Big( ( \Xi^{\gamma},f), ( \Xi^{\gamma},g ) \Big)
    =
    \sum_{w \in \mathbb{Z}^d \backslash \{o\}}
    \|w\|^{-2\gamma} \widehat{f}(w) \overline{\widehat{g}(w)},
\end{equation}
where $\gamma:= \min\{\alpha,2\}$.
This convergence holds in
$\mathcal{H}_{-\varepsilon}$ for
$\varepsilon>\max\{\frac{\gamma}{2} +\frac{d}{4}, \frac{d}{2}  \}$.
\end{theorem}

Let us emphasize again two special cases included in the result above. $\gamma=1$ corresponds to the GFF and $\gamma=2$ to the bi-Laplacian model. Note further that it is enough to prove the theorem in the case $\mathbb{E}[\sigma(o)]=0$. For random variables with non-zero mean write
\[
     s_0 ( x ) = 1+ \sigma ( x ) -
     \frac { 1} { n^d } \sum _ { y \in \mathbb{Z}^d_n } \sigma ( y )
     = 1+ \left(\sigma ( x ) - \mathbb{E}[\sigma(o)] \right)
     -\frac { 1} { n^d } \sum _ { y \in \mathbb{Z}^d_n } \left(\sigma ( y ) - \mathbb{E}[\sigma(o)] \right)
\]
which falls into the previous case. Let us discuss some further generalizations in the sequel.

\begin{remark}\label{rem:genRW}
Note that the redistribution of the mass, specified in Algorithm \ref{alg-divisible-long-range}, depends on $(-\Delta)^{\alpha/2}_n$ which is defined w.r.t.~ the long-range random walk with transition probabilities $p^{(\alpha)}_n$ given in \eqref{def-random-walk-in-zdn}. The fact that one obtains  the fractional Gaussian fields with parameter $\gamma=\min\{\alpha,2\}$ as scaling limits of the odometer should not depend on the particular law $p^{(\alpha)}_n$ but rather on moments of $X_1$.  We expect the following generalization to hold. Let $(X_t)_{t\ge0}$ be a random walk with transition probabilities given by
$p(x,y)=p(\|x-y\|)$. Define its periodisation by
\[
    p_n(x)
=
    \sum_{\substack{ z \in \mathbb{Z}^d \backslash \{o\} \\ z \equiv x \mod \mathbb{Z}^d_n }}p(z).
\]
Suppose that $p(\cdot)$ is in the domain of attraction of the $\alpha$-stable distribution.

Consider the divisible sandpile model on $\mathbb{Z}^d_n$ where the mass is distributed according to $p_n$. Denote its final odometer 
by $u_{\infty}^{(p)}$ and the formal field on $\mathbb{T}^d$ by
\[
    \Xi^{(p)}_n(x):= n^{\frac{d-2\alpha}{2}}
	\sum_{z \in  \mathbb{T}^d_n}
	u^{(p)}_{\infty}(nz) \1_{B(z,\frac{1}{2n})}(x), \, \, x\in \mathbb{T}^d
\]
We believe that $\Xi^{(p)}$ converges in law o the  fGF$_{\alpha}(\mathbb{T}^d)$.
\end{remark}

\begin{remark}
	We showed that if the initial configuration $s_0$ for the long-range divisible sandpile model is chosen in such a way that  $\sum_{x \in \mathbb{Z}^d_n} s_0(x) = n^d$,
	then the odometer $u^{\alpha}_{\infty}$ is finite a.s. and $s_\infty \equiv 1$. Consider now $$ s^\prime ( x )
	= 1 + c_0+ \sigma ( x ) - \frac { 1} { n^d } \sum _ { y \in \mathbb{Z}^d_n }
	\sigma ( y )$$ 
    for some $c_0 \in \mathbb{R}$. If $c_0>0$, then clearly $u^{\alpha}_\infty \equiv \infty$ for every realization.
	However, if we define
	\[
		\tilde{u}^{\alpha}_t(x) := u^{\alpha}_t(x)-\frac{1}{n^d} \sum_{y\in \mathbb{Z}^d_n}u^{\alpha}_t(y),
	\]
	we still have $s_t(x) = s'(x) - (-\Delta)^{\sfrac{\alpha}{2}}_n \tilde{u}^{\alpha}_t(x)$
for all $x\in \mathbb{Z}^d_n$. 
	In this case we can prove that, for all $x
	\in \mathbb{Z}^d_n$, $s_t(x) \rightarrow 1+ c_0$ and
	$\tilde{u}^{\alpha}_\infty(x) \in [0,\infty)$. The scaling
	limit of the field $\tilde{u}^{\alpha}_{\infty}$ is the same as of $u^{\alpha}_{\infty}$. However, for $c_0<0$ it is less clear what happens since we do not know how the configurations $s'$ and $s'_{\infty}$ correlate.
\end{remark}

Finally, let us remark that the asymptotics of Green functions can be used to recover the kernel of the fractional Laplacian for $\alpha \in (0,2)$ for dimension 	$d > 2 \alpha$. The proof is analogous to the one of Theorem~3 in \cite{Cipriani2016}, hence we leave it to the reader.

\section{Proofs} \label{sec-proofs}

\subsection{Estimates for the eigenvalues of discrete fractional Laplacians}\label{subsec-estimates-eigenvalues}

The proofs of  Theorem~\ref{thm-bounds-odometer-finite} resp.~Theorem~\ref{theorem-main-non-Gaussian} follow similar ideas as the proofs of Theorem~1.2 in \cite{Levine2016} resp.~Theorem~2 in \cite{Cipriani2016}. The main difference is exchanging the normalised graph Laplacian by the discrete fractional Laplacian given in \eqref{def-alpha-lap}. More specifically, we need a very sharp control over the eigenvalues associated to the discrete fractional Laplacian.

Note that for the nearest neighbour divisible sandpile model, one studies the normalised graph Laplacian  $\Delta_n: \ell^2(\mathbb{Z}^d_n) \rightarrow \ell^2(\mathbb{Z}^d_n)$ given by
\[
    \Delta_n f(x)= \frac{1}{2d} \sum_{\substack{y \in \mathbb{Z}^d_n \\ x \sim y}} (f(y) - f(x)),
\]
where $x \sim y$ denotes  nearest neighbours modulo $\mathbb{Z}^d_n$. Remark that in \cite{Cipriani2016} the authors consider the non-normalized Laplacian, but the factor $1/2d$ appears later in the definition of the discrete odometer $e_n$ in Proposition~4.
It is easy to see that, $(\psi_{w})_{w \in \mathbb{Z}^d_n}$ as described in Subsection~\ref{subsec-fourier-torus}, are eigenvectors of $\Delta_n$ with respective eigenvalues given by
\begin{equation}\label{discrete-eigenvalues-usual-laplacian-eq}
	\lambda^{(n)}_w = - \frac{2}{d} \sum_{i=1}^d \sin^2 \Big(\pi \frac{w_i}{n} \Big),
\end{equation}
which, once properly rescaled, are close to $\|\pi w \|^2$. However, the discrete fractional Laplacian $-(-\Delta)^{\sfrac{\alpha}{2}}_n$ has eigenvalues $(\lambda^{(\alpha,n)}_w)_{w\in \mathbb{Z}^d_n}$  associated with the eigenfunctions $(\psi_w)_{w\in \mathbb{Z}^d_n}$ where
\begin{align}\label{discrete-eigenvalues-eq}
    \lambda^{(\alpha,n)}_w
&=
    -c^{(\alpha)}\sum_{x \in \mathbb{Z}^d \backslash\{o\}}
    \frac{\sin^2(\pi x \cdot \frac{w}{n} )}{\|x\|^{d+\alpha}}
=    -c^{(\alpha)} \frac{1}{n^{d+\alpha}}
    \sum_{x \in \frac{1}{n} \mathbb{Z}^d \backslash\{o\}}
    \frac{\sin^2(\pi x \cdot w )}{\|x\|^{d+\alpha}},
\end{align}
and $c^{(\alpha)}$ is just the normalising constant of the associated long range-random walk in $\mathbb{Z}^d$.
This is a simple computation that uses the fact that $\psi_k(x+y)-\psi_k(x)=\psi_k(x)\left( \psi_k(y)-1 \right)$.
A quick comparison between the eigenvalues \eqref{discrete-eigenvalues-usual-laplacian-eq} and \eqref{discrete-eigenvalues-eq} shows that we will need to proceed with some extra care to understand the asymptotic behaviour of $\lambda^{(\alpha,n)}_w$ in terms of $n$ and $w$.

In fact, for $\alpha \in (0,2)$ one can easily show that
\[
    n^\alpha \lambda^{(\alpha,n)}_w
=
    -c^{(\alpha)} \frac{1}{n^{d}}
    \sum_{x \in \frac{1}{n} \mathbb{Z}^d \backslash\{o\}}
    \frac{\sin^2(\pi x \cdot w )}{\|x\|^{d+\alpha}} \stackrel{n \to \infty}\longrightarrow
     -c^{(\alpha)}\int_{\mathbb{R}^d}\frac{\sin^2(\pi z\cdot w)}{\|z\|^{d+\alpha}}\text{d}z.
\]
Let 
\begin{equation}\label{eq:ctilde}
\tilde{c}^{(\alpha)}:= c^{(\alpha)}\int_{\mathbb{R}^d}
 \frac{\sin^2(\pi z_1)}{\|z\|^{d+\alpha}} \text{d}z, 
\end{equation}
which stems from $\tilde{c}^{(\alpha)} = \lim_{n\rightarrow \infty} \frac{n^{\alpha} \lambda_w^{(\alpha, n)}}{\|w\|^{\alpha}}$,
then
\begin{align}\label{continuous-eigenvalue}
\nonumber
    \lambda^{(\alpha,\infty)}_w
&:=
     -c^{(\alpha)}\int_{\mathbb{R}^d}\frac{\sin^2(\pi z\cdot w)}{\|z\|^{d+\alpha}}\text{d}z
\nonumber\\&=
     -c^{(\alpha)} \|w \|^\alpha \int_{\mathbb{R}^d}
     \frac{\sin^2\Big(\pi z\cdot \frac{w}{\|w\|}\Big)}{\|z\|^{d+\alpha}}\text{d}z
\nonumber\\&=
     -c^{(\alpha)} \|w \|^\alpha \int_{\mathbb{R}^d}
     \frac{\sin^2(\pi z_1)}{\|z\|^{d+\alpha}}
\text{d}z
\nonumber\\&=
     -\tilde{c}^{(\alpha)} \|w \|^\alpha.
\end{align}
In the third equality we  used a change of variables. The integral is finite, since for large values of $z$ we can use that $\frac{\sin^2(\pi z_1)}{\|z\|^{d+\alpha}} \le  \frac{1}{\|z\|^{d+\alpha}}$ and for small $\frac{\sin^2(\pi z_1)}{\|z\|^{d+\alpha}} \le \frac{\pi^2}{\|z\|^{d+\alpha-2}}$. 

The best way to understand the asymptotic behaviour of $(n^{\alpha} \lambda_w^{(\alpha, n)} )_n$ is to see it as a sequence the Riemann sums which converges to the integral $\lambda^{(\alpha,\infty)}_w$ as $n\rightarrow \infty$. In general, given a function $h \in C^2(\mathbb{R}^d)$ with sufficiently fast decay at infinity, it is easy to prove the bound
\begin{equation}\label{eq-riemman-bound-c2-functions}
    \Big|\frac{1}{n^d}\sum_{x \in \frac{1}{n}\mathbb{Z}^d  } h(x) - \int_{\mathbb{R}^d} h(z)\text{d}z\Big|
\le
    c(h) \frac{1}{n}\int_{\mathbb{R}^d} \| \nabla h(z) \|\text{d}z,
\end{equation}
for some constant $c(h)>0$.
Unfortunately, this bound is not good enough for us, as
\begin{equation} \label{def-h-w}
    h_w(z):= \frac{\sin^2(\pi z \cdot w )}{\|z\|^{d+\alpha}}
\end{equation}
and its derivatives have singularities at $z=o$.

The main technical result of this section is the following proposition, which presents the necessary bounds for the inverse of the eigenvalues in the case $\alpha\in(0,2)$. The equivalent of this proposition in the case $\alpha\ge 2$ can be derived from the same techniques, but using Lemma~\ref{lem-bounds-on-eigenvalues-4} and \ref{lem-bounds-on-eigenvalues-5} instead of Lemma~\ref{lem-bounds-on-eigenvalues-1}.

\begin{proposition}\label{lem-bounds-on-eigenvalues-2}
    Let $d \ge 1$ and $\alpha \in (0,2)$ be fixed. There exists a constant 
    $C=C_{d,\alpha}>0$ such that, for all $ n \ge 1$ and
    $w \in \mathbb{Z}^d_n \backslash \{o\}$, we have
    \begin{equation}\label{lem-bounds-on-eigenvalues-2-eq-1}
        \Big|
            \frac{1}{n^\alpha\lambda^{(\alpha,n)}_w}
            -\frac{1}{\tilde{c}^{(\alpha)}\|w\|^\alpha}
        \Big|
        \le
            {C}
						\begin{cases}
						\frac{1}{n^{1-\alpha} \|w \|^{2\alpha-1}}, & \text{ if } \alpha \in (0,1) \\
                                                \frac{1}{n}\log \left (\frac{n}{\|w\|} \right ) & \text{ if } \alpha=1 \\
						\frac{1}{n^{2-\alpha} \|w \|^{2\alpha-2}}, & \text{ if } \alpha \in (1,2),
						\end{cases}
    \end{equation}
where $\tilde{c}^{(\alpha)}$ was defined in \eqref{eq:ctilde}.
\end{proposition}
The proof of the proposition is a consequence of Lemmas~\ref{lem-bounds-on-eigenvalues-0}, \ref{lem-pre-bounds-on-eigenvalues} and \ref{lem-bounds-on-eigenvalues-1} which we state and prove in the sequel.

\begin{lemma}\label{lem-bounds-on-eigenvalues-0}
    Let $d \ge 1$ and $\alpha \in (0,2)$ be fixed. There exists a
    constant $C=C_{d,\alpha}>0$ such that, for all $ n \ge 1$ and
    $w \in \mathbb{Z}^d_n \backslash \{o\}$, we have
\begin{equation}\label{lem-bounds-on-eigenvalues-0-eq-1}
    C^{-1} \frac{\|w\|^\alpha}{n^\alpha} \le -\lambda^{(\alpha,n)}_w \le C \frac{\|w\|^\alpha}{n^\alpha}.
\end{equation}
\end{lemma}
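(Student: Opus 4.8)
The plan is to reduce the two-sided bound \eqref{lem-bounds-on-eigenvalues-0-eq-1} to the estimate $S(\theta)\asymp\|\theta\|^\alpha$ for the lattice sum
\[
    S(\theta):=\sum_{x\in\mathbb{Z}^d\setminus\{0\}}\frac{\sin^2(\pi x\cdot\theta)}{\|x\|^{d+\alpha}},\qquad \theta:=\frac{w}{n},
\]
since by \eqref{discrete-eigenvalues-eq} we have $-\lambda^{(\alpha,n)}_w=c^{(\alpha)}S(w/n)$ with $c^{(\alpha)}$ a fixed positive constant. As $w\in\mathbb{Z}^d_n\setminus\{0\}$ has representatives in $[-n/2,n/2]^d$, we always have $0<\|\theta\|\le\sqrt d/2$, so it suffices to prove $S(\theta)\asymp\|\theta\|^\alpha$ uniformly over that range. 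Throughout I would use the elementary bounds $\sin^2(\pi t)\le\pi^2 t^2$ for all $t$ and $\sin^2(\pi t)\ge 4t^2$ for $|t|\le\tfrac12$ (the latter from concavity of $\sin$ on $[0,\tfrac12]$), together with the lattice comparisons $\sum_{1\le\|x\|\le R}\|x\|^{2-d-\alpha}\asymp R^{2-\alpha}$ and $\sum_{\|x\|>R}\|x\|^{-d-\alpha}\asymp R^{-\alpha}$, valid for $R\ge 1$ because $2-\alpha>0>-\alpha$; both follow from a dyadic decomposition using $\#\{x:\|x\|\asymp r\}\asymp r^d$.

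For the upper bound I would split the sum at radius $R=1/\|\theta\|$. On $\{1\le\|x\|\le R\}$ I bound $\sin^2(\pi x\cdot\theta)\le\pi^2(x\cdot\theta)^2\le\pi^2\|x\|^2\|\theta\|^2$ by Cauchy--Schwarz, so this part is at most $\pi^2\|\theta\|^2\sum_{1\le\|x\|\le R}\|x\|^{2-d-\alpha}\le C\|\theta\|^2 R^{2-\alpha}=C\|\theta\|^\alpha$. On $\{\|x\|>R\}$ I bound $\sin^2\le 1$, giving at most $\sum_{\|x\|>R}\|x\|^{-d-\alpha}\le CR^{-\alpha}=C\|\theta\|^\alpha$. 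This proves $S(\theta)\le C\|\theta\|^\alpha$ whenever $\|\theta\|\le 1$; in the residual range $1\le\|\theta\|\le\sqrt d/2$ (nonempty only in high dimension) one simply uses $S(\theta)\le\sum_{x\neq0}\|x\|^{-d-\alpha}=:M<\infty$ together with $\|\theta\|^\alpha\ge 1$.

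For the lower bound I would argue by symmetry in the small-$\|\theta\|$ regime and by a single term otherwise; fix a threshold $\theta_0\in(0,1)$. If $\|\theta\|\le\theta_0$, set $A:=\{x\in\mathbb{Z}^d:1\le\|x\|\le 1/(2\|\theta\|)\}$, so that $|x\cdot\theta|\le\|x\|\|\theta\|\le\tfrac12$ and $\sin^2(\pi x\cdot\theta)\ge 4(x\cdot\theta)^2$ for $x\in A$. Since $A$ and the weight $\|x\|^{-d-\alpha}$ are invariant under coordinate permutations and sign flips, the off-diagonal terms cancel and the diagonal terms coincide, yielding the key identity
\[
    \sum_{x\in A}\frac{(x\cdot\theta)^2}{\|x\|^{d+\alpha}}
    =\sum_{i}\theta_i^2\sum_{x\in A}\frac{x_i^2}{\|x\|^{d+\alpha}}
    =\frac{\|\theta\|^2}{d}\sum_{x\in A}\|x\|^{2-d-\alpha}.
\]
Discarding the nonnegative terms outside $A$ and invoking the lower lattice comparison (legitimate since $R=1/(2\|\theta\|)\ge 1/(2\theta_0)$ is large) gives $S(\theta)\ge\tfrac{4}{d}\|\theta\|^2\sum_{x\in A}\|x\|^{2-d-\alpha}\ge c\|\theta\|^2 R^{2-\alpha}=c\|\theta\|^\alpha$. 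If instead $\theta_0\le\|\theta\|\le\sqrt d/2$, let $k$ be a coordinate with $|\theta_k|=\|\theta\|_\infty$; keeping only the term $x=e_k$ and using $|\theta_k|\le\tfrac12$ gives $S(\theta)\ge\sin^2(\pi\theta_k)\ge 4\theta_k^2\ge\tfrac{4}{d}\|\theta\|^2\ge\tfrac{4}{d}\theta_0^2$, which is bounded below by $c\|\theta\|^\alpha$ because $\|\theta\|^\alpha\le(\sqrt d/2)^\alpha$. Combining the two regimes yields $S(\theta)\ge c\|\theta\|^\alpha$ throughout, and multiplying by $c^{(\alpha)}$ finishes the lemma.

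The main obstacle I anticipate is the lower bound: dropping terms can easily destroy the $\|\theta\|^\alpha$ order, so one must keep a region large enough that the surviving sum still matches the integral, and the cancellation of off-diagonal terms—which relies on the symmetry of the truncation region and the radial weight—is precisely what prevents the retained mass from degenerating when $\theta$ is nearly orthogonal to the lattice directions. Secondary care is needed at the junction $\|\theta\|\asymp 1$ and in the uniform lower lattice comparison $\sum_{1\le\|x\|\le R}\|x\|^{2-d-\alpha}\ge cR^{2-\alpha}$ for large $R$; both are dispatched by the threshold split and the dyadic counting above.
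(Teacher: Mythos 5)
Your proposal is correct, and it takes a genuinely different route from the paper. The paper's proof factors out $\|w\|^\alpha/n^\alpha$ by a change of variables, views the remaining factor as a Riemann sum of $h_{w/\|w\|}$ over the lattice $\frac{\|w\|}{n}\mathbb{Z}^d$, sandwiches it between upper and lower sums on the fixed partition generated by balls around $\frac12\mathbb{Z}^d$ via monotonicity under partition refinement, and finally appeals to compactness of the sphere of directions $w/\|w\|\in\partial B_2(0,1)$ to make the resulting extremal sums uniformly positive and finite. You instead bound the lattice sum $S(\theta)$ directly: the upper bound by splitting at scale $R=1/\|\theta\|$ with $\sin^2\le\pi^2 t^2$ inside and $\sin^2\le 1$ outside plus dyadic shell counts, and the lower bound by the symmetrization identity $\sum_{x\in A}(x\cdot\theta)^2\|x\|^{-d-\alpha}=\frac{\|\theta\|^2}{d}\sum_{x\in A}\|x\|^{2-d-\alpha}$, which computes the retained quadratic-regime mass exactly and so rules out degeneration for directions $\theta$ nearly orthogonal to lattice vectors — precisely the point the paper handles by compactness. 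I checked the delicate spots: $\|\theta\|_\infty\le\frac12$ holds since $w\in[-\frac n2,\frac n2]^d$, so the single-term bound $\sin^2(\pi\theta_k)\ge 4\theta_k^2$ in the intermediate regime is legitimate; the off-diagonal cancellation follows from invariance of $A$ and the radial weight under single-coordinate sign flips; and your threshold $\theta_0$ correctly guarantees $R=1/(2\|\theta\|)$ is large enough for the lower dyadic comparison. Your argument is more elementary and quantitatively explicit, avoiding both the refinement-monotonicity step (which in the paper is stated somewhat loosely, as $\frac{\|w\|}{n}\mathbb{Z}^d$ need not refine $\frac12\mathbb{Z}^d$) and the compactness argument. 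What the paper's approach buys in exchange is the Riemann-sum framework that is reused essentially verbatim in Lemma~\ref{lem-bounds-on-eigenvalues-1} to obtain the sharper asymptotic $n^\alpha\lambda^{(\alpha,n)}_w\to-\tilde c^{(\alpha)}\|w\|^\alpha$ with an explicit rate, which your two-sided bound alone does not provide.
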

\begin{proof}
    Note that
\begin{align*}
    -\lambda^{(\alpha,n)}_w
&=
	c^{(\alpha)}
    \sum_{y \in \mathbb{Z}^d \backslash \{o\}} \frac{ \sin^2 \big(\frac{ \pi y\cdot w}{n}\big)}{\|y\|^{d+\alpha}}
\\&=
	c^{(\alpha)}
    \frac{\|w\|^\alpha}{n^\alpha}  \left(\frac{\|w\|^d}{n^d} \sum_{y \in \frac{\|w\|}{n}\mathbb{Z}^d \backslash \{o\}}
    \frac{ \sin^2(\pi y \cdot \frac{w}{\|w\|})}{\|y\|^{d+\alpha}} \right).
\end{align*}
The term in the parentheses is a Riemann sum, hence, we just need to prove that such a sum is uniformly bounded in $n$ and $w$.
Now, one proceeds by bounding the Riemann sum according to the upper and lower sum in the partition
$\left \{B \left(\frac{\|w\|}{n} y, \frac{\|w\|}{2n}\right), y \in \mathbb{Z}^d \right\}$ and noticing that upper and lower sums are monotone according to the natural partition order. Therefore,
\begin{align*}
    \frac{1}{2^d} \sum_{z \in \frac{1}{2}\mathbb{Z}^d \backslash \{o\}}
    \frac{ \sin^2(\pi z_*(z) \cdot \frac{w}{\|w\|})}{\|z_*(z)\|^{d+\alpha}}
&\le
    \frac{\|w\|^d}{n^d} \sum_{y \in \frac{\|w\|}{n}\mathbb{Z}^d \backslash \{o\}}
    \frac{ \sin^2(\pi y \cdot \frac{w}{\|w\|})}{\|y\|^{d+\alpha}}
\\&\le
    \frac{1}{2^d} \sum_{z \in \frac{1}{2}\mathbb{Z}^d \backslash \{o\}}
    \frac{ \sin^2(\pi z^*(z) \cdot \frac{w}{\|w\|})}{\|z^*(z)\|^{d+\alpha}},
\end{align*}
where
\[
	z_*(z)
	= 
	\underset{x \in B\left(z,\frac{1}{4}\right)}{\argmin} \left \{ \frac{ \sin^2(\pi x \cdot \frac{w}{\|w\|})}{\|z\|^{d+\alpha}}  \right \}
	\text{ and } 
	z^*(z)
	= 
	\underset{x \in B\left( z,\frac{1}{4}\right)}{\argmax} \left \{\frac{ \sin^2(\pi x \cdot \frac{w}{\|w\|})}{\|z\|^{d+\alpha}} \right \}.
\]
Notice that, as $z \in \frac{1}{2}\mathbb Z^d $, we have that the ball $B( z,\frac{1}{4})$ is bounded away from the origin.
Finally, one can check that both of the sums are indeed finite and positive.
\end{proof}

The following lemma will be used to prove Lemma \ref{lem-bounds-on-eigenvalues-1}, it follows from basic calculus.
Remember that we use $\|\cdot\|_D$ to denote the $L^\infty(D)$ of a function.

\begin{lemma}\label{lem-pre-bounds-on-eigenvalues}
Let $d\geq 1$, $w\in  \mathbb{Z}^d_n\backslash \{o\}$, $x \in \frac{\|w\|}{n}\mathbb{Z}^d\backslash \{o\}$ and $v \in \mathbb{R}^d$ such that $\|v\|=1$.
There exists a constant $C=C(d,\alpha)>0$ such that the norm of the gradient of
\[
	h_{v}(z)=\frac{\sin ^{2}(\pi z\cdot v)}{\|z\|^{d+\alpha}}
\]
satisfies
\[
	\left\|\nabla h_{v} (\cdot) \right\|_{B\left(x,\frac{\|w\|}{2n} \right)}
	\leq
	 C
	 \min \left\{ \frac{1}{\|x\|^{d+\alpha-1}}, \frac{1}{\|x\|^{d+\alpha}}\right \}.
\]
\end{lemma}
The last ingredient for proving Proposition~\ref{lem-bounds-on-eigenvalues-2} is the following lemma.
\begin{lemma}\label{lem-bounds-on-eigenvalues-1}
  For fixed $d \ge 1$ and $\alpha \in (0,2)$, there exists a constant 
  $C=C_{d,\alpha}>0$ such that, for all $ n \ge 1$ and
  $w \in \mathbb{Z}^d_n \backslash \{o\}$, we have
      \begin{equation}\label{lem-bounds-on-eigenvalues-1-eq-1}
          |n^\alpha\lambda^{(\alpha,n)}_w-\tilde{c}^{(\alpha)}  \|w\|^\alpha |
          \le C
					\begin{cases}
\frac{\| w\| }{n^{1-\alpha}},& \text{ if } \alpha \in (0,1) \\
						\frac{\| w\|^2 }{n} \log\left (\frac{n}{\|w\|}\right ),& \text{ if } \alpha=1\\
\frac{\| w\|^2 }{n^{2-\alpha}},& \text{ if } \alpha \in (1,2),
										\end{cases}
      \end{equation}
where $\tilde{c}^{(\alpha)}$ is defined in \eqref{eq:ctilde}.
\end{lemma}
\begin{proof}
We will study the rate of convergence of the Riemann sums of $h_{w}(z)= \frac{\sin^2(\pi z \cdot w )}{\|z\|^{d+\alpha}}$. The first step is to remove a neighbourhood around the origin.

\begin{align*}
    |n^\alpha\lambda^{(\alpha,n)}_w
&
    -\tilde{c}^{(\alpha)}  \|w\|^\alpha |
\\&=
    c^{(\alpha)}
    \Big|
    \frac{1}{n^d}
    \sum_{x \in \frac{1}{n} \mathbb{Z}^d  \backslash\{o\}}
        h_w(x)
    -
    \int_{\mathbb{R}^d }h_w(z) \text{d}z
    \Big|
\\& \le
	C
    \Big|
    \frac{1}{n^d} \sum_{x \in \frac{1}{n} \mathbb{Z}^d \backslash\{o\}} h_w(x)
    -
    \int_{\mathbb{R}^d \backslash B(o,\frac{1}{2n}) }h_w(z) \text{d}z
    \Big|
    +
    C
    \int_{B(o,\frac{1}{2n})}|h_w(z)| \text{d}z
\\& \le
	C
   \underbrace{\|w\|^\alpha \Big|
    \frac{\|w\|^d}{n^d}  \sum_{x \in  \frac{\|w\|}{n} \mathbb{Z}^d \backslash\{o\}}
        h_{\frac{w}{\|w\|} }(x)
    -
    \int_{ \mathbb{R}^d \backslash B(o,\frac{\|w\|}{2n})}
    h_{\frac{w}{\|w\|}}(z) \text{d}z
    \Big|}_{I_n(w)}
    +
    C \frac{\|w\|^2}{n^{2-\alpha}},
\end{align*}
where in the second inequality, we used that $|h_w(z)|\le  \frac{\pi^2\|w\|^2}{\|z\|^{d+\alpha-2}} $. Furthermore,
\begin{align}\label{lem-bounds-on-eigenvalues-1-eq-2}
I_n(w)& \le
    C\|w\|^\alpha
    \sum_{x \in \frac{\|w\|}{n}  \mathbb{Z}^d \backslash\{o\}}
    \int_{ B(x,\frac{\|w\|}{2n}  )}
  \left |h_{\frac{w}{\|w\|} }(x) -
    h_{\frac{w}{\|w\|}}(z) \right| \text{d}z
\nonumber\\& \le
    C\|w\|^\alpha
    \sum_{x \in \frac{\|w\|}{n}  \mathbb{Z}^d \backslash\{o\}}
    \int_{ B(x,\frac{\|w\|}{2n}  )}
    \|z-x\|
    \int_0^1 \|\nabla h_{\frac{w}{\|w\|} }(tz+(1-t)x)\|\text{d}t \text{d}z 
\end{align}
for some $C>0$. For points $z\in B(x, \frac{\|w\|}{2n} )$, we can use the bound 
\[
\left \|\nabla h_{\frac{w}{\|w\|}} (tz+(1-t)x) \right \| \le \|\nabla h_\frac{w}{\|w\|} (\cdot) \|_{B(x,\frac{\|w\|}{2n})}. 
\]
Hence, 
\[\eqref{lem-bounds-on-eigenvalues-1-eq-2}
 \le
    C\|w\|^\alpha \Big(\frac{\|w\|}{n}\Big)^{d+1}
    \sum_{x \in \frac{\|w\|}{n}\mathbb{Z}^d\backslash\{o\}}
    \|\nabla h_{\frac{w}{\|w\|}} (\cdot)\|_{B(x,\frac{\|w\|}{2n})}
\]
Using Lemma \ref{lem-pre-bounds-on-eigenvalues}, we get that the above equation can be further bounded by
\begin{equation}\label{eq:intlem2}
		I_n(w)
	\le
	C \frac{\|w\|^{1+\alpha}}{n} \underbrace{\int_{\mathbb{R}^d\backslash B(o,\frac{\|w\|}{2n})} \min \Big\{ \frac{1}{\|z\|^{d+\alpha-1}}, \frac{1}{\|z\|^{d+\alpha}}\Big\} \text{d} z}_{I'_n(w)}.
\end{equation}
To conclude the proof of the lemma, we bound the minimum in the integral above by $\frac{1}{\|\cdot\|^{d+\alpha-1}}$ for $\alpha>1$ and by 
$\frac{1}{\|\cdot\|^{d+\alpha}}$ for $\alpha<1$. Note that $\|w\| \leq C n$ for some constant $C>0$ since $w\in \mathbb{Z}^d_n$, hence the dominant term for $\alpha<1$ is $\frac{\|w\|}{n^{1-\alpha}}$. For $\alpha=1$ we write
\begin{align*}
I'_n(w) &= \int_{B(o,1)\backslash B(o,\frac{\|w\|}{2n})} \min \Big\{ \frac{1}{\|z\|^{d}}, \frac{1}{\|z\|^{d+1}}\Big\} \text{d} z + \int_{\mathbb{R}^d\backslash B(o,1)} \min \Big\{ \frac{1}{\|z\|^{d}}, \frac{1}{\|z\|^{d+1}}\Big\} \text{d} z\\
&\leq C \int^1_{\frac{\|w\|}{2n}} \frac{r^{d-1}}{r^{d}} dr + \int_{1}^{\infty} \frac{r^{d-1}}{r^{d+1}} dr \leq C \log \left( \frac{n}{\|w\|}\right),
\end{align*}
again for some $C>0$, plugging this estimate in \eqref{eq:intlem2} concludes the proof.
\end{proof}

We finish the section with two lemmas that extend Lemma~\ref{lem-bounds-on-eigenvalues-1} to $\alpha \ge 2$. The equivalent statements for the other Lemmas in this section can also be adapted. We split the between the cases $\alpha=2$ and $\alpha>2$, as the proofs use different techniques.  

\begin{lemma}\label{lem-bounds-on-eigenvalues-4}
    For fixed $d \ge 1$ and $\alpha =2$ we have 
    for all $ n \ge 1$ and $w \in \mathbb{Z}^d_n \backslash \{o\}$, for any $w\neq o$, we have
      \begin{equation} \label{lem-bounds-on-eigenvalues-4-eq-1}
	\lambda^{(2,n)}_w = -\tilde{c}^{(2)}\frac{\|w\|^2\log(n)}{n^2 }  
	  +
	    \mathcal O\left(\frac{\|w\|^2\log (\|w\|)}{n^2}\right),
      \end{equation}
      with 
      \[
	\tilde{c}^{(2)}:=  \frac{c^{(2)}\pi^{\frac{d+4}{2}}}{d\cdot\Gamma(d/2)}.
      \]
\end{lemma}

The proof borrows ideas from Lemma~\ref{lem-bounds-on-eigenvalues-1}. However, we need to keep track of some extra terms, due to the fact that $h_w$ is not integrable around the origin. In the following, we sketch how to extend the proof.

This time, instead of comparing $n^2\lambda^{(2,n)}_w$ with $\lambda^{(2,\infty)}_w$, we compare it to a second sequence $\bar{\lambda}^{(2,n)}_w$, defined as
\begin{align*}
  -\bar{\lambda}^{(\alpha,n)}_w
  & :=
  \|w\|^\alpha \int_{\frac{\|w\|}{2n}\le \|z\|<\infty} \frac{\sin^2(\pi z\cdot v)}{ \|z\|^{d+\alpha} }  \text{d}z.
\end{align*}
 One can prove that 
\begin{equation*}
  (\log (n))^{-1}\cdot \bar{\lambda}^{(2,n)}_w \longrightarrow \tilde{c}^{(2)} \|w\|^{2}.
\end{equation*}
Indeed, for $n$ sufficiently large, we can write
\begin{align*}
  (\log( n))^{-1}\cdot \bar{\lambda}^{(2,n)}_w
  & =
  -\frac{\|w\|^2}{\log(n)} \int_{\frac{\|w\|}{2n}\le \|z\|\le 1} \frac{(\pi z\cdot v)^2}{ \|z\|^{d+2} }  \text{d}z
  \\ & \phantom{==}-  
  \frac{\|w\|^2}{\log(n)}\int_{\frac{\|w\|}{2n}\le \|z\|\le 1} 
  \frac{\sin^2(\pi z\cdot v)-(\pi z\cdot v)^2}{ \|z\|^{d+2} }  \text{d}z
  \\ & \phantom{==}- 
  \frac{\|w\|^2}{\log(n)} \int_{1\le \|z\|<\infty} 
  \frac{\sin^2(\pi z\cdot v)}{ \|z\|^{d+2} }  \text{d}z
  \\ & 
  =: I_1 + I_2 +I_3.
\end{align*}
Now, using invariance of the integral $I_1$  by orthonormal transformations and computing the integral explicitly via spherical coordinates, we get
\[
   I_1
   =
   -\|w\|^2\left(\pi^2 c^{(2)}\int_{\mathbb S^{d-1}} x_1^2\text{d}\mu_{d-1}(x)\right)
   +\mathcal O\left(\frac{\|w\|^2\log(\|w\|)}{\log(n)}\right),
\]
moreover, we can evaluate the integral which is equal to $\frac{2\pi^{d/2}}{d \cdot\Gamma(d/2)}$ 
Using that $\sin^2(\pi z\cdot v)-(\pi z\cdot v)^2= O(z^4)$ in the region of integration in $I_2$, we get that
\[
  |I_2| \le C \frac{\|w\|^2}{\log(n)}.
\]
Finally, due to the integrability of $h_v$ at infinity, we have again that 
\[
  |I_3|\le C \frac{\|w\|^2}{\log(n)}.
\]
One still needs to show that $(\log n)^{-1}|n^2\lambda^{(2,n)}_w-\bar{\lambda}^{(2,n)}_w|$ is small, which is obtained by following the same strategy of Lemma~\ref{lem-bounds-on-eigenvalues-1} disregarding the region around the origin at the beginning of the argument. Moreover, we need to use the points of the grid $\frac{\|w\|}{2n}\left((\mathbb Z^d\setminus\{o\})\cap [-1,1]^d\right)$  to also control the region $B(o,\|w\|/2n)$ with Euclidean norm larger than $\|w\|/2n$.

With this, we get 
\[
  (\log(n))^{-1}\left|n^2\lambda^{(2,n)}_w-\bar{\lambda}^{(2,n)}_w\right|
  \le C  
  \frac{\|w\|^2}{n\log (n)}\int_{\|w\|/2n}^1 r^{-2} \text{d}r.
\]
Keeping track of all these contributions, we get the desired bounds. 

\begin{lemma}\label{lem-bounds-on-eigenvalues-5}
    For fixed $d \ge 1$ and $\alpha \in (2,\infty)$ we have 
    for all $ n \ge 1$ and $w \in \mathbb{Z}^d_n \backslash \{o\}$, for any $w\neq o$, we have
      \begin{equation}\label{lem-bounds-on-eigenvalues-4-eq-2} 
	\lambda^{(\alpha,n)}_w = -\tilde{c}^{(\alpha)}\frac{\|w\|^2}{n^2}
	  +
	  \begin{cases}
	    \mathcal O\left(\frac{\|w\|^4}{n^{\alpha}}\right), & \text{ if } \alpha \in (2,4)\\
	    \mathcal O\left(\frac{\|w\|^4 \log n }{n^{4}}\right)\log\left(\frac{\|w\|}{2n}\right), & \text{ if } \alpha =4\\
	    \mathcal O\left(\frac{\|w\|^{4}}{n^{4}}\right), & \text{ if } \alpha \in (4,\infty),
	  \end{cases}
      \end{equation}
      with 
      \begin{equation*} 
	\tilde{c}^{(\alpha)}:=\pi^2 c^{(\alpha)} \sum_{x \in \mathbb Z^d \setminus \{o\}}  \frac{x_1^2}{\|x\|^{d+\alpha}}.
      \end{equation*}
\end{lemma}

Notice that the constant above is $\pi^2$ times the variance of any of the coordinates of the steps of the random walk, which recovers the clear probabilistic
interpretation.
We will also restrict ourselves to only sketch the proof, which is simpler than the case $\alpha\le 2$ as it does not depend on rates of convergence of Riemann sums.
Indeed, notice that using \eqref{discrete-eigenvalues-eq},

\begin{align*}
  n^2\lambda_w^{(\alpha,n)}
  & 
  = - 
  n^2\cdot c^{(\alpha)}\sum_{\substack{x \in \mathbb{Z}^d \backslash\{o\} \\ \|x\|\le n  }}
    \frac{\sin^2(\pi x \cdot \frac{w}{n} )}{\|x\|^{d+\alpha}}
  -
  n^2\cdot c^{(\alpha)}\sum_{\substack{x \in \mathbb{Z}^d \backslash\{o\} \\ \|x\|\ge n  }}
    \frac{\sin^2(\pi x \cdot \frac{w}{n} )}{\|x\|^{d+\alpha}}
  \\ & 
  = - 
   c^{(\alpha)}\sum_{\substack{x \in \mathbb{Z}^d \backslash\{o\} \\ \|x\|\le n  }}
    \frac{(\pi x \cdot {w} )^2}{\|x\|^{d+\alpha}}
  +\mathcal{O}
  \left( \frac{\|w\|^4}{n^2} \int_{1}^n r^{3-\alpha} \text{d}r  \right)
  +
  \mathcal O \left( n^2\int_n^\infty r^{-1-\alpha} \text{d}r	\right),
\end{align*}
where in the first sum, we used a Taylor expansion of the sine function. Now, we examine the first sum and get
\[
\sum_{\substack{x \in \mathbb{Z}^d \backslash\{o\} \\ \|x\|\le n  }}
    \frac{(\pi x \cdot {w} )^2}{\|x\|^{d+\alpha}}
    =
    \|w\|^2
\sum_{x \in \mathbb{Z}^d \backslash\{o\}}
    \frac{\left(\pi x \cdot \frac{w}{\|w\|} \right)^2}{\|x\|^{d+\alpha}}
   + 
   \mathcal O\left(\frac{\|w\|^2}{n^{\alpha-2}}\right).
\]
Collecting all the terms, we get the desired error bounds.
We still need to show that the first sum does not depend on $v=w/\|w\|$,  
\begin{align*}
  \sum_{x \in \mathbb{Z}^d \backslash\{o\}}
  \frac{(\pi x \cdot {v} )^2}{\|x\|^{d+\alpha}}
  &=
  \pi^2
  \sum_{i=1}^d
  v_i^2
  \sum_{x \in \mathbb{Z}^d \backslash\{o\}}
  \frac{x_i^2}{\|x\|^{d+\alpha}}
  +
  2\pi^2\sum_{i\neq j}
  v_i v_j
  \sum_{x \in \mathbb{Z}^d \backslash\{o\}}
  \frac{x_i  x_j}{\|x\|^{d+\alpha}}.
\end{align*}
We have that $\sum_{x \in \mathbb{Z}^d \backslash\{o\}} \frac{x_i x_j}{\|x\|^{d+\alpha}}=0$ 
and  $\sum_{x \in \mathbb{Z}^d \backslash\{o\}} \frac{x_i^2}{\|x\|^{d+\alpha}}$ is finite and does not depend on the 
choice of $i$. 
Using that $\|v\|=1$ we recover the constant. 

\subsection{Proof of Theorem \ref{thm-bounds-odometer-finite}}
We will present the proof for $\alpha  \in (0,2)$, as the case $\alpha>2$ uses the same techniques, with the exception that it relies on the Lemma~\ref{lem-bounds-on-eigenvalues-5}, instead of Proposition~\ref{lem-bounds-on-eigenvalues-2}.
First note that using \eqref{char-field-long-range-divisible-eq-1}, we have for $x\in \mathbb{Z}^d_n$,
\begin{align*}
	\mathbb{E}[u_{\infty}^{\alpha}(x)]
&=
    \mathbb{E}[\eta^{\alpha}(x)] - \mathbb{E} \left [\min_{z \in \mathbb{Z}^d_n} \{ \eta^{\alpha}(z) \}\right ]
\\&=
    0 + \mathbb{E}\left[\max_{z \in \mathbb{Z}^d_n} \{\eta^{\alpha}(z)\} \right]
\end{align*}
since the field $\eta^{\alpha}$ is Gaussian and has 0 mean. Therefore, the expected odometer is equal to the expected value of the maximum of a Gaussian field. The key ingredients will be Dudley's bound \cite[Proposition 1.2.1]{Tal05} and the majorizing  measure theorem \cite[Theorem 2.1.1]{Tal05}. The idea is to study the mean of the extremes of a centred Gaussian field $(\eta^{\alpha}(x))_{x \in T}$ for some set of indexes through the metric on $T$ defined by
\begin{align}\label{def-gaussian-metric}
\nonumber
        d_\eta : T \times T &\longrightarrow \mathbb{R}
\\
        (x,y)& \longmapsto \mathbb{E}[(\eta^{\alpha}(x) -\eta^{\alpha}(y))^2]^{\frac{1}{2} }.
\end{align}
Basically, good bounds for $d_{\eta}(x,y)$ will imply good bounds for $\mathbb{E}[\max_{x \in T} \{\eta^{\alpha}(x)\}]$. 
In the sequel we will prove upper and lower bounds for $d_\eta$ for our case. Theorem~\ref{thm-bounds-odometer-finite} is a straightforward adaptation of the proofs of Propositions~8.3 and 8.8 made in \cite{Levine2016}.

\begin{proposition}\label{prop-upper-bounds-gaussian-distance}
For fixed $d \ge 1$ and $\alpha \in (0,2)$ there exists a constant $C = C_{d,\alpha}>0$ such that for all $n \ge 1$ and all $x \in \mathbb{Z}^d_n
\backslash \{ o \}$,
\[
    \mathbb{E}[(\eta^\alpha(o) - \eta^\alpha(x))^2]
\le
    C \Psi_{d,\alpha}(n,\|x\|),
\]
where
\begin{equation}\label{def-asymp-mean-u}
    \Psi_{d,\alpha}(n,r):=
    \begin{cases}
        n^{2\alpha - d-2} r^2        ,& \text{ if } \alpha > \frac{d}{2}+1\\
        \log(\frac{n}{r}) r^2    ,& \text{ if } \alpha = \frac{d}{2}+1\\
        r^{2\alpha -d}            ,& \text{ if } \alpha \in  (\frac{d}{2},\frac{d}{2}+1)\\
        \log(1+r)                    ,& \text{ if } \alpha = \frac{d}{2}\\
        1                        ,& \text{ if } \alpha < \frac{d}{2}\\
	\end{cases},
\end{equation}
for $r >0$, we define $\Psi_{d,\alpha}(n,0):=0$ for any $d,\alpha,n$.
\end{proposition}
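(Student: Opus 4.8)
The plan is to pass to Fourier space, reduce the increment variance to a spectral sum controlled by the eigenvalue asymptotics of Lemma~\ref{lem-bounds-on-eigenvalues-0}, and then estimate that sum by a dyadic decomposition in the frequency variable $w$. Since the statement is only an \emph{upper} bound, I can afford to be lossy and use Cauchy--Schwarz on the oscillatory factor.

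First I would obtain a clean Fourier representation of the increment variance. Starting from the covariance formula of Proposition~\ref{prop:rescaledGauss}, namely $\mathbb{E}[\eta^\alpha(x)\eta^\alpha(y)]=\sum_{z\in\mathbb{Z}^d_n}g^{(\alpha)}(z,x)g^{(\alpha)}(z,y)$, I combine Parseval's identity with the spectral identity \eqref{identity-fourier-of-green}, which yields $\widehat{g^{(\alpha)}}(x,w)=-\frac{1}{n^d\lambda^{(\alpha,n)}_w}\psi_{-w}(x)$ for $w\neq 0$. After accounting for the normalisation of the inner product this gives
\[
    \mathbb{E}[\eta^\alpha(x)\eta^\alpha(y)]
    =
    \frac{1}{n^d}\sum_{w\in\mathbb{Z}^d_n\setminus\{0\}}
    \frac{e^{-2\pi i (x-y)\cdot w/n}}{(\lambda^{(\alpha,n)}_w)^2},
\]
the $w=0$ mode being absent because $\eta^\alpha$ is defined modulo constants. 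Taking $y=0$ and symmetrising produces
\[
    \mathbb{E}[(\eta^\alpha_0-\eta^\alpha_x)^2]
    =
    \frac{2}{n^d}\sum_{w\in\mathbb{Z}^d_n\setminus\{0\}}
    \frac{1-\cos(2\pi x\cdot w/n)}{(\lambda^{(\alpha,n)}_w)^2}.
\]
Invoking the lower bound $|\lambda^{(\alpha,n)}_w|\ge C^{-1}\|w\|^\alpha/n^\alpha$ of Lemma~\ref{lem-bounds-on-eigenvalues-0} to replace $(\lambda^{(\alpha,n)}_w)^{-2}$ by $C^2 n^{2\alpha}\|w\|^{-2\alpha}$, I reduce everything to the bound $\mathbb{E}[(\eta^\alpha_0-\eta^\alpha_x)^2]\le C\, n^{2\alpha-d}\,S(x)$, where
\[
    S(x):=\sum_{w\in\mathbb{Z}^d_n\setminus\{0\}}\frac{1-\cos(2\pi x\cdot w/n)}{\|w\|^{2\alpha}}.
\]

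The core of the argument is then a dyadic estimate of $S(x)$. Writing $r:=\|x\|$ and using $0\le 1-\cos\theta\le\min\{2,\theta^2/2\}$ together with $|x\cdot w|\le r\|w\|$, the summand is at most $\min\{2,\,C r^2\|w\|^2/n^2\}\,\|w\|^{-2\alpha}$. I split $\mathbb{Z}^d_n\setminus\{0\}$ into dyadic shells $A_k=\{2^k\le\|w\|<2^{k+1}\}$ for $0\le k\lesssim\log_2 n$, on which $|A_k|\asymp 2^{kd}$ (valid up to $\|w\|\asymp n$ since $\mathbb{Z}^d_n$ sits in a box of side $n$) and $\|w\|^{-2\alpha}\asymp 2^{-2\alpha k}$. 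Setting the crossover $k_0:=\log_2(n/r)$, I apply the quadratic bound for $k\le k_0$ and the bound $2$ for $k>k_0$. This leaves two geometric sums, of ratios $2^{d-2\alpha+2}$ and $2^{d-2\alpha}$, whose dominant contributions sit at the top or bottom of each range according to the sign of the exponent. Going through the five regimes governed by $\alpha\gtrless d/2+1$ and $\alpha\gtrless d/2$, and multiplying back by $n^{2\alpha-d}$, reproduces exactly $\Psi_{d,\alpha}(n,r)$; in particular the two critical thresholds $\alpha=d/2+1$ and $\alpha=d/2$ give constant-ratio geometric sums whose lengths are $\asymp\log(n/r)$ and $\asymp\log r$, which is the source of the logarithmic cases.

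The analytic content is light: the single genuine input is the eigenvalue bound already supplied by Lemma~\ref{lem-bounds-on-eigenvalues-0}. The main obstacle is instead the bookkeeping, namely identifying in each of the five regimes which of the two shell-sums dominates, and treating the two critical exponents $\alpha=d/2$ and $\alpha=d/2+1$ separately, since these are precisely where the dominant term switches ends of the dyadic range and where the logarithms appear. Care is also needed at the boundary case $\alpha=d/2$, where the natural outcome is $1+\log r$ rather than $\log r$; for the upper bound claimed here this is harmless, and the shell-count $|A_k|\asymp 2^{kd}$ must be used uniformly up to the largest shell $\|w\|\asymp n$.
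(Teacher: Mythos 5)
Your proposal is correct and takes essentially the same route as the paper: the identical Fourier representation of the increment variance via \eqref{identity-fourier-of-green}, the identical key input (Lemma~\ref{lem-bounds-on-eigenvalues-0}) to replace $(\lambda^{(\alpha,n)}_w)^{-2}$ by $C\,n^{2\alpha}\|w\|^{-2\alpha}$, and the identical frequency split at $\|w\|\asymp n/\|x\|$ using $\sin^2 t \le \min\{t^2,1\}$ — the paper merely does the final bookkeeping by comparing the sum to the radial integrals $I_1$ and $I_2$ of the step function $H_{n,d,\alpha,x}$ (Lemma~\ref{lem-equiv-G-H}), whereas you sum over dyadic shells, which is an equivalent computation. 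Your remark that the critical case $\alpha=\frac{d}{2}$ naturally produces $1+\log r$ rather than $\log r$ applies equally to the paper's own statement and is a fair observation, not a defect of your argument.
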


Notice that the first two cases are only seen for $d =1$. We will split the proof in several parts. For any $x \in \mathbb{Z}^d_n \backslash \{o\}$  we have
\begin{align*}
    \mathbb{E}[(\eta^\alpha (o) - \eta^\alpha (x))^2]
&=
\sum_{ w \in \mathbb{Z}^d_n} (g^{(\alpha)}(o,w)-g^{(\alpha)}(w,x))^2
\\&=
    n^d\sum_{ w \in \mathbb{Z}^d_n} |\widehat{g^{(\alpha)}}(o,w)- \widehat{g^{\alpha}}(w,x)|^2
\\& \!\! \stackrel{\eqref{identity-fourier-of-green}}{=}
    \frac{4}{n^d}\sum_{ w \in \mathbb{Z}^d_n\backslash \{o\}} \frac{\sin^2(\frac{\pi x\cdot w}{n})}{(\lambda^{(\alpha,n)}_w)^2} = : 4 \cdot M_{n,d,\alpha}(x).
\end{align*}
One can relate the function $M_{n,d,\alpha}$ to 
\[
    G_{n,d,\alpha,x}(y)
:=
    \sum_{ w \in \mathbb{Z}^d_n\backslash \{o\}} \frac{\sin^2(\frac{\pi x\cdot w}{n})}{(\lambda^{(\alpha,n)}_w)^2} \1_{B(\frac{w}{n},\frac{1}{2n})}(y)
\]
by noticing that 
\begin{equation}
    M_{n,d,\alpha}(x) = \int_{\mathbb{R}^d}  G_{n,d,\alpha,x}(y) \text{d} y.
\end{equation}
We have the following property.
\begin{lemma}  \label{lem-equiv-G-H}
For fixed $d \ge 1$ and $\alpha \in (0,2)$, there exists a constant $C = C_{d,\alpha}>0$ such that
\[
    C^{-1} H_{n,d,\alpha,x}(y)
\le
    G_{n,d,\alpha,x}(y)
\le
    C H_{n,d,\alpha,x}(y),
\]
for $x \in \mathbb Z_n^d \backslash \{o\}$, where
\[
    H_{n,d,\alpha,x}(y)
:=
    \sum_{ w \in \mathbb{Z}^d_n\backslash \{o\}} \frac{\sin^2(\frac{\pi x\cdot w}{n})}{\|y\|^{2\alpha}} \1_{B(\frac{w}{n},\frac{1}{2n})}(y).
\]
\end{lemma}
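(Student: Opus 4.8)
The plan is to exploit that the indicators $1_{B(\frac{w}{n},\frac{1}{2n})}$ appearing in both $G_{n,d,\alpha,x}$ and $H_{n,d,\alpha,x}$ form a partition, so that for each fixed $y$ only a single term survives in each sum; the lemma then reduces to a pointwise comparison of $(\lambda^{(\alpha,n)}_w)^{-2}$ with $\|y\|^{-2\alpha}$ for the unique surviving index. Concretely, the cubes $B(\frac{w}{n},\frac{1}{2n})$ (the $L^\infty$-balls of radius $\frac{1}{2n}$), as $w$ ranges over $\mathbb{Z}^d_n$, tile the torus $\mathbb{T}^d$ up to boundaries, and omitting $w=0$ removes exactly the central cube $B(0,\frac{1}{2n})$. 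Hence for any $y$ in the support there is a unique $w=w(y)\in\mathbb{Z}^d_n\backslash\{0\}$ with $y\in B(\frac{w}{n},\frac{1}{2n})$, and both sums collapse to their $w(y)$-term. It therefore suffices to prove $(\lambda^{(\alpha,n)}_w)^{-2}\asymp \|y\|^{-2\alpha}$ whenever $y\in B(\frac{w}{n},\frac{1}{2n})$ and $w\neq 0$.

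The geometric heart of the argument is the comparison $\|y\|\asymp \frac{\|w\|}{n}$, with constants depending only on $d$, valid for $y\in B(\frac{w}{n},\frac{1}{2n})$ and $w\neq 0$. The centre has $L^2$-norm $\frac{\|w\|}{n}$, while $\|y-\frac{w}{n}\|\le \frac{\sqrt d}{2n}$. When $\|w\|\ge \sqrt d$ the perturbation is at most half the centre's norm, so the triangle inequality gives $\frac{1}{2}\frac{\|w\|}{n}\le \|y\|\le \frac{3}{2}\frac{\|w\|}{n}$ at once. For the finitely many scales $1\le \|w\|<\sqrt d$ the upper bound follows in the same way, while the lower bound uses the tiling: since $y$ lies outside $B(0,\frac{1}{2n})$ one has $\|y\|_\infty\ge \frac{1}{2n}$, whence $\|y\|\ge \frac{1}{2n}\ge (2\sqrt d)^{-1}\frac{\|w\|}{n}$.

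With this in hand I would invoke Lemma~\ref{lem-bounds-on-eigenvalues-0}, which yields $|\lambda^{(\alpha,n)}_w|\asymp \frac{\|w\|^\alpha}{n^\alpha}$, hence $(\lambda^{(\alpha,n)}_w)^2\asymp \frac{\|w\|^{2\alpha}}{n^{2\alpha}}$. Combining this with $\|y\|\asymp \frac{\|w\|}{n}$ gives $(\lambda^{(\alpha,n)}_w)^2\asymp \|y\|^{2\alpha}$, that is $(\lambda^{(\alpha,n)}_w)^{-2}\asymp \|y\|^{-2\alpha}$. Multiplying through by the common nonnegative factor $\sin^2\big(\frac{\pi x\cdot w}{n}\big)\,1_{B(\frac{w}{n},\frac{1}{2n})}(y)$ and summing over $w$ (only the single term $w(y)$ contributes) produces the two-sided bound $C^{-1}H_{n,d,\alpha,x}(y)\le G_{n,d,\alpha,x}(y)\le C\,H_{n,d,\alpha,x}(y)$.

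The step I expect to require the most care is the uniformity of the geometric comparison near the origin: for $w$ at the smallest scales the naive triangle-inequality lower bound for $\|y\|$ degenerates, and one must instead use the tiling property (exclusion of the central cube $B(0,\frac{1}{2n})$) to bound $\|y\|$ from below. Once this uniform comparison is secured, the remainder is a direct application of Lemma~\ref{lem-bounds-on-eigenvalues-0} together with the disjointness of the indicators.
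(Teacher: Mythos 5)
Your proof is correct and follows essentially the same route as the paper: the termwise comparison $\|y\|\asymp\frac{\|w\|}{n}$ on each cube $B(\frac{w}{n},\frac{1}{2n})$ combined with Lemma~\ref{lem-bounds-on-eigenvalues-0} to replace $(\lambda^{(\alpha,n)}_w)^{-2}$ by $\|y\|^{-2\alpha}$. If anything, you are slightly more careful than the paper, which attributes the two-sided bound $(1+\sqrt{d})^{-1}\|\frac{w}{n}\|\le\|y\|\le(1+\sqrt{d})\|\frac{w}{n}\|$ to the triangle inequality alone, whereas for the small scales $1\le\|w\|<\sqrt{d}$ the lower bound genuinely needs your observation that the excluded central cube forces $\|y\|_\infty\ge\frac{1}{2n}$.
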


\begin{proof}
             By the triangular inequality, we have
\begin{equation}\label{eq:trian}
    (1+\sqrt{d})^{-1} \Big\| \frac{w}{n} \Big\|
\le
    \|y\|
\le
    (1+\sqrt{d}) \Big\| \frac{w}{n}\Big\|,
\end{equation}
for $y \in B\Big(\frac{w}{n}, \frac{1}{2n}\Big), $ and $w \in \mathbb Z^d_n \backslash \{o\}$.
Therefore, using Lemma~\ref{lem-bounds-on-eigenvalues-0}, we can find constants $c_1, c_2>0$ such that
\begin{align*}
    c^{-1}_2\frac{\1_{B(\frac{w}{n},\frac{1}{2n})}(y)}{\|y\|^{2\alpha}}
&\le
    c^{-1}_1\frac{\1_{B(\frac{w}{n},\frac{1}{2n})}(y)}{(\frac{\|w\|}{n})^{2\alpha}}
\\&
\le
    \frac{\1_{B(\frac{w}{n},\frac{1}{2n})}(y)}{(\lambda^{(\alpha,n)}_w)^2}
\le
    c_1\frac{\1_{B(\frac{w}{n},\frac{1}{2n})}(y)}{(\frac{\|w\|}{n})^{2\alpha}}
\le
    c_2\frac{\1_{B(\frac{w}{n},\frac{1}{2n})}(y)}{\|y\|^{2\alpha}}.
\end{align*}
Substituting these bounds in the definition of $G_{d,n,\alpha,x}$ concludes the proof.
\end{proof}
It follows from the previous lemma that there exists a constant $C>0$ such that
\begin{equation}\label{prop-upper-bounds-gaussian-distance-bound-equiv-H}
    C^{-1} \int_{\mathbb{R}^d} H_{n,d,\alpha,x}(y) \text{d}y
\le
    \mathbb{E}[(\eta^\alpha (o) - \eta^\alpha (x))^2]
\le
    C \int_{\mathbb{R}^d} H_{n,d,\alpha,x}(y) \text{d}y.
\end{equation}
Note that the support of $H_{d,n,\alpha,x}(y)$ is contained in the annulus $B_2(o,\frac{\sqrt{d}+1}{2}) \backslash B_2(o, \frac{1}{2n} )$ hence the above integral is well-defined.
We have all the ingredients to prove Proposition~\ref{prop-upper-bounds-gaussian-distance} now. 
\begin{proof}[Proof of Proposition~\ref{prop-upper-bounds-gaussian-distance}]
We split the integral
\[
    \int_{\mathbb{R}^d} H_{n,d,\alpha,x}(y) \text{d}y
=
    \underbrace{\int_{\frac{1}{2n} <\|y\| < \frac{\sqrt{d}}{\|x\|}} H_{n,d,\alpha,x}(y) \text{d}y}_{:=I_1}
+
	\underbrace{\int_{\frac{\sqrt{d}}{\|x\|} <\|y\| < \frac{\sqrt{d}+1}{2} }  H_{n,d,\alpha,x}(y) \text{d}y}_{:=I_2},
\]
for $x \in \mathbb Z^d_n \backslash \{o\}$.
First let us look at $I_1$. Consider $y$ such that $ \frac{1}{2n} <\|y\| \le \frac{\sqrt{d}}{\|x\|} $, we use the inequality $\sin^2(t) \le t^2$, \eqref{eq:trian} and Cauchy-Schwarz to get
\begin{align}
\nonumber
             H_{n,d,\alpha,x}(y)
&\le
    \sum_{ w \in \mathbb{Z}^d_n \backslash \{o\}}\pi^2 \frac{ \|x\|^2 \|\frac{w}{n}\|^2}{\|y\|^{2\alpha}}\1_{B(\frac{w}{n},\frac{1}{2n})}(y)
\\&\le
    (1+\sqrt{d})^2\pi^2 \|x\|^2 \sum_{ w \in \mathbb{Z}^d_n \backslash \{o\}} \frac{1 }{\|y\|^{2\alpha-2}}\1_{B(\frac{w}{n},\frac{1}{2n})}(y).
\end{align}
There exists a constant $c_d>0$ such that
\begin{equation} \label{prop-upper-bounds-gaussian-distance-eq-1}
             I_1
=
    \int_{\frac{1}{2n} <\|y\| < \frac{\sqrt{d}}{\|x\|}} H_{n,d,\alpha,x}(y) \text{d}y
\le
    c_d \|x\|^2 \int _{\frac{1}{2n}}^{\frac{\sqrt{d}}{\|x\|}} r^{d+1-2\alpha} \text{d} r.
\end{equation}
On the other hand, for $y$ such that $\|y\| \in (\frac{\sqrt{d}}{\|x\|}, \frac{\sqrt{d}+ 1}{2})$, we just use the trivial bound $\sin^2(t) \le 1$. Therefore, the second integral can be bounded by
\begin{equation} \label{prop-upper-bounds-gaussian-distance-eq-2}
    I_2= \int_{\frac{\sqrt{d}}{\|x\|} <\|y\| < \frac{\sqrt{d}+1}{2} }  H_{n,d,\alpha,x}(y) \text{d}y
 \le
	 c_d \int_{\frac{ \sqrt{d}}{\|x\|}}^{\frac{\sqrt{d}+1}{2}} r^{d-1-2\alpha} \text{d}r.
\end{equation}
Computing the right-hand sides in both \eqref{prop-upper-bounds-gaussian-distance-eq-1} and  \eqref{prop-upper-bounds-gaussian-distance-eq-2}, one recovers the desired expression for $\Psi_{d,\alpha}(n,r)$.
\end{proof}
For $\alpha=2$, we can use expression \eqref{lem-bounds-on-eigenvalues-4-eq-1} to get the right estimates. In fact, one has to estimate the rate of divergence of the Riemann sums of functions $\tilde{h}_w(x)=\frac{\sin^2(\pi w \cdot x)}{\|x\|^2 \log(1/\|x\|)}$ for different dimensions involving $\log$ corrections.

For the lower bound we will distinguish different cases depending on $\alpha$ and $d$. 

\begin{lemma}\label{lem-lower-bounds-gaussian-distance-1}
    For $d=1$ and $\alpha \in (\frac{1}{2},2)$ there exists a positive constant $c=c_\alpha>0$ such that
    \[
        \mathbb{E}[(\eta^\alpha (o) - \eta^\alpha (x))^2] \ge c \Psi_{1,\alpha}(n,\|x\|),
    \]
    for all $x \in \mathbb Z_n^d\backslash \{o\}$.
\end{lemma}
\begin{proof}
We will use that  $\sin(t) \ge \frac{2}{\pi} t$ for all $t \in (0,\frac{\pi}{2})$, then
\begin{align}\label{lem-lower-bounds-gaussian-distance-1-eq-1}
\nonumber
    \mathbb{E}[(\eta^\alpha (o) - \eta^\alpha (x))^2]
&=
    M_{n,1,\alpha}(x)
\nonumber\\&=
    \frac{1}{n}\sum_{w\in\mathbb{Z}_n\backslash\{o\}}
    \frac{\sin^2(\frac{\pi x w}{n})}{(\lambda^{(\alpha,n)}_w)^2}
  \!\!\!\stackrel{\eqref{lem-bounds-on-eigenvalues-0-eq-1}}{\ge}
    c \frac{1}{n}\sum_{w\in\mathbb{Z}_n\backslash\{o\}}
    \frac{\sin^2(\frac{\pi x w}{n})}{\frac{\|w\|^{2\alpha}}{n^{2\alpha}}}
\nonumber\\ & \ge
    c \frac{1}{n}\sum_{\substack{w\in\mathbb{Z}_n\backslash\{o\}\\ \|w\| < \frac{n}{2\|x\|}}}
    \frac{\frac{\|x\|^2 \|w\|^2}{n^2}}{\frac{\|w\|^{2\alpha}}{n^{2\alpha}}} =
    c n^{2\alpha-3} \|x\|^2\sum_{\substack{w\in\mathbb{Z}_n\backslash\{o\}\\ \|w\| < \frac{n}{2\|x\|}}}\|w\|^{2-2\alpha},
\end{align}
then one recovers the right estimates by evaluating the sum, which will either be convergent ($\alpha > \frac{ 3}{2}$), diverges logarithmically ($\alpha= \frac{3}{2}$) or diverges polynomially ($\frac{1}{2}< \alpha < \frac{3}{2}$).
\end{proof}

\begin{lemma}\label{lem-lower-bounds-gaussian-distance-2}
    For $d\in\{2,3\}$ and $\alpha \in (\frac{d}{2},2)$ there exists a positive constant $c=c_{d,\alpha}>0$, we have
    \[
	    \mathbb{E}[(\eta^\alpha (o) - \eta^\alpha (x))^2] \ge c \Psi_{d,\alpha}(n,\|x\|),
    \]
    for $x\in \mathbb{Z}^d_n\backslash \{o\}$.
\end{lemma}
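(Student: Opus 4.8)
The plan is to reduce $\mathbb{E}[(\eta^\alpha_0-\eta^\alpha_x)^2]=M_{n,d,\alpha}(x)$ to a lower bound for the trigonometric sum
\[
    S_{n,\alpha}(x):=\sum_{w\in\mathbb{Z}^d_n\setminus\{0\}}\frac{\sin^2(\frac{\pi x\cdot w}{n})}{\|w\|^{2\alpha}},
\]
exactly as in the one-dimensional Lemma~\ref{lem-lower-bounds-gaussian-distance-1}, and then to exhibit the correct order of $S_{n,\alpha}(x)$. Applying the upper bound of Lemma~\ref{lem-bounds-on-eigenvalues-0} to $(\lambda^{(\alpha,n)}_w)^{-2}$ produces a constant $c>0$ with
\[
    M_{n,d,\alpha}(x)=\frac{1}{n^d}\sum_{w\ne 0}\frac{\sin^2(\frac{\pi x\cdot w}{n})}{(\lambda^{(\alpha,n)}_w)^2}\ge c\,\frac{n^{2\alpha}}{n^d}\,S_{n,\alpha}(x),
\]
so it suffices to show $S_{n,\alpha}(x)\ge c\,n^{d-2\alpha}\|x\|^{2\alpha-d}$. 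Observe that for $d\in\{2,3\}$ and $\alpha\in(\tfrac d2,2)$ one has $\tfrac d2<\alpha<\tfrac d2+1$, so that $\Psi_{d,\alpha}(n,\|x\|)=\|x\|^{2\alpha-d}$ and the target reads $M_{n,d,\alpha}(x)\ge c\|x\|^{2\alpha-d}$.

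I would then split according to the size of $\|x\|$, fixing a small threshold $\varepsilon>0$. In the bulk regime $\|x\|\le\varepsilon n$, I restrict the sum to the ball $0<\|w\|\le R$ with $R:=\tfrac{n}{2\|x\|}$, which is contained in $\mathbb{Z}^d_n$. There $|x\cdot w|\le\|x\|\|w\|\le\tfrac n2$, so $\tfrac{\pi x\cdot w}{n}\in[-\tfrac\pi2,\tfrac\pi2]$ and the inequality $\sin|t|\ge\tfrac2\pi|t|$ gives $\sin^2(\tfrac{\pi x\cdot w}{n})\ge\tfrac{4(x\cdot w)^2}{n^2}$. The key simplification is a symmetrisation: since $\{\|w\|\le R\}$ is invariant under sign changes and permutations of coordinates, all cross terms cancel and
\[
    \sum_{0<\|w\|\le R}\frac{(x\cdot w)^2}{\|w\|^{2\alpha}}=\frac{\|x\|^2}{d}\sum_{0<\|w\|\le R}\|w\|^{2-2\alpha}.
\]
Keeping only the dyadic shell $\tfrac R2\le\|w\|\le R$, which contains at least $cR^d$ lattice points once $R$ is large (guaranteed by $\|x\|\le\varepsilon n$), I bound the last sum below by $cR^{d+2-2\alpha}$; here the exponent $d+2-2\alpha>0$ precisely because $\alpha<\tfrac d2+1$. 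Substituting $R=\tfrac n{2\|x\|}$ yields $S_{n,\alpha}(x)\ge c\,n^{d-2\alpha}\|x\|^{2\alpha-d}$, as required.

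The boundary regime $\varepsilon n\le\|x\|\le\tfrac{\sqrt d}{2}n$ is where the previous argument degenerates, since the natural radius $R\asymp n/\|x\|$ is of order one and the shell may contain no lattice point; this is the main obstacle. I would resolve it by a single well-chosen frequency: some coordinate satisfies $\tfrac{\varepsilon n}{\sqrt d}\le|x_i|\le\tfrac n2$, hence $\tfrac{|x_i|}{n}\in[\tfrac{\varepsilon}{\sqrt d},\tfrac12]$ and, by monotonicity of $\sin^2(\pi t)$ on $[0,\tfrac12]$, $\sin^2(\tfrac{\pi x_i}{n})\ge\sin^2(\tfrac{\pi\varepsilon}{\sqrt d})=:c(\varepsilon)>0$. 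Taking $w=e_i$ (so $\|w\|=1$) gives $S_{n,\alpha}(x)\ge c(\varepsilon)$, whence $M_{n,d,\alpha}(x)\ge c\,n^{2\alpha-d}\ge c'\|x\|^{2\alpha-d}$ because $\|x\|\asymp n$ throughout this range. Choosing $\varepsilon$ small enough that the shell count in the bulk regime is valid then closes the two cases. The only genuinely delicate point is this matching near $\|x\|\asymp n$; everything else is the same Riemann-sum bookkeeping already carried out for $d=1$.
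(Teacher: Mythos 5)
Your proposal is correct, and it reaches the bound by a mechanism that differs from the paper's at the key step. The reduction is shared: both you and the paper use the upper bound of Lemma~\ref{lem-bounds-on-eigenvalues-0} on $|\lambda^{(\alpha,n)}_w|$ to pass from $M_{n,d,\alpha}(x)$ to the trigonometric sum, restrict to frequencies $\|w\|\lesssim n/\|x\|$, and linearise the sine there. The divergence is in how the directional factor $(x\cdot w)^2$ is handled. The paper restricts to the cone $H_x=\{w:|x\cdot w|\ge \tfrac{1}{\sqrt2}\|x\|\|w\|\}$, where the sine bound holds pointwise, and invokes the counting estimate $|S_k\cap H_x|\gtrsim k^{d-1}$ on $\ell^\infty$-shells (the device of \cite{Levine2016}); the truncation of the shell sum at $\lfloor n/(a\|x\|)\rfloor\wedge\lfloor n/4\rfloor$ with $a<\sqrt2$ keeps the sum nonempty even when $\|x\|\asymp n$, so a single computation covers all $x$. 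You instead exploit the exact symmetrisation identity
\begin{equation*}
    \sum_{0<\|w\|\le R}\frac{(x\cdot w)^2}{\|w\|^{2\alpha}}
    =\frac{\|x\|^2}{d}\sum_{0<\|w\|\le R}\|w\|^{2-2\alpha},
\end{equation*}
valid because the Euclidean ball is invariant under sign flips (killing cross terms) and coordinate permutations (equidistributing diagonal terms), together with a dyadic annulus count; this removes any need for the cone lemma, but it only bites when $R=n/(2\|x\|)$ is large, which forces your two-regime split and the single-frequency argument $w=e_i$ for $\varepsilon n\le\|x\|\le\tfrac{\sqrt d}{2}n$ (where $\alpha>\tfrac d2$ is exactly what lets you trade $n^{2\alpha-d}$ for $\|x\|^{2\alpha-d}$). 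The trade-off: your route is more elementary and self-contained, is uniform in $n$ and $x$ without exceptional cases, and in fact works in any dimension with $\alpha\in(\tfrac d2,\tfrac d2+1)$ — it would also subsume the $d=1$ computation of Lemma~\ref{lem-lower-bounds-gaussian-distance-1} — while the paper's cone argument avoids the case split entirely and keeps the proof structurally parallel to the nearest-neighbour reference. All your individual steps check out: the containment $\{\|w\|\le R\}\subset\mathbb{Z}^d_n$ holds since $\|x\|\ge1$ gives $R\le n/2$; Jordan's inequality applies since $|x\cdot w|\le\|x\|R=n/2$; the exponent $d+2-2\alpha$ is positive precisely because $\alpha<\tfrac d2+1$ for $d\in\{2,3\}$, $\alpha<2$; and $|x_i|=\|x\|_\infty\le n/2$ justifies the monotonicity step at the boundary.
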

\begin{proof}
Let $S_k  = \mathbb{Z}^d \cap \partial B(0,k)\subset \mathbb{Z}^d$.
For any $x \in \mathbb{R}^d$, we define $H_x = \{y \in \mathbb{Z}^d: |x \cdot y |\ge \frac{1}{\sqrt{d}}\|x\|\|y\|\}$.
One can easily check there is a positive constant $c_d>0$ such that $|S_k \cap H_x|\ge c_d k^{d-1}$ for all $k \ge 1$ and all $x \in \mathbb{R}^d$. Let $a \in (1,\sqrt{d})$, if $\|w\|\le \frac{n}{a \|x\|}$, then by Cauchy-Schwarz, we have $|x \cdot w|n^{-1} \le a^{-1}$.
Now, we use the inequality $b|t| \le |\sin(\pi\cdot t)| \le \pi^2 |t|$ for all $|t|\le a^{-1}$ with $b:= a \sin (\pi a^{-1})$.
Hence, we get
\begin{align}\label{lem-lower-bounds-gaussian-distance-2-eq-1}
    M_{n,d,\alpha}(x)
& \ge
    \frac{1}{n^d}
    \sum_{k=1}^{\big\lfloor  \frac{n}{a \|x\|}\big\rfloor \wedge \left\lfloor \frac{n}{4}\right\rfloor  }
    \sum_{ w \in S_k}
    \frac{\sin^2(\frac{\pi x\cdot w}{n})}{(\lambda^{(\alpha,n)}_w)^2}
 \ge
    \frac{c_d}{n^d}
    \sum_{k=1}^{\big\lfloor  \frac{n}{a \|x\|}\big\rfloor \wedge \left\lfloor \frac{n}{4}\right\rfloor  }
    \sum_{ w \in S_k}
    \frac{\frac{|w\cdot x|^2}{n^2}}{\frac{\|w\|^{2\alpha}}{n^{2\alpha}}}
\nonumber\\& \ge
    c_d n^{2\alpha-d-2} \|x\|^2
    \sum_{k=1}^{\big\lfloor  \frac{n}{a \|x\|}\big\rfloor \wedge \left\lfloor \frac{n}{4}\right\rfloor  }\sum_{ w \in S_k\cap H_x}
    \|w\|^{2-2\alpha}
 \nonumber\\
	   &\ge
    c_d n^{2\alpha-d-2} \|x\|^2
    \sum_{k=1}^{\big\lfloor  \frac{n}{a \|x\|}\big\rfloor \wedge \left\lfloor \frac{n}{4}\right\rfloor}
    k^{d-1}k^{2-2\alpha}
\nonumber\\& \!\!\!\!\stackrel{(\alpha > d/2)}{\ge}
    c_d n^{2\alpha-d-2} \|x\|^2
    \Big({\Big\lfloor  \frac{n}{a \|x\|}\Big\rfloor \wedge \left\lfloor \frac{n}{4}\right\rfloor}\Big)^{d+2-2\alpha}.
\end{align}
As  $\alpha > \frac{d}{2}$ and $a <\sqrt{d}$, the right-hand side of \eqref{lem-lower-bounds-gaussian-distance-2-eq-1} is of order $\|x\|^{2\alpha-d} = \Psi_{d,\alpha}(n,\|x\|)$.
\end{proof}

For the case $d > 2 \alpha$ and $x \neq o$ we have to analyse the rate of convergence of the function
$H_{n,d,\alpha,x}(y)$ to its almost everywhere pointwise limit, that is
\[
    H_{\infty,d,\alpha,x}(y)
=
    \frac{\sin^2(\pi y \cdot x)}{\|y\|^{2\alpha}} \1_{B\left(o, \frac{1}{2}\right)\backslash\{o\}}(y).
\]
In particular, for $d\ge 2$, it will be useful to express
\begin{equation}\label{lower-bounds-gaussian-distance-projection}
	\int_{r_1 \le \|y\|\le r_2} H_{\infty,d,\alpha,x}(y) \text{d}y
=
	\int_{r_1}^{r_2} \frac{v_d(r\|x\|)}{r^{2\alpha}} r^{d-1} \text{d}r,
\end{equation}
where $v_d(t) := \int_{\mathbb{S}^{d-1}} \sin^2(\pi t y_1) \mu_{d-1}(\text{d} y)$ and $\mu_{d-1}$ is the surface measure on the sphere $\mathbb{S}^{d-1}$.

\begin{lemma}\label{lem-lower-bounds-gaussian-distance-3}
For $d \ge 2$, and for all $\varepsilon >0$, there exists $\delta >0$ such that if $t \ge \varepsilon$,  then $v_d(t)\ge \delta$.
\end{lemma}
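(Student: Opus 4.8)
The plan is to establish three analytic properties of the map $t\mapsto a_d(t)=\int_{\mathbb{S}^{d-1}}\sin^2(\pi t y_1)\,\mu_{d-1}(\text{d}y)$ and then combine them through a compactness-plus-tail argument. First I would record that $a_d$ is continuous on $[0,\infty)$: the integrand is bounded by $1$ and depends continuously on $t$, so dominated convergence gives continuity at once. Second, I would prove the strict positivity $a_d(t)>0$ for every $t>0$. The integrand vanishes only on the set $\{y\in\mathbb{S}^{d-1}:t y_1\in\mathbb{Z}\}$, which is a finite union of latitude sections $\{y_1=k/t\}\cap\mathbb{S}^{d-1}$; for $d\ge 2$ each such section is a lower-dimensional sphere and hence $\mu_{d-1}$-null, so $\sin^2(\pi t y_1)>0$ holds for $\mu_{d-1}$-a.e.\ $y$ and the integral is strictly positive. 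This is exactly where the hypothesis $d\ge 2$ is used: for $d=1$ one has $a_1(t)=2\sin^2(\pi t)$, which vanishes at every integer, so the statement is genuinely false in that case.

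The decisive third step is to control the behaviour as $t\to\infty$. Writing $\sin^2(\pi t y_1)=\tfrac12\bigl(1-\cos(2\pi t y_1)\bigr)$, I would invoke the Riemann--Lebesgue lemma for the push-forward of $\mu_{d-1}$ under the projection $y\mapsto y_1$. For $d\ge 2$ this push-forward is absolutely continuous on $[-1,1]$ with density proportional to $(1-s^2)^{(d-3)/2}$, which is integrable, so $\int_{\mathbb{S}^{d-1}}\cos(2\pi t y_1)\,\mu_{d-1}(\text{d}y)\to 0$ and therefore $a_d(t)\to\tfrac12\mu_{d-1}(\mathbb{S}^{d-1})>0$ as $t\to\infty$.

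With these three facts the conclusion is routine. Fix $\varepsilon>0$. By the limit just computed there is $T\ge\varepsilon$ with $a_d(t)\ge\tfrac14\mu_{d-1}(\mathbb{S}^{d-1})$ for all $t\ge T$. On the compact interval $[\varepsilon,T]$ the continuous, strictly positive function $a_d$ attains a positive minimum $m>0$, and setting $\delta:=\min\{m,\tfrac14\mu_{d-1}(\mathbb{S}^{d-1})\}$ yields $a_d(t)\ge\delta$ for all $t\ge\varepsilon$, as claimed. The only genuine difficulty is the tail estimate: a priori $a_d$ could oscillate and return arbitrarily close to zero for large $t$, and ruling this out is precisely what the Riemann--Lebesgue argument accomplishes; it is also the step that forces $d\ge 2$, since it relies on the marginal of the surface measure being a genuine $L^1$ density rather than a sum of atoms.
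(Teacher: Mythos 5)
Your proof is correct, and it is a more self-contained and uniform treatment than the paper's. The circle of ideas is the same --- reduce to the one-dimensional marginal of $\mu_{d-1}$ under $y\mapsto y_1$ and control the large-$t$ behaviour --- but the execution differs genuinely. The paper splits into cases: for $d\ge 3$ it simply cites \cite[Lemma 8.4]{Levine2016}, and only for $d=2$ does it argue directly, writing $v_2(t)=c_2\int_{-1}^1(1-y^2)^{-1/2}\sin^2(\pi t y)\,\mathrm{d}y$, bounding the density below on $[-\frac12,\frac12]$, and using $\lim_{t\to\infty}\int_{-1/2}^{1/2}\sin^2(\pi t y)\,\mathrm{d}y=\frac12$; moreover it reduces the lemma to showing $\liminf_{t\to\infty}v_d(t)>0$, leaving the continuity and strict positivity on the compact window $[\varepsilon,T]$ implicit. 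You instead handle all $d\ge 2$ in one stroke: the marginal density $c_d(1-s^2)^{(d-3)/2}$ is in $L^1(-1,1)$ precisely when $d\ge 2$, so writing $\sin^2(\pi t y_1)=\frac12\bigl(1-\cos(2\pi t y_1)\bigr)$ and applying Riemann--Lebesgue yields the exact limit $\frac12\mu_{d-1}(\mathbb{S}^{d-1})$, not merely a positive $\liminf$; your strict-positivity step (latitude sections $\{y_1=k/t\}$ are $\mu_{d-1}$-null for $d\ge 2$) and the compactness argument on $[\varepsilon,T]$ make explicit exactly what the paper's reduction tacitly assumes; and your observation that $a_1(t)=2\sin^2(\pi t)$ vanishes at integers cleanly isolates why $d\ge 2$ is necessary --- the marginal degenerates to atoms at $\pm 1$ when $d=1$, which is the precise failure mode of the Riemann--Lebesgue step. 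What the paper's route buys is brevity by outsourcing $d\ge 3$; what yours buys is a single verifiable argument with the sharp asymptotic constant, and it also sidesteps a small infelicity in the paper's $d=2$ display, where the constant $\frac{2}{\sqrt 3}$ is an upper rather than lower bound for $(1-y^2)^{-1/2}$ on $[-\frac12,\frac12]$ (the correct lower bound is $1$, which suffices there anyway).
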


\begin{proof}
The case $d\ge 3$ is covered in Lemma~8.4,  \cite{Levine2016}. For $d=2$, we need to prove that $\underline{\lim}_{t \to \infty} v_d(t) >0$. By using \cite[Corollary 4]{Bar05}, we obtain

\[
    v_2(t)
=
    c_2 \int_{-1}^1 (1- y^2)^{-\frac{1}{2}} \sin^2(\pi t y) \text{d} y
\ge
    \frac{ 2 c_2}{\sqrt{3}}\int_{-\frac{1}{2}}^{\frac{1}{2}} \sin^2(\pi t y) \text{d}y.
\]
Now, the result follows by noticing that $\lim_{t \to \infty}\int_{-\frac{1}{2}}^{\frac{1}{2}} \sin^2(\pi t y) \text{d}y = \frac{1}{2}$.
\end{proof}

The proofs of the next two lemmas are equivalent to the proofs of Lemma~8.5 and 8.6 in \cite{Levine2016}.

\begin{lemma}\label{lem-lower-bounds-gaussian-distance-4}
Let $d \ge 1$. For all $\varepsilon >0$, there exist $\delta,N >0$ such that
\[
    \Big |H_{n,d,\alpha,x}(y) -  \frac{\sin^2(\pi y \cdot x)}{\|y\|^{2\alpha}} \Big |
\le
    \frac{\varepsilon}{\|y\|^{2\alpha}}
\]
for all $n \ge N$, $x \in \mathbb{Z}^d_n\backslash \{o\}$ such that $\|x\| \le \delta n$ and for almost every $y$ in the annulus $ B_2(o,\frac{1}{4}) \backslash B_2(o,\frac{1}{8 \|x\|} )$.
\end{lemma}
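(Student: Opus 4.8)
The plan is to read $H_{n,d,\alpha,x}$ as a piecewise-constant approximation of its pointwise limit and to control the error by a mean-value estimate on $\sin^2$. The key observation is that for a fixed $y$ (outside the measure-zero union of box boundaries) exactly one term of the defining sum survives, namely the unique $w=w(y)\in\mathbb{Z}^d_n$ with $y\in B(w/n,1/(2n))$. For such $y$ one has
\[
    H_{n,d,\alpha,x}(y)=\frac{\sin^2\!\big(\pi x\cdot\tfrac{w}{n}\big)}{\|y\|^{2\alpha}},
\qquad
    H_{\infty,d,\alpha,x}(y)=\frac{\sin^2(\pi x\cdot y)}{\|y\|^{2\alpha}},
\]
so the quantity to be estimated is simply $\big|\sin^2(\pi x\cdot w/n)-\sin^2(\pi x\cdot y)\big|/\|y\|^{2\alpha}$, and the common factor $\|y\|^{-2\alpha}$ will cancel against the $\|y\|^{-2\alpha}$ on the right-hand side of the claimed bound.

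First I would fix $\delta$ and $N$ so that the annulus $B_2(0,\tfrac14)\setminus B_2(0,\tfrac{1}{8\|x\|})$ sits inside the support of $H_{n,d,\alpha,x}$ and inside $B(0,\tfrac12)$ (the case $x=0$ being vacuous, as the annulus is then empty). Since $y\in B(w/n,1/(2n))$ forces $\|y-w/n\|_\infty<1/(2n)$, hence $\|y-w/n\|<\sqrt{d}/(2n)$, the outer bound $\|y\|\le\tfrac14$ guarantees both that $\|w/n\|<\tfrac12$ (so the contributing index $w$ is a genuine nonzero element of $\mathbb{Z}^d_n$) and that $1_{B(0,1/2)}(y)=1$ (so the second identity above is active), provided $n\ge N$ with $\sqrt{d}/(2N)<\tfrac14$. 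For $\delta\le\tfrac14$ the inner bound gives $\|y\|\ge 1/(8\|x\|)\ge 1/(8\delta n)\ge 1/(2n)$, keeping $y$ out of the central hole of the support, so that the first identity is valid as well.

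The core step is the mean-value estimate $\big|\sin^2 a-\sin^2 b\big|=|\sin(a+b)|\,|\sin(a-b)|\le|a-b|$, applied with $a=\pi x\cdot w/n$ and $b=\pi x\cdot y$, followed by Cauchy–Schwarz:
\[
    \Big|\sin^2\!\big(\pi x\cdot\tfrac{w}{n}\big)-\sin^2(\pi x\cdot y)\Big|
\le
    \pi\,\big|x\cdot(\tfrac{w}{n}-y)\big|
\le
    \pi\|x\|\,\big\|\tfrac{w}{n}-y\big\|
\le
    \frac{\pi\sqrt{d}}{2}\,\frac{\|x\|}{n}.
\]
On the region $\|x\|\le\delta n$ this is at most $\pi\sqrt{d}\,\delta/2$, so choosing $\delta\le\min\{\tfrac14,\,2\varepsilon/(\pi\sqrt{d})\}$ and dividing by $\|y\|^{2\alpha}$ yields
\[
    \Big|H_{n,d,\alpha,x}(y)-\frac{\sin^2(\pi x\cdot y)}{\|y\|^{2\alpha}}\Big|
\le
    \frac{\varepsilon}{\|y\|^{2\alpha}}
\]
for almost every $y$ in the annulus, which is exactly the assertion.

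I do not expect a genuine obstacle: this is a uniform step-approximation estimate, and the substance is the Lipschitz bound on $\sin^2$ together with Cauchy–Schwarz. The only care required is the bookkeeping in the second paragraph—checking that the chosen $\delta,N$ place the annulus simultaneously inside the support of $H_{n,d,\alpha,x}$ (so a single box contributes) and inside $B(0,\tfrac12)$ (so the limiting indicator is $1$)—and the remark that the singular factor $\|y\|^{-2\alpha}$ is common to both functions and hence plays no role in the $\sin^2$ comparison, which is what makes the bound uniform in $y$ across the whole annulus.
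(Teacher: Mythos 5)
Your proof is correct and is essentially the argument the paper intends: the paper omits the proof of Lemma~\ref{lem-lower-bounds-gaussian-distance-4} entirely, deferring to \cite{Levine2016}, and your reduction to the single contributing box $B(w/n,\tfrac{1}{2n})$ plus the identity $\sin^2 a-\sin^2 b=\sin(a+b)\sin(a-b)$ with Cauchy--Schwarz is exactly that step-approximation estimate, made simpler here because $H_{n,d,\alpha,x}$ already carries the exact factor $\|y\|^{-2\alpha}$ (the eigenvalue comparison having been absorbed earlier into Lemma~\ref{lem-equiv-G-H}). One constant to patch in your bookkeeping: the central hole $B(0,\tfrac{1}{2n})$ is an $L^\infty$ box while the annulus is Euclidean, so $\|y\|\ge \tfrac{1}{8\delta n}$ only yields $\|y\|_\infty\ge \tfrac{1}{8\delta n\sqrt d}$, and avoiding the hole requires $\delta\le \tfrac{1}{4\sqrt d}$ rather than $\delta\le\tfrac14$ --- harmless, since your final choice of $\delta$ is free to shrink by this dimension-dependent factor.
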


\begin{lemma}\label{lem-lower-bounds-gaussian-distance-5}
Let $d \ge 2$ and $\alpha \in (0,2)$ such that $\alpha \le \frac{d}{2}$. There exist $\delta,N,c_{d,\alpha} >0$ such that
\[
    \int_{\mathbb{R}^d} H_{n,d,\alpha,x}(y) \text{d}y
\ge
    c_{d,\alpha} \int_{\frac{1}{8\|x\|}}^{\frac{1}{4}} r^{d-2\alpha-1} \text{d}r
\]
for all $n \ge N$, $x \in \mathbb{Z}^d_n\backslash \{o\}$ satisfying  $\|x\| \le \delta n$.
\end{lemma}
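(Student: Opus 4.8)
The plan is to bound $\int_{\mathbb{R}^d} H_{n,d,\alpha,x}(y)\,\mathrm{d}y$ from below by discarding everything outside the annulus
\[
    A_x:=B_2\Bigl(0,\tfrac14\Bigr)\setminus B_2\Bigl(0,\tfrac{1}{8\|x\|}\Bigr),
\]
which is legitimate since $H_{n,d,\alpha,x}\ge 0$, then replacing $H_{n,d,\alpha,x}$ by its pointwise limit $H_{\infty,d,\alpha,x}(y)=\sin^2(\pi y\cdot x)\|y\|^{-2\alpha}$ up to a uniformly controlled error, and finally evaluating the limiting integral through the radial decomposition \eqref{lower-bounds-gaussian-distance-projection} combined with the lower bound on $a_d$ furnished by Lemma~\ref{lem-lower-bounds-gaussian-distance-3}. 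The point of choosing precisely this annulus is that it coincides with the region on which Lemma~\ref{lem-lower-bounds-gaussian-distance-4} provides a uniform approximation.

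Concretely, I would fix a parameter $\varepsilon>0$ to be selected at the end and invoke Lemma~\ref{lem-lower-bounds-gaussian-distance-4}: there are $\delta,N>0$ such that, for all $n\ge N$ and all $x\in\mathbb{Z}^d_n$ with $\|x\|\le\delta n$, one has $H_{n,d,\alpha,x}(y)\ge \sin^2(\pi y\cdot x)\|y\|^{-2\alpha}-\varepsilon\|y\|^{-2\alpha}$ for almost every $y\in A_x$. Integrating over $A_x$ only, this yields
\[
    \int_{\mathbb{R}^d}H_{n,d,\alpha,x}(y)\,\mathrm{d}y
    \ge
    \int_{A_x}\frac{\sin^2(\pi y\cdot x)}{\|y\|^{2\alpha}}\,\mathrm{d}y
    -\varepsilon\int_{A_x}\frac{\mathrm{d}y}{\|y\|^{2\alpha}}.
\]
Passing to spherical coordinates $y=r\omega$ and incorporating the volume element $r^{d-1}\,\mathrm{d}r$, both pieces collapse to one-dimensional radial integrals over $r\in(\tfrac{1}{8\|x\|},\tfrac14)$. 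Writing $a_d(t):=\int_{\mathbb{S}^{d-1}}\sin^2(\pi t\,\omega_1)\,\mathrm{d}\omega$ and $\omega_{d-1}:=|\mathbb{S}^{d-1}|$, rotational invariance of the surface measure (this is the content of \eqref{lower-bounds-gaussian-distance-projection}) gives that the main term equals $\int_{1/(8\|x\|)}^{1/4} a_d(r\|x\|)\,r^{d-2\alpha-1}\,\mathrm{d}r$, while the error term equals $\varepsilon\,\omega_{d-1}\int_{1/(8\|x\|)}^{1/4} r^{d-2\alpha-1}\,\mathrm{d}r$.

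On the range of integration $r\ge\tfrac{1}{8\|x\|}$, whence $r\|x\|\ge\tfrac18$, so Lemma~\ref{lem-lower-bounds-gaussian-distance-3} applied with $\varepsilon_0=\tfrac18$ produces $\delta_0=\delta_0(d)>0$ with $a_d(r\|x\|)\ge\delta_0$ throughout. Since the two radial integrals share the identical profile $r^{d-2\alpha-1}$, the subtraction is clean and uniform in $x,n$:
\[
    \int_{\mathbb{R}^d}H_{n,d,\alpha,x}(y)\,\mathrm{d}y
    \ge
    \bigl(\delta_0-\varepsilon\,\omega_{d-1}\bigr)\int_{1/(8\|x\|)}^{1/4} r^{d-2\alpha-1}\,\mathrm{d}r.
\]
Choosing $\varepsilon:=\delta_0/(2\omega_{d-1})$ (which depends only on $d$, hence fixes admissible $\delta,N$ depending only on $d,\alpha$) gives the assertion with $c_{d,\alpha}=\delta_0/2$. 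Note the hypothesis $\alpha<d/2$ enters through $d-2\alpha-1>-1$, so the radial integral remains well behaved at its lower endpoint as $\|x\|\to\infty$.

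I expect the only genuinely delicate step to be the bookkeeping in the last display: verifying that the annulus on which Lemma~\ref{lem-lower-bounds-gaussian-distance-4} holds is exactly the domain of the target radial integral, so that no scale mismatch arises, and that the error term carries the \emph{same} radial profile as the main term, allowing it to be absorbed with constants depending only on $d$ and $\alpha$. Everything else is the routine dominated-to-pointwise replacement already packaged in Lemmas~\ref{lem-lower-bounds-gaussian-distance-3} and~\ref{lem-lower-bounds-gaussian-distance-4}.
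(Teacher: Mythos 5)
Your proposal is correct and is essentially the argument the paper intends: the paper defers the proof to \cite{Levine2016}, but its preparatory material (Lemma~\ref{lem-lower-bounds-gaussian-distance-3}, Lemma~\ref{lem-lower-bounds-gaussian-distance-4}, and the radial decomposition \eqref{lower-bounds-gaussian-distance-projection}) is set up precisely for the combination you carry out, and your restriction to the annulus with the uniform error $\varepsilon\|y\|^{-2\alpha}$ absorbed into the main term mirrors the paper's own treatment of the $d=1$ case in Lemma~\ref{lem-lower-bounds-gaussian-distance-6}. The only cosmetic caveat is that the statement implicitly requires $x\neq 0$ (so $r\|x\|\ge \tfrac18$ on the integration range), which your application of Lemma~\ref{lem-lower-bounds-gaussian-distance-3} with threshold $\tfrac18$ handles correctly.
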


We are left with the case $d=1$ and $\alpha \leq \frac{1}{2}$. Here we have to compute the lower bound of  $\int_{\mathbb{R}^d} H_{n,d,\alpha,x}(y) \text{d}y$ directly as we cannot apply the same ideas as in the proofs of Lemma~\ref{lem-lower-bounds-gaussian-distance-1} or Lemma~\ref{lem-lower-bounds-gaussian-distance-3}. 

\begin{lemma}\label{lem-lower-bounds-gaussian-distance-6}
Let $d = 1$ and  $\alpha \in (0,\frac{1}{2}]$. There exists $\delta,N,c_{1,\alpha} >0$ such that
\[
  \int_{\mathbb{R}} H_{n,1,\alpha,x}(y) \text{d}y
\ge
  c_{1,\alpha} \Psi_{1,\alpha}(n,\|x\|)
\]
for all $n \ge N$, $x \in \mathbb{Z}_n\backslash \{o\}$ satisfying $\|x\| \le \delta n$.
\end{lemma}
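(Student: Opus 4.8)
The plan is to work directly with the quantity $\int_{\mathbb{R}}H_{n,1,\alpha,x}(y)\,\text{d}y$ in the statement and to compare it, with explicit control of the discretisation error, to the continuum integral $\int_{-1/2}^{1/2}\frac{\sin^2(\pi x y)}{|y|^{2\alpha}}\,\text{d}y$, whose uniform positivity in $x$ is the real content. I may assume $x\neq0$, i.e.\ $|x|\ge1$. Recall that $H_{n,1,\alpha,x}$ is supported on $\frac{1}{2n}<|y|<\tfrac12$ and equals $\sin^2(\pi x w/n)/|y|^{2\alpha}$ on the cell $B(w/n,\tfrac{1}{2n})$, so that $\int_{\mathbb{R}}H_{n,1,\alpha,x}\,\text{d}y=\sum_{w\neq0}\sin^2(\pi x w/n)\int_{B(w/n,1/(2n))}|y|^{-2\alpha}\,\text{d}y$. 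Comparing cell by cell with the continuum integrand and using $|\sin^2(\pi x w/n)-\sin^2(\pi x y)|\le|\sin(\pi x(w/n-y))|\le \pi|x|/(2n)\le C\delta$ on each cell, together with $\int_{|y|<1/2}|y|^{-2\alpha}\,\text{d}y<\infty$ (this is where $\alpha<\tfrac12$ enters), I would bound the total replacement error by $C\delta$. The absent central cell $w=0$ contributes $E_0:=\int_{|y|<1/(2n)}\frac{\sin^2(\pi x y)}{|y|^{2\alpha}}\,\text{d}y\le \pi^2x^2\int_{|y|<1/(2n)}|y|^{2-2\alpha}\,\text{d}y\le C\delta^2 n^{2\alpha-1}$, which tends to $0$ since $\alpha<\tfrac12$. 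Hence, using evenness of the integrand,
\[
    \int_{\mathbb{R}}H_{n,1,\alpha,x}(y)\,\text{d}y
\ge
    2\int_0^{1/2}\frac{\sin^2(\pi x y)}{y^{2\alpha}}\,\text{d}y
    -C\delta-C\delta^2 n^{2\alpha-1}.
\]

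It then remains to bound $\int_0^{1/2}\frac{\sin^2(\pi x y)}{y^{2\alpha}}\,\text{d}y$ below by a positive constant, uniformly over all real $x\ge1$ (in particular over all integers $1\le\|x\|\le\delta n$). This is the step I expect to be the main obstacle: one cannot bound $\sin^2(\pi x y)$ below pointwise, and as $x$ grows the oscillation becomes comparable to the grid spacing, so the estimate must come from an averaged bound, exactly the heuristic quoted before the statement. Following that heuristic I would split $[0,\tfrac12]$ into the full periods $[k/x,(k+1)/x]$ and, on the middle half of each period where $\sin^2(\pi x y)\ge\tfrac12$, bound $y^{-2\alpha}$ below by its value at the right endpoint; summing over $k$ gives
\[
    \int_0^{1/2}\frac{\sin^2(\pi x y)}{y^{2\alpha}}\,\text{d}y
\ge
    \frac{1}{4x}\sum_{0\le k\le x/2}\Big(\frac{x}{k+1}\Big)^{2\alpha}
=
    \frac{x^{2\alpha-1}}{4}\sum_{0\le k\le x/2}(k+1)^{-2\alpha}.
\]
Because $2\alpha<1$, the sum is of order $x^{1-2\alpha}$, so the product is bounded below by a constant independent of $x$. (Equivalently, writing $\sin^2=\tfrac{1-\cos}{2}$, the constant part yields $\frac{(1/2)^{1-2\alpha}}{2(1-2\alpha)}$, while $\int_0^{1/2}\frac{\cos(2\pi x y)}{y^{2\alpha}}\,\text{d}y\to0$ as $x\to\infty$ by Riemann--Lebesgue, the integrability of $y^{-2\alpha}$ being precisely the hypothesis $\alpha<\tfrac12$; combined with positivity for each fixed $x$ this gives a uniform lower bound $c_{1,\alpha}>0$.)

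Finally I would fix the constants in the correct order: first choose $\delta>0$ small enough that $C\delta\le c_{1,\alpha}/2$, and then $N$ large enough that $C\delta^2 n^{2\alpha-1}\le c_{1,\alpha}/2$ for all $n\ge N$ (possible since $2\alpha-1<0$). With these choices the displayed inequality yields $\int_{\mathbb{R}}H_{n,1,\alpha,x}(y)\,\text{d}y\ge c_{1,\alpha}>0$ for all $n\ge N$ and all $x\in\mathbb{Z}_n$ with $\|x\|\le\delta n$, which is the claim.
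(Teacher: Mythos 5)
Your proof is correct, and its skeleton matches the paper's: compare $H_{n,1,\alpha,x}$ with the continuum integrand $\sin^2(\pi x y)/|y|^{2\alpha}$, dispose of a neighbourhood of the origin separately, and finish with a lower bound on the continuum oscillatory integral that is uniform in $x$. But you implement two of the steps differently, and more completely. Where the paper invokes Lemma~\ref{lem-lower-bounds-gaussian-distance-4} to obtain the comparison only on the annulus $\frac{1}{8\|x\|}\le |y|\le \frac14$, with relative error $\varepsilon|y|^{-2\alpha}$, and must then show that the excised continuum piece $\int_0^{1/(8\|x\|)}$ vanishes like $x^{2\alpha-1}$, you bypass that lemma entirely: the cell-by-cell estimate $|\sin^2 a-\sin^2 b|\le|\sin(a-b)|\le|a-b|$ gives a clean additive error $C\delta$ after integrating $|y|^{-2\alpha}$ over the whole support --- legitimate precisely because $\alpha<\tfrac12$ makes $|y|^{-2\alpha}$ integrable in $d=1$ --- so you need only remove the single central cell, your bound $E_0\le C\delta^2 n^{2\alpha-1}$ playing the role of the paper's near-origin estimate. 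More importantly, you prove the step the paper merely asserts: the published proof ends by stating that the integral $\int_0^{1/4} r^{-2\alpha}\sin^2(\pi x r)\,\text{d}r$ ``is bounded below by a positive constant'' uniformly, whereas you supply two complete arguments (the period-by-period decomposition, and the neater $\sin^2=\frac{1-\cos}{2}$ plus Riemann--Lebesgue route, which again uses $y^{-2\alpha}\in L^1$ exactly when $\alpha<\tfrac12$). Two cosmetic points worth a sentence in a final write-up: the boundary cells near $|y|=\tfrac12$, where the tiling by $B(w/n,\tfrac{1}{2n})$ does not match $[-\tfrac12,\tfrac12]$ exactly, contribute only $O(1/n)$ since the integrand is bounded there; and the period decomposition as written requires $x\ge 2$ to fit full periods inside $[0,\tfrac12]$, with small $x$ handled by direct positivity and compactness --- both issues your Riemann--Lebesgue variant already absorbs.
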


\begin{proof}
Let $\varepsilon_1>0$ to be chosen later. By Lemma~\ref{lem-lower-bounds-gaussian-distance-4}, we can find positive constants $\delta, N>0$ such that
\[
    \Big |H_{n,1,\alpha,x}(y) -  \frac{\sin^2(\pi y \cdot x)}{\|y\|^{2\alpha}} \Big |
\le
    2\frac{\varepsilon_1}{\|y\|^{2\alpha}}
\]
for all $n \ge N$ and for all $x$ such that $\|x\|\le \delta n$ and for almost every $y$ in the annulus $ B_2(0, \frac{1}{4}) \backslash B_2(0, \frac{1}{8\|x\|})$.

Therefore, for $n \ge N$ and $x$ such that $\|x\|\le \delta n$, we have
\begin{align*}
    \int_{\mathbb{R}} H_{n,1,\alpha,x} (y) \text{d} y
&\ge
    \int_{\frac{1}{8\|x\|}\le \|y\| \le \frac{1}{4}} H_{n,1,\alpha,x} (y) \text{d} y
\\&\ge
    \int_{\frac{1}{8\|x\|}\le \|y\| \le \frac{1}{4}}
    (H_{\infty,1,\alpha,x} (y)-\varepsilon_1 2 \|y\|^{-2\alpha} )\text{d} y
\\&\ge
  \underbrace{ 2\int_{\frac{1}{8\|x\|}\le r \le \frac{1}{4}}
    r^{-2\alpha}(\sin^2(\pi x r) - \varepsilon_1)\text{d} r}_{I_{\alpha}(x)}
\\ & =
    2\int_{0 \le r \le \frac{1}{4}}
    r^{-2\alpha}\sin^2(\pi x r)\text{d} r
-
    2\varepsilon_1\int_{0 \le r \le \frac{1}{4}}
    r^{-2\alpha} \text{d} r
\\ & \qquad -
    2\int_{0 \le r \le \frac{1}{8\|x\|}}
    r^{-2\alpha}(\sin^2(\pi x r) - \varepsilon_1)\text{d} r.
\end{align*}

We first discuss the case $\alpha < \frac{1}{2}$.
The last integral converges to $0$ as $x \longrightarrow \infty$, the second is finite and the first is bounded below by a positive constant. Hence, if one chooses $\varepsilon_1$ small enough, one can show that the sum of the three integrals is bounded below by a positive constant (uniform in $n$ and $x$). For $\alpha=\frac{1}{2}$, we have that
\[
I_{\alpha}(x) \geq C \log(1 + \|x\|)
\]
where $C>0$ is some positive constant for $\varepsilon_1>0$ small enough.
\end{proof}

\begin{proposition}\label{prop-lower-bounds-gaussian-distance}
Let $d \ge 1$ and  $\alpha \in (0,2)$. There exist $\delta, N, C >0$ such that
\[
    \mathbb{E}[(\eta^\alpha(o) - \eta^\alpha(x))^2]
\ge
    C^{-1}\Psi_{d,\alpha}(n,\|x\|)
\]
for all $n \ge N$, $x \in \mathbb{Z}^d_n\backslash\{o\}$ satisfying $\|x\| \le \delta n$ and $\Psi_{d,\alpha}$ defined as in
\eqref{def-asymp-mean-u}.
\end{proposition}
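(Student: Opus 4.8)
The proposition is a synthesis of the lower bounds proved in Lemmas~\ref{lem-lower-bounds-gaussian-distance-1}, \ref{lem-lower-bounds-gaussian-distance-2}, \ref{lem-lower-bounds-gaussian-distance-5} and \ref{lem-lower-bounds-gaussian-distance-6}, so my plan is to run a case analysis over $(d,\alpha)$ organised by the position of $\alpha$ relative to the thresholds $\frac{d}{2}$ and $\frac{d}{2}+1$, citing the matching lemma in each region after checking that its conclusion reduces to $\Psi_{d,\alpha}(n,\|x\|)$. I would start from the equivalence \eqref{prop-upper-bounds-gaussian-distance-bound-equiv-H}, which gives $\mathbb{E}[(\eta^\alpha_0-\eta^\alpha_x)^2]\ge C^{-1}\int_{\mathbb{R}^d} H_{n,d,\alpha,x}(y)\,\mathrm dy$, so that every lower bound for the integral of $H_{n,d,\alpha,x}$ transfers, up to a constant, to the quantity we must bound.

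First consider the regime $\alpha>\frac{d}{2}$. Since $\alpha<2$, this forces $d\le 3$. For $d=1$ I would invoke Lemma~\ref{lem-lower-bounds-gaussian-distance-1}, whose final sum $\sum_{|w|<n/(2|x|)}|w|^{2-2\alpha}$ is convergent for $\alpha\in(\tfrac32,2)$, logarithmically divergent at $\alpha=\tfrac32=\tfrac{d}{2}+1$, and polynomially divergent for $\alpha\in(\tfrac12,\tfrac32)$; evaluating it reproduces exactly the top three branches of $\Psi_{1,\alpha}$. For $d\in\{2,3\}$ the single estimate of Lemma~\ref{lem-lower-bounds-gaussian-distance-2} already yields $\|x\|^{2\alpha-d}=\Psi_{d,\alpha}(n,\|x\|)$ throughout $\alpha\in(\tfrac{d}{2},2)$ (here $\tfrac{d}{2}+1\ge 2$, so only the $r^{2\alpha-d}$ branch occurs).

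Next consider $\alpha<\frac{d}{2}$, where $\Psi_{d,\alpha}\equiv 1$; for $d\ge 4$ this is the only possibility, since then $\frac{d}{2}\ge 2>\alpha$. For $d\ge 2$ I would apply Lemma~\ref{lem-lower-bounds-gaussian-distance-5} and for $d=1$ Lemma~\ref{lem-lower-bounds-gaussian-distance-6}. In the former case one checks that the bound $\int_{1/(8\|x\|)}^{1/4} r^{d-2\alpha-1}\,\mathrm dr$ has exponent $d-2\alpha-1>-1$, hence converges as $\|x\|\to\infty$ to the positive constant $\tfrac{(1/4)^{d-2\alpha}}{d-2\alpha}$; the latter case already delivers a uniform positive constant. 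In both cases the bound matches $\Psi=1$.

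The delicate point, and the main obstacle, is the threshold $\alpha=\frac{d}{2}$, whose value $\Psi_{d,\alpha}=\log\|x\|$ is not literally the conclusion of any single lemma. Here I would re-run the arguments above at the critical exponent: for $d\ge 2$ the Lemma~\ref{lem-lower-bounds-gaussian-distance-5} integral becomes $\int_{1/(8\|x\|)}^{1/4} r^{-1}\,\mathrm dr\asymp\log\|x\|$, where Lemma~\ref{lem-lower-bounds-gaussian-distance-3} supplies the uniform lower bound on the angular average that makes the radial reduction legitimate, and for $d=1$, $\alpha=\tfrac12$ I would use the same splitting as in Lemma~\ref{lem-lower-bounds-gaussian-distance-6} but now with the logarithmically divergent integral $\int r^{-1}\,\mathrm dr$. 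Finally, since for each fixed $(d,\alpha)$ only finitely many of these lemmas are invoked, I take $\delta_{d,\alpha}$, $N_{d,\alpha}$ and $C_{d,\alpha}$ to be, respectively, the smallest $\delta$, the largest $N$ and the largest constant appearing, which yields the stated uniform bound.
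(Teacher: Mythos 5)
Your proposal is correct and follows essentially the same route as the paper: the paper's entire ``proof'' of Proposition~\ref{prop-lower-bounds-gaussian-distance} is the single sentence that it follows from Lemmas~\ref{lem-lower-bounds-gaussian-distance-1}, \ref{lem-lower-bounds-gaussian-distance-2}, \ref{lem-lower-bounds-gaussian-distance-5} and \ref{lem-lower-bounds-gaussian-distance-6}, and your case analysis over the thresholds $\frac{d}{2}$ and $\frac{d}{2}+1$ is exactly the intended assembly, with each regime checked correctly against the branches of $\Psi_{d,\alpha}$ in \eqref{def-asymp-mean-u}.

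One point where you genuinely add to the paper rather than reproduce it: the critical case $\alpha=\frac{d}{2}$ (i.e.\ $(d,\alpha)\in\{(1,\tfrac12),(2,1),(3,\tfrac32)\}$), with claimed lower bound $\log\|x\|$, is indeed not covered by any of the four cited lemmas as stated --- Lemma~\ref{lem-lower-bounds-gaussian-distance-1} assumes $\alpha>\tfrac12$, Lemma~\ref{lem-lower-bounds-gaussian-distance-2} assumes $\alpha>\tfrac{d}{2}$, Lemma~\ref{lem-lower-bounds-gaussian-distance-5} assumes $\alpha<\tfrac{d}{2}$, and Lemma~\ref{lem-lower-bounds-gaussian-distance-6} assumes $\alpha<\tfrac12$ --- so the paper's one-line conclusion silently glosses over it. Your repair is sound: for $d\in\{2,3\}$, rerunning the Lemma~\ref{lem-lower-bounds-gaussian-distance-5} argument at the critical exponent, using Lemma~\ref{lem-lower-bounds-gaussian-distance-3} for the angular average and Lemma~\ref{lem-lower-bounds-gaussian-distance-4} for the approximation $H_{n,d,\alpha,x}\approx H_{\infty,d,\alpha,x}$, gives $\int_{c/\|x\|}^{1/4} r^{-1}\,\mathrm dr\asymp\log\|x\|$; for $d=1$, $\alpha=\tfrac12$ the only subtlety, which you should make explicit if writing this up, is that the mollification error $2\varepsilon_1\int_{1/(8\|x\|)}^{1/4}r^{-1}\,\mathrm dr$ is now itself of order $\varepsilon_1\log\|x\|$ rather than $O(1)$ as in Lemma~\ref{lem-lower-bounds-gaussian-distance-6}, so one must compare constants: the main term $\int_{1/(8\|x\|)}^{1/4}r^{-1}\sin^2(\pi xr)\,\mathrm dr$ is bounded below by $c\log\|x\|$ with $c$ close to the dyadic average $\tfrac12$ of $\sin^2$, and choosing $\varepsilon_1<c/2$ still wins. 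With that constant comparison spelled out, your proof is complete and, on this boundary case, more careful than the paper's.
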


The proof of the proposition is a combination of Lemmas~\ref{lem-lower-bounds-gaussian-distance-1}, \ref{lem-lower-bounds-gaussian-distance-2}, \ref{lem-lower-bounds-gaussian-distance-5} and \ref{lem-lower-bounds-gaussian-distance-6}.

\subsection{Proof of Theorem \ref{theorem-main-non-Gaussian}}
\label{subsec-proof-non-gaussian}
We will only present the proof for $\alpha \in (0,2)$ as the general case follows in a similar way. It will be divided into two parts (analogously to the proof of Theorem~2 in \cite{Cipriani2016}):
\begin{enumerate}
    \item We first prove convergence of finite dimensional distributions to the field $\Xi^{\alpha}$, that is  $\{(\Xi^{\alpha}_n,f)\}_{f \in \mathcal{F}}$ converges to  $\{(\Xi^{\alpha},f)\}_{f \in \mathcal{F}}$ for any finite collection $\mathcal{F}$ of test functions in the appropriate space.
    \item Secondly, we prove tightness of the law of  $\Xi^{\alpha}_n$. We will take advantage of a classical result given by Theorem~\ref{thm-rellich} which characterises compact embedding  of Sobolev spaces.
    \end{enumerate}

The main difference between the proof of Theorem~\ref{theorem-main-non-Gaussian} and Theorem~2 in \cite{Cipriani2016} is the asymptotics of the eigenvalues of $-(-\Delta)^{\sfrac{\alpha}{2}}_n$. In \cite{Cipriani2016}, the authors use Lemma 7 to bound the eigenvalues of the discrete Laplacian (up to the correct renormalisation) and with respect to its continuous counterpart. In particular, their lower-bound can be taken uniformly. However, in our case such bounds cannot be obtained in the same way. We rely on the asymptotic behaviour of the eigenvalues of $-(-\Delta)^{\sfrac{\alpha}{2}}_n$, as described throughout Subsection~\ref{subsec-estimates-eigenvalues}.

Moreover, once the comparison between the rescaled eigenvalues of the discrete fractional Laplacian and its continuous version is established, the rest of the proof follows easily for large values of $\alpha$ ($\alpha >1 $). However, for small values of $\alpha$ ($\alpha <1$ and in particular $\alpha<1/2$), the technical bounds necessary to make use of the dominated convergence theorem in the proof of finite-dimensional distributions has to be evaluated with more care. The rest of the proof follows in a similar way, with the analogous adaptations. However, we include its proof to keep the article more self-contained.

Note that, for all $k\ge1$ and $\theta_1, \cdots, \theta_k \in \mathbb{R}$, $f^{(1)}, \cdots, f^{(k)} \in C^\infty(\mathbb{T}^d)$,
\[
    ( \Xi_n^{\alpha}, \theta_1 f^{(1)} + \cdots + \theta_k f^{(k)} )
    \stackrel{d}{=}
    \theta_1  (\Xi_n^{\alpha}, f^{(1)})
    +\cdots+
    \theta_k (\Xi_n^{\alpha},f^{(k)}).
\]
Hence,  looking at the characteristic function we have
\begin{align*}
    \phi_{((\Xi_n^{\alpha},  f^{(1)}), \cdots, (\Xi_n^{\alpha}, f^{(k)}))}(\theta_1,\cdots, \theta_k)
&=
    \mathbb{ E}\Bigg[\exp\Bigg( i\Big( \theta_1 (\Xi_n^{\alpha}, f^{(1)})
    +\cdots+
    \theta_k (\Xi_n^{\alpha},f^{(k)}) \Big)\Bigg) \Bigg]
\\&=
    \phi_{(\Xi_n^{\alpha}, \theta_1 f^{(1)} + \cdots + \theta_k f^{(k)})}(1),
\end{align*}
therefore, it will be enough to study the distribution of a single coordinate of the field, that is $( \Xi_n^{\alpha},f )$.
By Proposition~\ref{prop:rescaledGauss} the odometer can be represented as
\begin{equation}
    u^{\alpha}_{\infty}(x)
\stackrel{d}{=}
    \eta^{\alpha}(x) - \min_{z \in \mathbb{Z}^d_n} \{ \eta^{\alpha}(z)\},
\end{equation}
for each $x\in \mathbb{Z}^d_n$, where
\begin{equation}\label{def-aux-subs-odometer}
\begin{split}
    \eta^{\alpha}(x)&= \sum_{y \in \mathbb{Z}^d_n}  g^{(\alpha)}(x,y)(s(y)-1)  \\
& = w_n(x) -\frac{1}{n^d}\sum_{y \in \mathbb{Z}^d_n }  g^{(\alpha)}(x,y)
    \sum_{z \in \mathbb{Z}^d_n }\sigma(z),
\end{split}
\end{equation}
where
\begin{equation*}
  w_n(x) :=
  \sum_{y \in \mathbb{Z}^d_n}  g^{(\alpha)}(x,y)\sigma(y).
\end{equation*}

Given a function $h_n: \mathbb{Z}^d_n \longrightarrow \mathbb{R}$, one can define
\begin{align*}
	\Xi^{\alpha}_{h_n}(x)
:=
    \tilde{c}^{(\alpha)} \sum_{z \in \mathbb{T}^d_n}
    n^{\frac{d-2\alpha}{2}}h_n(nz)\1_{B(z,\frac{1}{2n})}(x),
\qquad
    x \in \mathbb{T}^d
\end{align*}
and recall that we denoted by $\Xi^{\alpha}_n$ the field corresponding to $h_n=u^{\alpha}_{\infty}$ defined in \eqref{def-aproaching-field-eq} and $\tilde{c}^{(\alpha)}$ defined in \eqref{eq:ctilde}.
Then, for $f\in C^{\infty}(\mathbb{T}^d)$ such that
$\int_{\mathbb{T}^d}f(z) \text{d}z=0$, we have that
\[
	( \Xi^{\alpha}_{n}, f )  =
 	( \Xi^{\alpha}_{w_n}, f ),
\]
since the last sum in \eqref{def-aux-subs-odometer} is invariant and does not depend on $x$.
We prove convergence of all moments of $( \Xi^{\alpha}_{w_n}, f )$ first for $\sigma$'s for which all moments exist and then for the general case. 

\subsubsection{Convergence for weights with finite moments}
\label{subsec-proof-conv-bounded-weights}
In this section, we will prove the following theorem.

\begin{theorem}
    Assume that $(\sigma(x))_{x \in \mathbb{Z}^d_n}$ is a collection of i.i.d
    random variables such that  $\mathbb{E}[\sigma(x)]=0,~ \mathbb{E}[\sigma^2(x)]=1$
    and   $ \mathbb{E}[|\sigma(x)|^k] < \infty$ for all $k \in \mathbb{N}$. Let $d\ge 1$
    and $u^{\alpha}_{\infty}$ the odometer for the long-range divisible sandpile in
    $\mathbb{Z}^d_n $. Then the field $\Xi^{\alpha}_n$ defined
    in \eqref{def-aproaching-field-eq} converges
    weakly to $\Xi^{\alpha}$ as $n\rightarrow \infty$. The convergence holds in the same manner as in
    Theorem~\ref{theorem-main-non-Gaussian}.
\end{theorem}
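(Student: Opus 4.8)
The plan is to establish the one–dimensional statement $\langle\Xi^\alpha_{w_n},f\rangle\Rightarrow N(0,\|f\|^2_{-\alpha/2})$ for a fixed mean–zero $f\in C^\infty(\mathbb{T}^d)$ by the method of moments, and then to upgrade convergence of the finite–dimensional distributions to weak convergence in $\mathcal H_{-\varepsilon}$ through a tightness argument. By the reduction already carried out above (the Cram\'er--Wold step via characteristic functionals, together with the replacement of $u_n$ by $w_n$ valid on mean–zero test functions), and because $\langle\Xi^\alpha_{w_n},f\rangle=\sum_{x\in\mathbb{Z}^d_n}\sigma(x)\,c_{n,x}(f)$ is a linear combination of the i.i.d. weights with deterministic coefficients $c_{n,x}(f)$, it suffices to identify the limiting variance $\sigma_n^2:=\sum_x c_{n,x}(f)^2$ and to control the non–Gaussian cumulants of $\sigma$. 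The standing hypothesis $\mathbb{E}|\sigma|^k<\infty$ for all $k$ is exactly what makes the moment method available; the passage to the general case by truncation is deferred.

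First I would compute the variance in Fourier. Writing $\phi_n(y):=n^d\int_{B(y/n,1/2n)}f$ and applying Parseval on $\mathbb{Z}^d_n$, the Green's–function identity \eqref{identity-fourier-of-green} yields $\widehat{w_n}(w)=-\widehat\sigma(w)/\lambda^{(\alpha,n)}_w$ for $w\neq0$; combined with $\mathbb{E}[\widehat\sigma(w)\overline{\widehat\sigma(w')}]=n^{-d}\delta_{w,w'}$ this gives
\[
    \var\!\big(\langle\Xi^\alpha_{w_n},f\rangle\big)
    =\big(\tilde c^{(\alpha)}\big)^2\!\!\sum_{w\in\mathbb{Z}^d_n\setminus\{0\}}\!\!
    \frac{|\widehat{\phi_n}(w)|^2}{\big(n^\alpha\lambda^{(\alpha,n)}_w\big)^2}.
\]
Lemma~\ref{lem-bounds-on-eigenvalues-2} provides the pointwise limit $n^\alpha\lambda^{(\alpha,n)}_w\to-\tilde c^{(\alpha)}\|w\|^\alpha$, while Lemma~\ref{lem-bounds-on-eigenvalues-3} gives the uniform domination $(n^\alpha\lambda^{(\alpha,n)}_w)^{-2}\le C\|w\|^{-2\alpha}$. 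Together with $\widehat{\phi_n}(w)\to\widehat f(w)$ and the rapid decay of $\widehat f$ from $f\in C^\infty$, dominated convergence gives $\sigma_n^2\to\sum_{w\neq0}\|w\|^{-2\alpha}|\widehat f(w)|^2=\|f\|^2_{-\alpha/2}$, the prefactor $(\tilde c^{(\alpha)})^2$ cancelling the constant produced by the eigenvalue asymptotics.

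Next I would run the moment method on $S_n:=\sum_x\sigma(x)c_{n,x}(f)$. Expanding $\mathbb{E}[S_n^k]$ and using independence and $\mathbb{E}\sigma=0$, only set partitions of $\{1,\dots,k\}$ into blocks of size $\ge2$ contribute, a block of size $b$ carrying the factor $\mathbb{E}[\sigma^b]\sum_x c_{n,x}^b$ (up to lower–order coincidences between blocks). Perfect matchings produce exactly $(k-1)!!\,(\sigma_n^2)^{k/2}$ for even $k$ and $0$ for odd $k$, i.e. the moments of $N(0,\sigma_n^2)$. Any partition with a block of size $b\ge3$ contributes at most $(\max_x|c_{n,x}|)^{k-2r}(\sigma_n^2)^r$ with $r<k/2$, using $\sum_x|c_{n,x}|^b\le(\max_x|c_{n,x}|)^{b-2}\sigma_n^2$, and hence vanishes once the Lyapunov–type smallness $\max_x|c_{n,x}|\to0$ is in force; this I would extract from a uniform Fourier estimate on $c_{n,x}(f)$. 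Since the Gaussian law is moment–determinate, $S_n\Rightarrow N(0,\|f\|^2_{-\alpha/2})$, giving convergence of the finite–dimensional distributions.

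Finally, tightness in $\mathcal H_{-\varepsilon}$ follows from the compact embedding $\mathcal H_{-\varepsilon'}\hookrightarrow\mathcal H_{-\varepsilon}$ for $\varepsilon'<\varepsilon$ (Theorem~\ref{thm-rellich}): it suffices to bound $\sup_n\mathbb{E}\|\Xi^\alpha_n\|^2_{\mathcal H_{-\varepsilon'}}$, which upon expanding in the Fourier basis reduces to sums of the type $\sum_w\|w\|^{-4\varepsilon'}(n^\alpha\lambda^{(\alpha,n)}_w)^{-2}$, controlled uniformly in $n$ by Lemma~\ref{lem-bounds-on-eigenvalues-3} and summable for $\varepsilon'$ in the range dictated by \eqref{ineq-epsilon}. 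Convergence of finite–dimensional distributions together with tightness then yields weak convergence in $\mathcal H_{-\varepsilon}$. The hard part, as already anticipated, is the dominated–convergence step in the variance computation for small $\alpha$ (the regime $\alpha<\tfrac12$): there $\|w\|^{-2\alpha}$ decays too slowly for $\|w\|^{-2\alpha}$ alone to be summable, so the discrepancy between $\widehat{\phi_n}$ and $\widehat f$ in the high–frequency shell $\|w\|\sim n$ must be shown negligible, and this is precisely where the sharp eigenvalue asymptotics of Subsection~\ref{subsec-estimates-eigenvalues}, together with verifying $\max_x|c_{n,x}|\to0$, are indispensable.
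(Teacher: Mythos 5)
Your architecture coincides with the paper's (Cram\'er--Wold reduction to a single test function, moment method in which pair partitions produce $(2m-1)!!$ and larger blocks vanish, Rellich-based tightness), and your Lyapunov-style treatment of blocks of size $\ge 3$ is sound: the smallness $\max_x|c_{n,x}(f)|\le Cn^{-d/2}$ you need does follow from the eigenvalue lower bound $|\lambda^{(\alpha,n)}_w|\ge cn^{-\alpha}$ (Lemma~\ref{lem-bounds-on-eigenvalues-0} with $\|w\|\ge1$) together with $n^d\sum_w|\widehat{\mathcal{T}_n}(w)|\le\mathcal{M}$ (Lemma~\ref{lemma-control-fourier-map-T}), which are exactly the two ingredients the paper combines to kill large blocks, merely organized differently. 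The genuine gap is in the variance step. ``Pointwise convergence $\widehat{\phi_n}(w)\to\widehat f(w)$ plus rapid decay of $\widehat f$'' is not a dominated convergence argument: you need a dominating function \emph{uniform in $n$}, i.e.\ a decay estimate on $\widehat{\phi_n}(w)$ itself valid up to frequencies $\|w\|\sim n$, and you never produce one. When $d<2\alpha$ the weight $\|w\|^{-2\alpha}$ is summable and DCT goes through directly (the paper's easy case); in the complementary regime $d\ge 2\alpha$ --- which is \emph{all} of $\alpha\in(0,2)$ once $d\ge 4$, not just $\alpha<\tfrac12$ as you state --- the interchange of limits is the crux of the whole proof. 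The paper closes it by mollification: inserting $\widehat{\rho_\kappa}(w)$, passing to the limit in the mollified sum, and controlling the error via $|1-\widehat{\rho_\kappa}(w)|\le C\kappa\|w\|$, with the refinement $|1-\widehat{\rho_\kappa}(w)|\le C\min\{\kappa\|w\|,1\}$ needed precisely when $\alpha<\tfrac12$ --- that is where the $\tfrac12$ threshold actually enters, not where DCT fails. Your sketch names the difficulty but supplies no mechanism; to avoid mollifiers you would need something like an aliasing/Poisson-summation bound $|\widehat{\phi_n}(w)|\le C_k(1+\|w\|)^{-k}$ uniformly in $n$ for $w\in\mathbb{Z}^d_n$, plausible for $C^\infty$ data but requiring proof, and it must cover the box-averaged coefficients (your $\phi_n$ silently bundles the correction $K_n(f)$ that the paper isolates and disposes of separately in Proposition~\ref{prop-rest-is-negligible}).

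The same omission resurfaces in your tightness step. Expanding $\Xi^{\alpha}_{w_n}$ in the continuum basis does \emph{not} reduce $\mathbb{E}\|\Xi^{\alpha}_{w_n}\|^2_{\mathcal H_{-\varepsilon'}}$ to sums of the form $\sum_w\|w\|^{-4\varepsilon'}(n^\alpha\lambda^{(\alpha,n)}_w)^{-2}$: each continuum frequency $\nu\in\mathbb{Z}^d$ couples to \emph{all} discrete frequencies $w\in\mathbb{Z}^d_n$ through the coefficients $F_{n,\nu}(x)=\int_{B(x,\frac{1}{2n})}\phi_\nu(z)\,\text{d}z$, and the substance of the paper's argument is Claim~\ref{claim-bound-for-tightness}, a bound on the inner sum uniform in both $\nu$ and $n$, whose proof again runs through the mollifier estimates \eqref{eq-decay-of-mollifier} and \eqref{eq-error-fourier-of-mollifier-2}. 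Once that uniform bound is in hand, summability of $\sum_\nu\|\nu\|^{-2\varepsilon}$ forces $\varepsilon>\frac{d}{2}$, a constraint visible in the statement of Theorem~\ref{theorem-main-non-Gaussian} but absent from your sketch, which cites only the AWS condition \eqref{ineq-epsilon}. In short: the two places where you defer to ``the sharp eigenvalue asymptotics'' are exactly the places where Lemmas~\ref{lem-bounds-on-eigenvalues-0}--\ref{lem-bounds-on-eigenvalues-3} alone do not suffice and the smoothing argument is indispensable, so as written the proposal has a real hole at the central analytic step.
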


First let us prove the following proposition.
\begin{proposition}\label{prop-convergence-moments-bounded-case}
    Assume $\mathbb{E}[\sigma(x)]=0,~ \mathbb{E}[\sigma^2(x)]=1$ and that
    $ \mathbb{E}[|\sigma(x)|^k] < \infty$ for all $k \in \mathbb{N}$ and $x\in \mathbb{Z}^d_n$. Then for all $m \ge 1$
    and for all $f \in C^\infty (\mathbb{T}^d)$ with zero mean, the following
    limit holds:
    \begin{equation}
        \lim_{n \to \infty} \mathbb{E}[(\Xi^{\alpha}_{w_n}, f )^m ]
        =
        \begin{cases}
            (2m-1)!!\|f \|_{-\frac{\alpha}{2}}^m, & m \in 2\mathbb{N} \\
            0, & m \in 2\mathbb{N} +1.
        \end{cases}
    \end{equation}
\end{proposition}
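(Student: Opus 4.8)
The plan is to fix a mean-zero $u\in C^\infty(\mathbb{T}^d)$ and to reduce $\langle\Xi^\alpha_{w_n},u\rangle$ to a weighted sum of the i.i.d.\ weights, after which the statement becomes a pure moment computation. Writing $\bar u_n(y):=\int_{B(y/n,\,1/(2n))}u(x)\,\mathrm dx$ for $y\in\mathbb{Z}^d_n$ and inserting $w_n(y)=\sum_x g^{(\alpha)}(x,y)\sigma(x)$ into the definition \eqref{def-aproaching-field-eq}, one obtains
\[
    \langle\Xi^\alpha_{w_n},u\rangle
    =\sum_{x\in\mathbb{Z}^d_n}\sigma(x)\,a^{(n)}_x,
    \qquad
    a^{(n)}_x:=\tilde c_\alpha\,n^{\frac{d-2\alpha}{2}}\sum_{y\in\mathbb{Z}^d_n}g^{(\alpha)}(x,y)\,\bar u_n(y),
\]
a linear combination of independent mean-zero variables with deterministic coefficients $a^{(n)}_x$. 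Expanding the $m$-th moment over multi-indices $(x_1,\dots,x_m)$ and grouping according to the partition of $\{1,\dots,m\}$ induced by the coincidences $x_i=x_j$, the factor $\mathbb{E}[\sigma]=0$ annihilates every partition with a singleton block, so only partitions whose blocks all have size $\ge 2$ survive. Grouping equal indices and momentarily dropping the constraint that distinct blocks take distinct values, the contribution of a partition with block sizes $r_1,\dots,r_\ell\ge 2$ factorizes as $\prod_{s=1}^{\ell}\big(\mathbb{E}[\sigma^{r_s}]\sum_{x}(a^{(n)}_x)^{r_s}\big)$; reinstating the constraint only subtracts coarser partitions, again of this form but with a merged block of size $\ge 4$.

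Everything then hinges on two analytic inputs: the variance limit
\[
    V_n:=\sum_{x\in\mathbb{Z}^d_n}\big(a^{(n)}_x\big)^2\xrightarrow[n\to\infty]{}\|u\|_{-\frac{\alpha}{2}}^2,
    \qquad\text{and the negligibility}\qquad
    \max_{x\in\mathbb{Z}^d_n}\big|a^{(n)}_x\big|\xrightarrow[n\to\infty]{}0 .
\]
Granting these, the elementary interpolation $\sum_x|a^{(n)}_x|^{r}\le(\max_x|a^{(n)}_x|)^{\,r-2}V_n$ shows that any block of size $r\ge 3$ contributes $o(1)$, so in the limit only perfect matchings survive. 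For $m=2k$ there are $(2k-1)!!$ of them, each contributing $(\mathbb{E}[\sigma^2])^{k}V_n^{k}\to\|u\|_{-\frac{\alpha}{2}}^{2k}$, while for odd $m$ no partition with all blocks of size $\ge 2$ consists solely of pairs, so every surviving partition carries a block of size $\ge 3$ and the whole moment tends to $0$. This gives exactly $(2m-1)!!\,\|u\|_{-\frac{\alpha}{2}}^m$ and $0$.

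The two inputs are obtained by Fourier analysis on $\mathbb{Z}^d_n$ combined with the eigenvalue estimates of Subsection~\ref{subsec-estimates-eigenvalues}. By \eqref{identity-fourier-of-green} the discrete Fourier coefficients of $g^{(\alpha)}$ have modulus $(n^d|\lambda^{(\alpha,n)}_w|)^{-1}$, so Parseval yields, with $U_n(w):=n^d\widehat{\bar u_n}(w)\to\widehat u(w)$ (a Riemann-sum approximation of the continuum coefficient vanishing at $w=0$ because $u$ has mean zero),
\[
    V_n=\tilde c_\alpha^2\sum_{w\in\mathbb{Z}^d_n\setminus\{0\}}\frac{|U_n(w)|^2}{\big(n^\alpha\lambda^{(\alpha,n)}_w\big)^2}.
\]
By Lemma~\ref{lem-bounds-on-eigenvalues-2}, $n^\alpha\lambda^{(\alpha,n)}_w\to-\tilde c^{(\alpha)}\|w\|^\alpha$ pointwise, and by Lemma~\ref{lem-bounds-on-eigenvalues-3}, $(n^\alpha\lambda^{(\alpha,n)}_w)^{-2}\le C\|w\|^{-2\alpha}$, so the summand is dominated by $C|U_n(w)|^2\|w\|^{-2\alpha}$; since $u$ is smooth one shows $|U_n(w)|$ decays faster than any polynomial uniformly in $n$ (the delicate point, see below), making this summable. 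Dominated convergence then gives $V_n\to\tilde c_\alpha^2(\tilde c^{(\alpha)})^{-2}\sum_{w\ne0}|\widehat u(w)|^2\|w\|^{-2\alpha}=\|u\|_{-\frac{\alpha}{2}}^2$ for the constant $\tilde c_\alpha=\tilde c^{(\alpha)}$ fixed in \eqref{continuous-eigenvalue}. For the negligibility, Fourier inversion of $a^{(n)}_x$ together with the lower bound $|\lambda^{(\alpha,n)}_w|\ge C^{-1}\|w\|^\alpha n^{-\alpha}$ of Lemma~\ref{lem-bounds-on-eigenvalues-0} gives $|a^{(n)}_x|\le C\,n^{-d/2}\sum_{w\ne0}\|w\|^{-\alpha}|\widehat u(w)|$ uniformly in $x$, and the last sum converges by smoothness, so $\max_x|a^{(n)}_x|=O(n^{-d/2})\to0$.

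I expect the variance limit, and specifically the dominated-convergence step, to be the main obstacle, exactly as flagged before Subsection~\ref{subsec-proof-conv-bounded-weights}. The combinatorial reduction and the negligibility bound are robust, but for small $\alpha$ (in particular $\alpha<\tfrac12$) the weight $\|w\|^{-2\alpha}$ decays so slowly that the high-frequency tail $\|w\|\sim n$ dominates the sum defining $V_n$; there one cannot merely invoke $U_n(w)\to\widehat u(w)$ pointwise but must control $|U_n(w)|$, and the discrepancy $U_n(w)-\widehat u(w)$ produced by the cell averaging, uniformly in $n$ up to the top frequencies, via discrete summation by parts exploiting all derivatives of $u$. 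Establishing this uniform rapid decay is what legitimises the interchange of limit and sum.
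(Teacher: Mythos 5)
Your proposal is correct, but on both key analytic steps it takes a genuinely different route from the paper's, so a comparison is worthwhile. For the variance limit, the paper first separates $\mathbb{E}[\langle \Xi^{\alpha}_{w_n},f\rangle^2]$ into a main term and cell-averaging error terms via the kernel $K_n$ of \eqref{def-error-K} (Propositions~\ref{prop-convariances-converge} and \ref{prop-rest-is-negligible}), and then, since $\sum_{w\ne 0}\|w\|^{-2\alpha}$ diverges for $d\ge 2\alpha$, passes to the limit through a mollifier $\widehat{\rho_\kappa}$ with a double limit $n\to\infty$, $\kappa\to 0^+$ as in \eqref{eq-mollified-goes-to-norm}, including the refined bound $|\widehat{\rho_\kappa}(w)-1|\le C\min\{\kappa\|w\|,1\}$ needed when $\alpha<\tfrac12$. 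You instead absorb the cell averaging into the single coefficient $U_n(w)=n^d\widehat{\bar u_n}(w)$ and dominate the whole Parseval sum by $C_k(1+\|w\|)^{-2k}\|w\|^{-2\alpha}$, handling all $\alpha\in(0,2)$ in one application of dominated convergence with no mollification. The step you flag as delicate --- rapid decay of $|U_n(w)|$ uniformly in $n$ up to $\|w\|\asymp n$ --- is genuinely provable by exactly the method you name: iterating the discrete Laplacian (summation by parts) gives $|\widehat{u_n}(w)|\le \|\Delta_n^k u_n\|_\infty\,|\mu^{(n)}_w|^{-k}\le C_k\|u\|_{C^{2k}}\|w\|^{-2k}$, since the eigenvalue $\mu^{(n)}_w$ of $\Delta_n$ satisfies $|\mu^{(n)}_w|\ge c\|w\|^2/n^2$ for $w\in\mathbb{Z}^d_n$ while $\|\Delta_n^k u_n\|_\infty\le Cn^{-2k}$; equivalently one can use Poisson-summation aliasing, and the cell average relative to pointwise sampling only improves the bound. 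For the higher moments, the paper controls each block of size $k>2$ through the $\ell^1$ Fourier estimate of Lemma~\ref{lemma-control-fourier-map-T}, whereas you use the Lyapunov interpolation $\sum_x |a^{(n)}_x|^r\le(\max_x|a^{(n)}_x|)^{r-2}V_n$ together with $\max_x|a^{(n)}_x|=O(n^{-d/2})$; both yield the same $n^{-d(k/2-1)}$ decay per large block, but your version is the standard CLT-type argument and is arguably cleaner. Two small repairs: in your negligibility display the sum should run over $|U_n(w)|$ rather than $|\widehat u(w)|$ (harmless given the uniform decay), and your inclusion--exclusion remark should be read as saying that the distinct-value sum is a signed M\"obius combination of unconstrained sums over coarsenings, each coarsening containing a block of size at least $4$, so the signs are irrelevant because every such term is $o(1)$ in absolute value. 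Both your argument and the paper's ultimately rest on the same eigenvalue inputs, Lemmas~\ref{lem-bounds-on-eigenvalues-0}, \ref{lem-bounds-on-eigenvalues-2} and \ref{lem-bounds-on-eigenvalues-3}.
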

\begin{proof}
For $f \in C^\infty (\mathbb{T}^d)$ define the map $T_n: \mathbb{T}^d \longrightarrow \mathbb{R}$ by

\begin{equation}\label{def-function-T}
	 z \longmapsto \int_{B(z, \frac{1}{2n})}    f(y) \text{d} y.
\end{equation}

\noindent \textbf{Case} $m=2$: We have the equality
\begin{align*}
    \mathbb{E}[w_n(y)w_n(y^\prime)]
&=
    \sum_{x \in \mathbb{Z}^d_n }     g^{(\alpha)}(y,x)
    \sum_{x^\prime \in \mathbb{Z}^d_n } g^{(\alpha)}(x^\prime,y^\prime)
    \mathbb{E}[\sigma(x)\sigma(x^\prime)]
\\&=
    \sum_{x \in \mathbb{Z}^d_n }     g^{(\alpha)}(y,x)
    g^{(\alpha)}(x,y^\prime).
\end{align*}
This implies that
\[
    \mathbb{E} [(\Xi^{\alpha}_{w_n},f )^2 ]
=
    (\tilde{c}^{(\alpha)})^2 n^{d-2\alpha}\sum_{x \in \mathbb{Z}^d_n}
    \Big( \sum_{z \in \mathbb{T}^d_n }     g^{(\alpha)}(x,nz)T_n(z)\Big)^2.
\]
We now use that, analogously to the proof of Proposition 4 in \cite{Cipriani2016},
\begin{equation}\label{prop-convergence-moments-bounded-case-eq-1}
\begin{split}  
  \sum_{x \in \mathbb{Z}^d_n}     g^{(\alpha)}(x,y)g^{(\alpha)}(x,y^\prime)
& = n^d \widehat{g^{\alpha}}(o,y) \widehat{g^{\alpha}}(o,y') + n^d \sum_{x\in \mathbb{Z}^d_n \backslash \{o\}} \widehat{g^{\alpha}}(x,y) \widehat{g^{\alpha}}(x,y')\\
    &=n^dL^2 + C^{(\alpha)}_n(y,y^\prime),
\end{split}
\end{equation}
where $L=\widehat{g^{\alpha}}(o,\cdot)$ is constant. The term $n^d L^2$ can be dealt with by defining a common Gaussian random variable, independent of the rest of the field, with mean zero and variance $n^d L^2$.
This common random variable will not matter as we are restricting ourselves to mean zero functions.
The second part of \eqref{prop-convergence-moments-bounded-case-eq-1} can be written as
\begin{equation}\label{def-of-covariance}
        C^{(\alpha)}_n(x,y) := 
        \frac{1}{n^d} \sum_{z \in \mathbb{Z}^d_n \backslash \{0\}}
        \frac{\exp(2\pi i (y-x)\cdot\frac{z}{n})}{(\lambda^{(\alpha,n)}_{z})^2}.
\end{equation}
Hence,
\begin{align*}
    \mathbb{E} [(\Xi^{\alpha}_{w_n},f )^2 ]
&=
    (\tilde{c}^{(\alpha)})^2 n^{d-2\alpha}\sum_{z,z^\prime \in \mathbb{T}^d_n}
    C^{(\alpha)}_n(nz,nz^\prime) T_n(z)T_n(z^\prime)
\\&=
    (\tilde{c}^{(\alpha)})^2 n^{d-2\alpha}\sum_{z,z^\prime  \in \mathbb{T}^d_n}
    C^{(\alpha)}_n(nz,nz^\prime)
    \int_{B(z,\frac{1}{2n} )}
\!\!\!\!\!
    f(x) \text{d}x
    \int_{B(z^\prime,\frac{1}{2n} )}
\!\!\!\!\!
    f(x^\prime) \text{d}x^\prime.
\end{align*}
Our strategy will be to divide the above sum in three parts:
\begin{align*}
    \mathbb{E} [(\Xi^{\alpha}_{w_n},f )^2 ]
&=
	(\tilde{c}^{(\alpha)})^2 n^{-d-2\alpha}\sum_{z,z^\prime \in \mathbb{T}^d_n}
    C^{(\alpha)}_n(nz,nz^\prime)     f(z)f(z^\prime)
\\&+
    (\tilde{c}^{(\alpha)})^2 n^{-d-2\alpha}\sum_{z,z^\prime \in \mathbb{T}^d_n}
    C^{(\alpha)}_n(nz,nz^\prime)K_n(f)(z)K_n(f)(z^\prime)
\\&+
	2(\tilde{c}^{(\alpha)})^2 n^{-d-2\alpha}\sum_{z,z^\prime \in \mathbb{T}^d_n}
    C^{(\alpha)}_n(nz,nz^\prime) f(z)K_n(f)(z^\prime),
\end{align*}
where $K_n$ is defined as
\begin{equation}\label{def-error-K}
    K_n(f)(z):=n^d \Big[ \int_{B(z,\frac{1}{2n})} (f(x)-f(z)) \text{d}x\Big].
\end{equation}
Using Propositions~\ref{prop-convariances-converge} and \ref{prop-rest-is-negligible} and Cauchy-Schwarz inequality we will prove that
\[
    \lim_{n \to \infty} \mathbb{E} [(\Xi^{\alpha}_{w_n},f)^2] =
    \|f\|^2_{-\frac{\alpha}{2}},
\]
concluding the proof for the case $m=2$.
\begin{proposition}\label{prop-convariances-converge} For any
$f \in C^{\infty}(\mathbb{T}^d)$ with
$\int_{\mathbb{T}^d} f(x) \text{d} x =0$, we have that
\begin{align*}
    \lim_{n \to \infty}
    (\tilde{c}^{(\alpha)})^2
    n^{-d-2\alpha}
    \sum_{z,z^\prime \in \mathbb{T}^d_n}f(z)f(z^\prime)
    C^{(\alpha)}_n(nz,nz^\prime)
&=
    \| f\|^2_{-\frac{\alpha}{2}}.
\end{align*}
\end{proposition}

\begin{proposition}\label{prop-rest-is-negligible}
For any $f\in C^{\infty}(\mathbb{T}^d)$ with
$\int_{\mathbb{T}^d} f(x) \text{d} x =0$,
    \[
        \lim_{n \to \infty}
	(\tilde{c}^{(\alpha)})^2 n^{-d-2\alpha}\sum_{z,z^\prime \in \mathbb{T}^d_n}
        C^{(\alpha)}_n(nz,nz^\prime)K_n(f)(z)K_n(f)(z^\prime)
    =
        0.
    \]
\end{proposition}

We will first prove Proposition~\ref{prop-rest-is-negligible}, it is an easy consequence of the following lemma.

\begin{lemma}\label{lem-aux-rn-vanishes-in-l2-1}
    There exists a constant $C >0$ such that
    $\sup_{z \in \mathbb{T}^{d}} |K_{n}(f)(z)| \leq \frac{C}{n}$.
\end{lemma}
\begin{proof}
    Using the mean value inequality, we have that there exists
    $c_{x,z} \in (0,1)$ such that
    \begin{align*}
        |K_{n}(f)(z)|
    & \le
        n^{d} \int_{B (z,\frac{1}{2n})}|f(x)-f(z)|\text{d} x
    \\ & \le
        n^{d} \int_{B\ (z,\frac{1}{2n})} \|\nabla f(c_{x,z}x+(1-c_{x,z})z)\|\|z-x\| \text{d} x
    \\ & \le
        C \frac{n^d}{2n}
        \int_{B\ (z,\frac{1}{2n})} \|\nabla f(c_{x,z}x+(1-c_{x,z})z)\|\text{d}x \le
        \frac{C}{2n}  \|\nabla f (\cdot)\|_{\mathbb{T}^d}.
    \end{align*}
    The lemma follows from the fact that
    $\|\nabla f (\cdot)\|_{\mathbb{T}^d} < \infty$.
\end{proof}

\begin{proof}[Proof of Proposition~\ref{prop-rest-is-negligible}]
Let $K^\prime_n(f)(z)=K_n(f)(\frac{z}{n})$ and write
\begin{align*}
&
 (\tilde{c}^{(\alpha)})^2 n^{-2d} \sum_{z,z^{\prime} \in \mathbb{T}_{n}^d }n^{d-2\alpha}
    C^{(\alpha)}_n(nz,nz^{\prime})K_{n}(f)(z)K_{n}(f)(z^{\prime})
\\& = (\tilde{c}^{(\alpha)})^2 n^{-2d} \sum_{z,z^{\prime} \in \mathbb{T}_{n}^d }
     \sum_{w \in \mathbb{Z}^d_n \backslash \{o\}}
        \frac{\exp(2\pi i (z-z^\prime)\cdot w )}{
        (n^{\alpha}\lambda^{(\alpha,n)}_{w})^2}
        K_{n}(f)(z)K_{n}(f)(z^{\prime})
\\& \!\!\! \!\!\! \! \!\stackrel{\text{Lemma}~\ref{lem-bounds-on-eigenvalues-0}}\le
    c
    \sum_{w \in \mathbb{Z}^d_n\backslash\{o\}}
    |\widehat{K^\prime_n(f)}(w)|^2,
\end{align*}
where, we used that $\alpha > 0$ and $\|w\|\ge 1$. Notice that
\begin{align*}
    \sum_{w \in \mathbb{Z}^d_n\backslash\{o\}} |\widehat{K^\prime_n(f)}(w)|^2
&\le
    \sum_{w \in \mathbb{Z}^d_n}|\widehat{K^\prime_n(f)}(w)|^2 \le n^{-d}
    \sum_{w \in \mathbb{Z}^d_n} |K^\prime_n(f)(w)|^2
\\ & \le
    \|K_n(f)\|_{\mathbb{T}^d}^2 \le
    \frac{C}{n^2}.
\end{align*}
This completes the proof of Proposition~\ref{prop-rest-is-negligible}.
\end{proof}
\begin{proof}[Proof of Proposition~\ref{prop-convariances-converge}]
To prove Proposition~\ref{prop-convariances-converge} we will rely on information about the speed of convergence of the eigenvalues $\lambda^{(\alpha,n)}_w$, proven in Proposition~\ref{lem-bounds-on-eigenvalues-2} in Subsection~\ref{subsec-estimates-eigenvalues}. Notice that

\begin{align}
\nonumber\label{prop-convariances-converge-eq-1}
    \lim_{n \to \infty}
&
    n^{-d-2\alpha}
    (\tilde{c}^{(\alpha)})^2\sum_{z,z^\prime \in \mathbb{T}^d_n}f(z)f(z^\prime)
    C^{(\alpha)}_n(nz,nz^\prime)
\nonumber\\&=
    \lim_{n \to \infty}  n^{-2d}
    (\tilde{c}^{(\alpha)})^2\sum_{z,z^\prime \in \mathbb{T}^d_n}f(z)f(z^\prime)
    \sum_{w \in \mathbb{Z}^d_n \backslash \{o\}}
    \frac{\exp(2\pi i (z-z^\prime)\cdot w )
    }{(n^\alpha \lambda^{(\alpha,n)}_{w})^2}
\nonumber\\&  =
    \lim_{n \to \infty}  n^{-2d}
    (\tilde{c}^{(\alpha)})^2\sum_{z,z^\prime \in \mathbb{T}^d_n}f(z)f(z^\prime)
    \Big(
    \sum_{w \in \mathbb{Z}^d_n \backslash \{o\}}
    \frac{\exp(2\pi i (z-z^\prime)\cdot w )
    }{(\tilde{c}^{(\alpha)})^2 \|w\|^{2\alpha}}
\nonumber\\&\qquad+
    2\exp(2\pi i (z-z^\prime)\cdot w )
    \Big(
        \frac{1}{n^\alpha\lambda^{(\alpha,n)}_w}-\frac{1}{\tilde{c}^{(\alpha)}\|w\|^\alpha}
    \Big)\frac{1}{\tilde{c}^{(\alpha)}\|w\|^\alpha}
\nonumber\\&\qquad+
\nonumber        \exp(2\pi i (z-z^\prime)\cdot w ) \Big(
        \frac{1}{n^\alpha\lambda^{(\alpha,n)}_w}-\frac{1}{\tilde{c}^{(\alpha)}\|w\|^\alpha}
    \Big)^2\Big) \\
& = I + II + III.
\end{align}

However, we will show that the last two summands  are irrelevant. Recall that $f_n: \mathbb{Z}^d_n \longrightarrow \mathbb{R} $ was defined as
$f_n(z) = f (\frac{z}{n} ) $. First we will prove that the third term is irrelevant.

\noindent {\underline{\textit{Case $\alpha \in (1, 2) $:}}} We have that

\begin{align*}
     \Big| (\tilde{c}^{(\alpha)})^2 n^{-2d}&
\!\!
    \sum_{z,z^\prime \in \mathbb{T}^d_n}
\!\!
    f(z)f(z^\prime)
\!\!
    \sum_{w \in \mathbb{Z}^d_n \backslash \{o\}}
    \exp \Big(2\pi i (z-z^\prime)\cdot w \Big)
    \Big(
    \frac{1}{n^\alpha\lambda^{(\alpha,n)}_w}-\frac{1}{\tilde{c}^{(\alpha)}\|w\|^\alpha}
    \Big)^2 \Big|
\\& \!\!\!\stackrel{\eqref{lem-bounds-on-eigenvalues-2-eq-1}}{\le}
   c \frac{1}{n^{4-2\alpha}}\sum_{w \in \mathbb{Z}^d_n \backslash \{0\}}
    \frac{|\widehat{f_n}(w)|^2 }{\|w \|^{4\alpha-4}}
\\& \overset{\|w\| \geq 1}\le
    c \frac{1}{n^{4-2\alpha}}
    \sum_{w \in \mathbb{Z}^d_n }
    |\widehat{f_n}(w)|^2=
   c \frac{1}{n^{4-2\alpha}}
     \frac{1}{n^d} \sum_{w \in  \mathbb{T}^d_n }
    |f(w)|^2
\end{align*}
where in the last equality we used Parseval's identity.
As
$\frac{1}{n^d} \sum_{w \in \frac{1}{n} \mathbb{Z}^d_n }|f(w)|^2
\longrightarrow \int_{\mathbb{T}^d} |f(z)|^2 \text{d}z$,  the last term in the
above expression vanishes as $n \longrightarrow \infty$.

\noindent {\underline{\textit{Case $\alpha \in (\frac{1}{2}, 1)$:}}} The proof follows analogously to the
previous one, we look at the third term in the brackets of expression
\eqref{prop-convariances-converge-eq-1} to get

\begin{align*}
    \Big|(\tilde{c}^{(\alpha)})^2 n^{-2d}&
\!\!
    \sum_{z,z^\prime \in \mathbb{T}^d_n}
\!\!
    f(z)f(z^\prime)
\!\!
    \sum_{w \in \mathbb{Z}^d_n \backslash \{o\}}
    \exp\Big(2\pi i (z-z^\prime)\cdot w \Big)
    \Big(
    \frac{1}{n^\alpha\lambda^{(\alpha,n)}_w}-\frac{1}{\tilde{c}^{(\alpha)}\|w\|^\alpha}
    \Big)^2 \Big|
\\& \!\!\!\stackrel{\eqref{lem-bounds-on-eigenvalues-2-eq-1}}{\le}
    c \frac{1}{n^{2-2\alpha}}\sum_{w \in \mathbb{Z}^d_n \backslash \{o\}}
    \frac{|\widehat{f_n}(w)|^2 }{\|w \|^{4\alpha-2}} \le
    c \frac{1}{n^{2-2\alpha}}
     \frac{1}{n^d} \sum_{w \in  \mathbb{T}^d_n }
    |f(w)|^2 \longrightarrow 0,
\end{align*}
as $n\rightarrow \infty$ using the same reasoning as before.

\noindent {\underline{\textit{Case $\alpha \in (0,\frac{1}{2}]$:}}} In this case we write the third term of
\eqref{prop-convariances-converge-eq-1} as

\begin{align*}
 \Big|(\tilde{c}^{(\alpha)})^2 n^{-2d}&
\!\!
    \sum_{z,z^\prime \in \mathbb{T}^d_n}
\!\!
    f(z)f(z^\prime)
\!\!
    \sum_{w \in \mathbb{Z}^d_n \backslash \{o\}}
    \exp\Big(2\pi i (z-z^\prime)\cdot w \Big)
    \Big(
    \frac{1}{n^\alpha\lambda^{(\alpha,n)}_w}-\frac{1}{\tilde{c}^{(\alpha)}\|w\|^\alpha}
    \Big)^2 \Big|
\\& \!\!\!\stackrel{\eqref{lem-bounds-on-eigenvalues-2-eq-1}}{\le}
    c \frac{1}{n^{2-2\alpha}}\sum_{w \in \mathbb{Z}^d_n \backslash \{o\}}
        \frac{|\widehat{f_n}(w)|^2}{\|w \|^{4\alpha-2}} \le
    c \frac{1}{n^{2\alpha}}
     \frac{1}{n^d} \sum_{w \in  \mathbb{T}^d_n }
    |f(w)|^2 \longrightarrow 0,
\end{align*}
where, in the last inequality, we used that for $w \in \mathbb{Z}^d_n $, $\|w\|^{2-4\alpha} \le C n^{2-4\alpha}$ together with Parseval's identity. 
For $\alpha=1$ we compute
\[
\begin{split}
III & \leq    c \frac{1}{n^{2}}\sum_{w \in \mathbb{Z}^d_n \backslash \{o\}}
        |\widehat{f_n}(w)|^2 \log^2 \left ( \frac{n}{\|w\|}\right ) \leq c \frac{\log^2(n)}{n^{2}}
     \frac{1}{n^d} \sum_{w \in  \mathbb{T}^d_n }
    |f(w)|^2 \longrightarrow 0,
\end{split}
\]
as $n\rightarrow \infty$. This proves that the third term of the summand inside the brackets in \eqref{prop-convariances-converge-eq-1} vanishes as $n\rightarrow \infty$. To prove that the second term II in \eqref{prop-convariances-converge-eq-1} vanishes, one can proceed in the in a similar manner, distinguishing the cases $\alpha \in (1,2)$, $\alpha \in (1/3,1)$, and $\alpha \in (0,1/3)$ and then considering the special cases $\alpha=\frac{1}{3}$ and $\alpha=1$. In fact, for $\alpha=\frac{1}{3}$ we have
\[
\begin{split}
II  &\leq    c \frac{1}{n^{2/3}}\sum_{w \in \mathbb{Z}^d_n \backslash \{o\}}
        |\widehat{f_n}(w)|^2  \leq c \frac{1}{n^{2/3}}
     \frac{1}{n^d} \sum_{w \in  \mathbb{T}^d_n }
    |f(w)|^2 \longrightarrow 0,
\end{split}
\]
and for $\alpha=1$
\[
\begin{split}
II & \leq    c \frac{1}{n}\sum_{w \in \mathbb{Z}^d_n \backslash \{o\}}
        \frac{|\widehat{f_n}(w)|^2}{\|w\|} \log \left ( \frac{n}{\|w\|}\right )  \overset{\|w\|\geq 1}\leq c \frac{\log(n)}{n}
     \frac{1}{n^d} \sum_{w \in  \mathbb{T}^d_n }
    |f(w)|^2 \longrightarrow 0.
\end{split}
\]
It remains to prove that for all $\alpha \in (0,2)$, we have
\begin{equation} \label{eq-finite-dim-convergence-gaussian-1}
    \lim_{n \to \infty} (\tilde{c}^{(\alpha)})^2 n^{-2d}
\!\!\!
    \sum_{z,z^\prime \in \mathbb{T}^d_n}f(z)f(z^\prime)
    \sum_{w \in \mathbb{Z}^d_n \backslash \{0\}}
    \frac{\exp(2\pi i (z-z^\prime)\cdot w )}{\|w\|^{2\alpha}}
    =
    \|f\|^2_{-\frac{\alpha}{2}}.
\end{equation}
We will distinguish different cases, depending on dimension $d$ and $\alpha$, for which
$\sum_{x \in\mathbb{Z}^d \backslash\{o\}} \|x\|^{-2\alpha}$ is convergent or not.\\
\noindent
\underline{\textit{Case $d < 2 \alpha$}}:  In this simple case  we have that
\begin{align*}
    n^{-2d}
&
    \sum_{z,z^\prime \in \mathbb{T}^d_n}f(z)f(z^\prime)
    \sum_{w \in \mathbb{Z}^d_n \backslash \{o\}}
    \frac{\exp(2\pi i (z-z^\prime)\cdot w )}{\|w\|^{2\alpha}}
\\&=
    \sum_{w \in \mathbb{Z}^d }
    \frac{\1_{w \in \mathbb{Z}^d_n\backslash \{o\}}}{\|w\|^{2\alpha}}
    \sum_{z \in \mathbb{T}^d_n} \frac{f(z) \exp(2\pi i z \cdot w )}{n^d}
    \sum_{z^\prime \in \mathbb{T}^d_n}
    \frac{f(z^\prime)\exp(-2\pi i z^\prime\cdot w)}{n^d},
\end{align*}
applying the dominated convergence theorem (notice the uniform bound as $f$ is bounded on the torus $\mathbb{T}^d $), we get
\eqref{eq-finite-dim-convergence-gaussian-1}.

\noindent
\underline{\textit{Case $d \ge 2 \alpha$}}: Here we need to make use of mollifiers.
Let $\rho \in C^\infty(\mathbb{R}^d)$ a positive function in the Schwartz space
with support in $[-\frac{1}{2},\frac{1}{2} )^d$ and
satisfying $\int_{\mathbb{R}^d} \rho(x) \text{d}x=1$, let
$\rho_\kappa(x): =
\frac{1}{\kappa^d} \rho\Big(\frac{x}{\kappa}  \Big)$ with $\kappa>0$ . Theorem 7.22 from \cite{rudin1991functional} yields that for any $m \in \{0,1,2,\dots\}$ there exists $C= C(\kappa,m)>0$ such that
\begin{equation} \label{eq-decay-of-mollifier}
        \Big| \widehat{\rho_\kappa}(w)\Big| \le \frac{C}{(1 + \|w\|)^m}.
\end{equation}
We will prove in the following that the convergence in
\eqref{eq-finite-dim-convergence-gaussian-1} is equivalent to the convergence of
\begin{equation}\label{eq-mollified-goes-to-norm}
    \lim_{\kappa \to 0^+} \lim_{ n \to \infty}
   (\tilde{c}^{(\alpha)})^2 n^{-2d}
    \sum_{z,z^\prime \in \mathbb{T}^d_n}f(z)f(z^\prime)
    \sum_{w \in \mathbb{Z}^d_n \backslash \{o\}} \widehat{\rho_\kappa}(w)
    \frac{\exp(2\pi i (z-z^\prime)\cdot w )}{\|w\|^{2\alpha}}=
    \|f\|^2_{-\frac{\alpha}{2}}.
\end{equation}
To do so, we will show that
\begin{equation}\label{eq-error-in-mollification}
    \lim_{\kappa \to 0^+} \overline{\lim_{ n \to \infty}}
    \Big|
    n^{-2d}
\!\!\!
    \sum_{z,z^\prime \in \mathbb{T}^d_n}
\!\!\!\!
    f(z)f(z^\prime)
\!\!\!\!\!
    \sum_{w \in \mathbb{Z}^d_n \backslash \{o\}}
\!\!\!\!\!
    \Big(1-\widehat{\rho_\kappa}(w)\Big)
    \frac{\exp(2\pi i (z-z^\prime)\cdot w )}{\|w\|^{2\alpha}}
    \Big|
    =0.
\end{equation}
Since $\int_{\mathbb{R}^d} \rho_{\kappa}(x)\text{d}x=1$, we have that
\[
    |\widehat{\rho_{\kappa}}(w)-1|
    \leq
    \int_{\mathbb{R}^d}\rho_{\kappa}(y)|e^{2\pi i y\cdot w}-1|dy.
\]
Moreover from $|\exp(2\pi i x)-1|^{2}=4\sin^{2}(\pi x)$ and
$|\sin(x)|\leq|x|$ we obtain
\begin{align} \label{eq-error-fourier-of-mollifier}
    |\widehat{\rho_{\kappa}}(w)-1|
    \leq
    C\kappa\|w\|\int_{\mathbb{R}^{d}}\|y\|\rho(y)dy
    \leq
    C\kappa\|w\|.
\end{align}
Therefore,
\begin{align*}
    \Big|
    n^{-2d}
&
    \sum_{ w \in \mathbb{Z}_{n}^{d}\backslash\{o\}}
    \frac { \widehat { \rho_\kappa } (w)-1} { \| w \| ^ {2\alpha} }
    \sum_{z,z^{\prime}\in\mathbb{T}_{n}^{d}}f(z)f(z^{\prime})
    \exp\Big(2\pi i \Big(z-z^{\prime}\Big)\cdot w\Big)
    \Big|
\\ & \le
    C\kappa
    \sum_{ w \in \mathbb{Z}_{n}^{d}\backslash\{o\}}
     \| w \| ^ {1-2\alpha}
        |\widehat{f_n}(w)|^2.
\end{align*}
For $\alpha \ge \frac{1}{2}$, as $\|w\|\ge 1$, we have
\begin{align*}
    \sum_{ w \in \mathbb{Z}_{n}^{d}\backslash\{o\}}
    \| w \| ^ {1-2\alpha} |\widehat{f_n}(w)|^2
& \le
    \sum_{ w \in \mathbb{Z}_{n}^{d}\backslash\{o\}}
    |\widehat{f_n}(w)|^2
 \le
    \frac{1}{n^d} \sum_{ z \in \mathbb{T}_{n}^{d}} |f(z)|^2
\end{align*}
where we used Parseval's identity, as before. Hence, we have
\begin{align*}
    \overline{\lim_{ n \to \infty}} &
    \Big|
    n^{-2d}
    \sum_{z,z^\prime \in \mathbb{T}^d_n}f(z)f(z^\prime)
    \sum_{w \in \mathbb{Z}^d_n \backslash \{o\}}
    \Big(1-\widehat{\rho_\kappa}(w)\Big)
    \frac{\exp(2\pi i (z-z^\prime)\cdot w )}{\|w\|^{2\alpha}}
    \Big|
\\ & \le
    C \kappa \|f (\cdot)\|^2_{\mathbb{T}^d},
\end{align*}
which proves \eqref{eq-error-in-mollification} letting $\kappa$ go to 0.
For the case $\alpha < \frac{1}{2}$, we use the bound
\begin{align} \label{eq-error-fourier-of-mollifier-2}
    |\widehat{\rho_{\kappa}}(w)-1|
    \leq
    C \min\{\kappa\|w\|,1\},
\end{align}
for an appropriate constant that does not depend on $\kappa$ nor on $w$. So
we can repeat the approach
\begin{align*}
    \Big|
    n^{-2d}
&
    \sum_{ w \in \mathbb{Z}_{n}^{d}\backslash\{o\}}
    \frac { \widehat { \rho_\kappa } (w)-1} { \| w \| ^ {2\alpha} }
    \sum_{z,z^{\prime}\in\mathbb{T}_{n}^{d}}f(z)f(z^{\prime})
    \exp\Big(2\pi i \Big(z-z^{\prime}\Big)\cdot w\Big)
    \Big|
\\ & \le
    C
    \sum_{ w \in \mathbb{Z}_{n}^{d}\backslash\{o\}}
     \min\{\kappa\| w \| ^ {1-2\alpha}, \|w\|^{-2\alpha}\}
        |\widehat{f_n}(w)|^2
\\ & \le
    C
    \sum_{\substack {w \in \mathbb{Z}_{n}^{d}\backslash\{o\}\\
    \|w\|\le \frac{1}{\kappa} }}
     \kappa \underbrace{\| w \| ^ {1-2\alpha}}_{\le \frac{1}{\kappa^{1-2\alpha}} }
        |\widehat{f_n}(w)|^2
    +
    C
    \sum_{\substack {w \in \mathbb{Z}_{n}^{d}\backslash\{o\}\\
    \|w\|\ge \frac{1}{\kappa} }}
    \underbrace{\|w\|^{-2\alpha}}_{\le \kappa^{2\alpha}}|\widehat{f_n}(w)|^2
\\ & \le
    C\kappa^{2\alpha}
    \sum_{w \in \mathbb{Z}_{n}^{d}\backslash\{o\}}
        |\widehat{f_n}(w)|^2    ,
\end{align*}
where in the last inequality we used that $\alpha < \frac{1}{2}$ and recover
\eqref{eq-error-in-mollification}.
The proof will be complete once we show \eqref{eq-mollified-goes-to-norm}. We will apply the dominated convergence theorem twice. First note that, as $\widehat{\rho_{\kappa}}$ decays fast at infinity, 
\begin{align*}
    \lim_{ n \longrightarrow \infty } n ^ {-d}
    \sum_{z\in\mathbb{T}_{n}^{d } } f (z)\exp(2\pi i z\cdot w)
&=
    \widehat {f}(w),
\end{align*}
and
\begin{align*}
    \lim_{ n \to \infty}&
    (\tilde{c}^{(\alpha)})^2 n^{-2d}
    \sum_{z,z^\prime \in \mathbb{T}^d_n}f(z)f(z^\prime)
    \sum_{w \in \mathbb{Z}^d_n \backslash \{o\}} \widehat{\rho_\kappa}(w)
    \frac{\exp(2\pi i (z-z^\prime)\cdot w )}{\|w\|^{2\alpha}}
\\&=
(\tilde{c}^{(\alpha)})^2    \sum_{w \in \mathbb{Z}^d \backslash \{o\}} \widehat{\rho_\kappa}(w)
    \frac{|\widehat{f}(w)|^2}{\|w\|^{2\alpha}}.
\end{align*}
As $|\widehat{\rho_{\kappa}}(\cdot)|\le 1$, we get the desired equation \eqref{eq-finite-dim-convergence-gaussian-1}.That concludes the proof of  Proposition~\ref{prop-convariances-converge}.
\end{proof}

\vspace{1em}
\noindent {\textbf{Case $m\ge 3$.}} We still need to prove Proposition~\ref{prop-convergence-moments-bounded-case} for higher order moments, however this will be a much easier result as we can now rely on Propositions~\ref{prop-convariances-converge} and \ref{prop-rest-is-negligible}. We will also need this auxiliary Lemma~12 from \cite{Cipriani2016}.

\begin{lemma}\label{lemma-control-fourier-map-T}
    Let $f \in C^\infty (\mathbb{T}^d)$ with mean zero, $T_n$ specified in \eqref{def-function-T} and $\mathcal{T}_n: \mathbb{Z}^d_n\longrightarrow \mathbb{R}$ defined by $\mathcal{T}_n(z):= T_n(\frac{z}{n})$.
		Then there exists a positive constant $\mathcal{M}= \mathcal{M}(d,f) < \infty$  such that
    \[
        n^{d} \sum_{z \in \mathbb{Z}^d_n}|\widehat{\mathcal{T}_n}(z)| \le \mathcal{M}.
    \]
\end{lemma}
For $m\in\{1,2,\dots\}$, define $\mathcal{P}(m)$
the set of partitions of $\{1,2,\dots,m\}$. Moreover, denote by $\Pi$ the elements of a
partition $P \in \mathcal{P}(m)$. We will denote $|\Pi|$ the number of elements in
$\Pi$.  Call $\mathcal{P}_2(m)\subset  \mathcal{P}(m)$ the pair
partitions, that is, partitions $P \in \mathcal{P}(m)$ such that for all
$\Pi \in P$, $|\Pi|=2$. We obtain
\begin{align}\label{prop-convergence-moments-bounded-case-eq-2}
&
    \mathbb{E}
    [( \Xi^{\alpha}_{w_n},
    f )^m ]
=
    (\tilde{c}^{(\alpha)}n^{\frac{d-2\alpha}{2} })^m
    \sum_{z_1,\dots,z_n \in \mathbb{T}^d_n } \mathbb{E}
    \Big[ \prod_{j=1}^m w_n(nz_j)\Big] \prod_{j=1}^m T_n(z_j)
\nonumber \\&=
    (\tilde{c}^{(\alpha)}n^{\frac{d-2\alpha}{2} })^m
    \sum_{P \in \mathcal{P}(m) } \prod_{\Pi \in P}
    \mathbb{E}[\sigma^{|\Pi|}(x)]
    \sum_{x \in \mathbb{Z}^d_n}\Big(
        \sum_{z_j \in \mathbb{T}^d_n: j \in \Pi }
        \prod_{j \in \Pi}g^{(\alpha)}(x,nz_j)T_n(z_j)
    \Big)
\nonumber \\&=
    \sum_{P \in \mathcal{P}(m) } \prod_{\Pi \in P}
    (\tilde{c}^{(\alpha)}n^{\frac{d-2\alpha}{2} })^{|\Pi|}
    \mathbb{E}[\sigma^{|\Pi|}(x)]
    \sum_{x \in \mathbb{Z}^d_n}\Big(
        \sum_{z \in \mathbb{T}^d_n}
        g^{(\alpha)}(x,nz)T_n(z)
    \Big)^{|\Pi|}
\end{align}
For a fixed $P$, let us consider in the product $\Pi \in P$ any term corresponding
to a block $\Pi$ with $|\Pi|=1$, this will give no contribution to the sum as
$\sigma$ have mean zero. Now consider $\Pi \in P$ with $k := |\Pi|> 2$. We have
that

\begin{align*}
    (\tilde{c}^{(\alpha)}n^{\frac{d-2\alpha}{2} })^k
&
    \mathbb{E}[\sigma^k(x)]
    \sum_{x \in \mathbb{Z}^d_n}\Big(
        \sum_{z \in \mathbb{T}^d_n}
        g^{(\alpha)}(x,nz)T_n(z)
    \Big)^{k}
\\&=
    (\tilde{c}^{(\alpha)}n^{\frac{d-2\alpha}{2} })^k
    \mathbb{E}[\sigma^k(x)]
    \sum_{x \in \mathbb{Z}^d_n}\Big(
        \sum_{z \in \mathbb{Z}^d_n}
        g^{(\alpha)}(x,z)\mathcal{T}_n(z)
    \Big)^{k}.
\end{align*}
Now we apply Parseval's identity and get

\begin{align}\label{prop-convergence-moments-bounded-case-eq-3}
    (\tilde{c}^{(\alpha)}n^{\frac{d-2\alpha}{2}})^k
&
    \mathbb{E}[\sigma^k(x)]
    \sum_{x \in \mathbb{Z}^d_n}\Big(
        n^d\sum_{z \in \mathbb{Z}^d_n}
        \widehat{g^{(\alpha)}}(x,z)\widehat{\mathcal{T}_n}(z)
    \Big)^{k}
\nonumber \\ & \stackrel{\eqref{identity-fourier-of-green}}{=}
    (\tilde{c}^{(\alpha)}n^{\frac{d-2\alpha}{2} })^k
    \mathbb{E}[\sigma^k(x)]
    \sum_{x \in \mathbb{Z}^d_n}\Big(
        \sum_{z \in \mathbb{Z}^d_n \backslash \{0\}}
        \frac{\psi_{-z}(x)}{-\lambda^{(\alpha,n)}_{z}} \widehat{\mathcal{T}_n}(z)
    \Big)^{k}.
\end{align}
We used that $\widehat{\mathcal{T}}_n(0)=0$. Now, we evoke
Lemma~\ref{lem-bounds-on-eigenvalues-0} and the fact that $\|w\|\ge 1$ to obtain that
$-\lambda^{(\alpha,n)}_{w} \ge c n^{-\alpha}$ for all $w \in \mathbb{Z}^d_n$. Therefore,
the above expression is bounded from above by
\begin{align}
    (\tilde{c}^{(\alpha)}n^{\frac{d-2\alpha}{2} })^k
    \mathbb{E}[\sigma^k(x)] &
    \sum_{x \in \mathbb{Z}^d_n}\Big(
        \sum_{z \in \mathbb{T}^d_n}
        g^{(\alpha)}(x,nz)T_n(z)
    \Big)^{k}
\nonumber\\&\le
    C n^{\frac{d k}{2} + d}
    \mathbb{E}[\sigma^k(x)]
    \Big(
        \sum_{z \in \mathbb{Z}^d_n} \Big|\widehat{\mathcal{T}}_n(z)\Big|
    \Big)^{k}.
\end{align}
As the moments of $\sigma$ are finite, we can use Lemma
\ref{lemma-control-fourier-map-T} to bound the term in parentheses above. Hence,
each block of cardinality $k>2$ has order at most
 $n^{\frac{k d}{2}-(k-1)d}=o(1)$. Therefore, the only terms of
 \eqref{prop-convergence-moments-bounded-case-eq-2} that contribute as
 $n \longrightarrow \infty$ are the ones with $k=2$, only the pair partitions.
Since $\mathcal{P}_2(2m+1) = \emptyset$, the
 odd moments will vanish. Therefore,
 \[
     \mathbb{E}\Big[( \Xi^{\alpha}_{w_n},f )^{2m} \Big] =
     \sum_{P \in \mathcal{P}_2(2m) }
     \Big((\tilde{c}^{(\alpha)})^2 n^{d-2\alpha}
    \sum_{x \in \mathbb{Z}^d_n}\Big(
        \sum_{z \in \mathbb{Z}^d_n}
        g^{(\alpha)}(x,z)\mathcal{T}_n(z)
    \Big)^{2}\Big)^m +o(1).
 \]
Note that $|\mathcal{P}_2(2m)|=(2m-1)!!$ and that the bracket term above
 converges to $\|f\|^2_{-\frac{\alpha}{2}}$. This concludes the proof of Proposition~\ref{prop-convergence-moments-bounded-case}
 and with it, we conclude point $(1)$ in the strategy of the proof (described at the beginning of Subsection~\ref{subsec-proof-non-gaussian})
\end{proof}

\textit{Tightness:} For proving tightness we will need the following result which is proven in Theorem 5.8 in \cite{roe2013elliptic}.

\begin{theorem}[Rellich's theorem]\label{thm-rellich}
    If $k_1<k_2$ the inclusion operator
    $H^{k_2}(\mathbb{T}^d)\hookrightarrow H^{k_1}(\mathbb{T}^d)$
    is a compact linear operator.
    In particular for any radius $R>0$, the closed ball
    $\overline{B_{\mathcal H_{-\frac{\varepsilon}{2}}}(0,\,R)}$ is compact in
    $\mathcal H_{-\varepsilon}$.
\end{theorem}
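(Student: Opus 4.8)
The plan is to reduce everything to the Fourier side, where both norms are diagonal, and to exhibit the inclusion as a norm-limit of finite-rank operators. Recall from Subsection~\ref{subsec-aws} that the norm on $H^{k}(\mathbb{T}^d)$ is
\[
    \|f\|_{H^{k}(\mathbb{T}^d)}^2 = \sum_{\nu \in \mathbb{Z}^d\setminus\{0\}} \|\nu\|^{4k}\,|\widehat f(\nu)|^2,
\]
and that the same weighted quadratic form controls $\mathcal H_a$, since by Parseval $\|f\|_{\mathcal H_a}^2 = \big((-\Delta)^a f,\,(-\Delta)^a f\big)_{L^2(\mathbb{T}^d)} = \sum_{\nu\neq 0}\|\nu\|^{4a}\,|\widehat f(\nu)|^2$. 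Thus the inclusion $\iota\colon H^{k_2}\hookrightarrow H^{k_1}$ acts as the identity on Fourier coefficients, merely replacing the weight $\|\nu\|^{4k_2}$ by $\|\nu\|^{4k_1}$.

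First I would introduce the truncations $T_N\colon H^{k_2}\to H^{k_1}$ given by $T_N f := \sum_{0<\|\nu\|\le N}\widehat f(\nu)\,\phi_\nu$. Each $T_N$ has finite-dimensional range, hence is compact, and the essential point is the tail estimate: for $f\in H^{k_2}$, using that the exponent gap $4(k_1-k_2)$ is negative because $k_1<k_2$, and that therefore $\|\nu\|^{4(k_1-k_2)}\le N^{4(k_1-k_2)}$ whenever $\|\nu\|>N$,
\[
    \|(\iota - T_N)f\|_{H^{k_1}}^2 = \sum_{\|\nu\|>N}\|\nu\|^{4(k_1-k_2)}\,\|\nu\|^{4k_2}\,|\widehat f(\nu)|^2 \le N^{4(k_1-k_2)}\,\|f\|_{H^{k_2}}^2 .
\]
Hence $\|\iota-T_N\|_{H^{k_2}\to H^{k_1}}\le N^{2(k_1-k_2)}\to 0$ as $N\to\infty$, so $\iota$ is the operator-norm limit of compact operators and is itself compact, proving the first assertion.

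For the ``in particular'' statement I would apply the above with $a_2:=-\tfrac{\varepsilon}{2}$ and $a_1:=-\varepsilon$, which satisfy $a_1<a_2$ since $\varepsilon>0$, yielding that $\mathcal H_{-\frac{\varepsilon}{2}}\hookrightarrow\mathcal H_{-\varepsilon}$ is compact. To upgrade this to compactness of the closed ball itself, take any sequence $(f_m)$ in $\overline{B_{\mathcal H_{-\frac{\varepsilon}{2}}}(0,\,R)}$. Closed balls in a Hilbert space are weakly sequentially compact, so along a subsequence $f_{m_j}\rightharpoonup f$ weakly in $\mathcal H_{-\frac{\varepsilon}{2}}$; weak lower semicontinuity of the norm gives $\|f\|_{\mathcal H_{-\frac{\varepsilon}{2}}}\le R$, so $f$ lies in the ball. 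Since a compact operator sends weakly convergent sequences to norm-convergent ones, $f_{m_j}\to f$ strongly in $\mathcal H_{-\varepsilon}$. Thus every sequence in the ball has a subsequence converging in $\mathcal H_{-\varepsilon}$ to a point of the ball, which is precisely sequential — and, in this metric setting, topological — compactness.

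The argument is entirely standard, so I do not expect a genuine obstacle; the two points that require attention are the nonstandard exponent $4k$ in the definition of the norms, where one must verify that the gap $k_1<k_2$ really produces a negative power and hence a vanishing operator norm, and the final upgrade from relative compactness of the image to compactness of the closed ball, which is exactly where weak sequential compactness of balls and weak lower semicontinuity of the Hilbert norm enter.
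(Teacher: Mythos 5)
Your proof is correct, but it is genuinely different from what the paper does: the paper offers no argument at all for this statement, disposing of it with a citation to \cite[Theorem 5.8]{roe2013elliptic}, where the compactness of Sobolev inclusions is obtained from the general elliptic-operator framework. Your route is the self-contained spectral one: on the torus both norms are simultaneously diagonalized by the Fourier basis $(\phi_\nu)_{\nu\neq 0}$, so the inclusion is a diagonal operator with entries tending to zero, and the truncations $T_N$ exhibit it as a norm limit of finite-rank operators, with the correct operator-norm bound $N^{2(k_1-k_2)}$ once you account for the paper's nonstandard weight $\|\nu\|^{4k}$ (your reduction of $\|\cdot\|_{\mathcal H_a}$ to the same weighted form via Parseval, consistent with the paper's isomorphism $(-\Delta)^{-a}\colon \mathcal H_a \to H^a$, is the right bridge, and the bound $\|\nu\|\ge 1$ for $\nu\in\mathbb{Z}^d\setminus\{0\}$ makes the inclusion bounded to begin with). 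What your approach buys is transparency and elementarity: the hypothesis $k_1<k_2$ visibly produces the negative exponent driving the tail estimate, with no pseudodifferential machinery. What the citation buys is brevity and generality beyond the flat torus. Your second half also deserves credit as a genuine supplement rather than a repetition: compactness of the inclusion alone only yields that the image of the ball is relatively compact in $\mathcal H_{-\varepsilon}$, and the ``in particular'' clause, asserting compactness of the closed ball itself, needs exactly the two ingredients you supply --- weak sequential compactness of closed balls in the Hilbert space $\mathcal H_{-\frac{\varepsilon}{2}}$ together with weak lower semicontinuity of the norm to keep the limit inside the ball, and the weak-to-strong property of compact operators to get convergence in $\mathcal H_{-\varepsilon}$. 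The only cosmetic remark is that you could state explicitly that the strong $\mathcal H_{-\varepsilon}$-limit coincides with the weak limit (strong convergence implies weak convergence and weak limits are unique), but this is a one-line observation, not a gap.
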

Choose $-\varepsilon<-\frac{d}{2}$. Observe that
\[
    \|{\Xi^{\alpha}_{w_n}}\|_{L^2(\mathbb{T}^d)}^2
    =(\tilde{c}^{(\alpha)})^2 n^{d-2\alpha}\sum_{x,\,y\in \mathbb{Z}^d_n}
    g^{(\alpha)}(x,y)\sigma(x)
    \sum_{x',\,y'\in \mathbb{Z}^d_n }g^{(\alpha)}(x',y')\sigma(x')
\]
is a.s. finite, as, for any fixed $n$, it is a finite combination of random variables in $L^2$.
Therefore, $\Xi^{\alpha}_{w_n}\in L^2(\mathbb{T}^d) \subset
\mathcal H_{-\varepsilon}(\mathbb{T}^d)$  a.s. Due to  Rellich's Theorem, it is enough
to show that, for all $\delta>0$, there exists a constant $R=R(\delta)>0$ such that
\[
     \sup_{n\in \mathbb{N}}
     \mathbb{P}
     \Big(\|\Xi^{\alpha}_{w_n}\|_{\mathcal H_{-\frac{\varepsilon}{2}}}\ge R\Big)\le \delta.
 \]
However, one can use Markov's inequality to show that it is enough to get
\[
 \sup_{n\in \mathbb{N}} \mathbb{E}
 \Big[\|\Xi^{\alpha}_{w_n}\|_{\mathcal H_{-\frac{\varepsilon}{2}}}^2\Big]
 \le C
\]
for some constant $C>0$.
Since  $\Xi^{\alpha}_{w_n}\in L^2(\mathbb{T}^d)$, we get a representation
$\Xi^{\alpha}_{w_n}(z)=\sum_{\nu\in \mathbb{Z}^d}\widehat{\Xi^{\alpha}_{w_n}}(\nu)
\mathbf \phi_\nu(z)$ in terms of eigenfunctions, we use the notation $\widehat{\Xi^{\alpha}_{w_n}}(\nu):=(\Xi^{\alpha}_{w_n},\, \phi_\nu)$.
Thus, we can express
\begin{align*}
    \|\Xi^{\alpha}_{w_n}\|_{\mathcal H_{-\frac{\varepsilon}{2}}}^2
&=
    \sum_{\nu\in \mathbb{Z}^d \backslash \{o\}}\|\nu\|^{-2\varepsilon}
    \Big|\widehat{\Xi^{\alpha}_{w_n}}(\nu)\Big|^2.
\end{align*}
Note that
\[
\widehat{\Xi^{\alpha}_{w_n}}(\nu)
=
    \int_{\mathbb{T}^d }\Xi^{\alpha}_{w_n}(z)
    \phi_{\nu}(z) d z
=
   \tilde{c}^{(\alpha)}
    \sum_{x\in \mathbb{T}^d_n}n^{\frac{d-2\alpha}{2}} w_n(nx)
    \int_{B(x,\,\frac{1}{2n})} \phi_{\nu}(z) d z.
\]
This gives
\begin{align}\label{proof-of-tighness-eq-1}
    \mathbb{E}
&
    \Big[\|\Xi^{\alpha}_{w_n}\|_{\mathcal H_{-\frac{\varepsilon}{2}}}^2\Big]
\nonumber\\&=
( \tilde{c}^{(\alpha)})^2\sum_{\nu\in \mathbb{Z}^d \backslash\{o\}}
    \sum_{x,y\in \mathbb{T}^d_n}\|\nu\|^{-2\varepsilon}n^{d-2\alpha}
    \mathbb{E} \Big[w_n(nx)w_n(ny)\Big]
    \int_{B(x,\,\frac{1}{2n})} \phi_{\nu}(z)\text{d} z
    \int_{B(y,\,\frac{1}{2n})}\overline{ \phi_{\nu}(z)}\text{d}z
\\&\stackrel{\eqref{prop-convergence-moments-bounded-case-eq-1}}{=}
\!\!\!\!\!\!
 (\tilde{c}^{(\alpha)})^2
      \sum_{\nu\in \mathbb{Z}^d \backslash\{o\}}\sum_{x,\,y\in \mathbb{T}^d_n}
    \frac{n^{d-2\alpha}}{\|\nu\|^{2\varepsilon}} \Big(n^d L^2+C^{(\alpha)}_n(nx,\,ny)\Big)
    \int_{B(x,\,\frac{1}{2n})}
\!\!\!\!
    \phi_{\nu}(z)\text{d}z
    \int_{B(y,\,\frac{1}{2n})}
\!\!\!\!
    \overline{\phi_{\nu}(z)}\text{d}z.
\end{align}
Since $\int_{\mathbb{T}^d}\phi_{\nu}(z) \text{d}z=0$, the previous expression
reduces to
\[
   (\tilde{c}^{(\alpha)})^2
    \sum_{\nu\in \mathbb{Z}^d\backslash\{o\}}
    \sum_{x,\,y\in \mathbb{T}^d_n}\|\nu\|^{-2\varepsilon}n^{d-2\alpha}
    C^{(\alpha)}_n(nx,\,ny)
    \int_{B(x,\frac{1}{2n})}\phi_{\nu}(z)\text{d}z
    \int_{B(y,\,\frac{1}{2n})}\overline{\phi_{\nu}(z)}\text{d}z.
\]
Define $F_{n,\nu}:\mathbb{T}^d_n \rightarrow \mathbb{C}$ as the function
$F_{n,\nu}(x):=\int_{B(x,\frac{1}{2n})} \phi_\nu(z)d z$. Since
$ \phi_\nu \in L^2(\mathbb{T}^d)$, by Cauchy-Schwarz inequality we get
$F_{n,\nu}\in L^1(\mathbb{T}^d_n)$. Now, we claim that
\begin{claim}\label{claim-bound-for-tightness}
    There exists $C^\prime>0$ such that
\begin{equation}\label{eq-claim-bound-for-tightness-1}
    \sup_{\nu\in \mathbb{Z}^d}\sup_{n\in\mathbb{N} }
    \sum_{x,\, y\in \mathbb{T}^d_n}n^{d-2 \alpha}C^{(\alpha)}_n(nx,\,ny)
    F_{n,\nu}(x)\overline{F_{n,\nu}(y)}
    \le C^\prime.
\end{equation}
\end{claim}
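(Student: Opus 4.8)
The plan is to diagonalise the quadratic form in \eqref{eq-claim-bound-for-tightness-1} by inserting the spectral representation \eqref{def-of-covariance} of $C^{(\alpha)}_n$ and then exploiting that $F_{n,\nu}$ is, up to a scalar, a pure character on $\mathbb{T}^d_n$. Since $\phi_\nu(z)=e^{2\pi i \nu\cdot z}$ and $B(x,\tfrac{1}{2n})$ is the $L^\infty$-cube of side $1/n$ centred at $x$, a direct computation gives $F_{n,\nu}(x)=e^{2\pi i \nu\cdot x}\,m_{n,\nu}$, where $m_{n,\nu}:=\prod_{j=1}^{d}\frac{\sin(\pi \nu_j/n)}{\pi \nu_j}$ (interpreted as $1/n$ when $\nu_j=0$). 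Substituting \eqref{def-of-covariance}, one has $n^{d-2\alpha}C^{(\alpha)}_n(nx,ny)=n^{-2\alpha}\sum_{w}(\lambda^{(\alpha,n)}_w)^{-2}e^{2\pi i(y-x)\cdot w}$, and because $e^{2\pi i(y-x)\cdot w}$ factorises, the double spatial sum over $x,y\in\mathbb{T}^d_n$ splits. Thus the left-hand side of \eqref{eq-claim-bound-for-tightness-1} equals
\[
    \sum_{w\in\mathbb{Z}^d_n\backslash\{0\}}
    \frac{|A_w|^2}{\bigl(n^\alpha\lambda^{(\alpha,n)}_w\bigr)^2},
    \qquad
    A_w:=\sum_{x\in\mathbb{T}^d_n} e^{-2\pi i x\cdot w}F_{n,\nu}(x),
\]
where I used that $m_{n,\nu}$ is real so that the $y$-sum is $\overline{A_w}$.

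Next I would evaluate $A_w$ explicitly. Pulling out the constant, $A_w=m_{n,\nu}\sum_{x\in\mathbb{T}^d_n}e^{2\pi i(\nu-w)\cdot x}$, and by orthogonality of characters on $\mathbb{T}^d_n$ the inner sum is $n^{d}\,1_{\{w\equiv\nu \bmod n\}}$. This is the key \emph{aliasing} phenomenon: for each fixed $\nu$ only the residue of $\nu$ modulo $n$ lying in $\mathbb{Z}^d_n$ survives, contributing $|A_w|^2=|m_{n,\nu}|^2 n^{2d}$, and all other frequencies drop out. (At the boundary of the fundamental domain $\mathbb{Z}^d_n$ there may be a bounded, $d$-dependent number of representatives for even $n$, which only costs a fixed multiplicative constant and does not affect the argument.)

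To close the estimate I would invoke Lemma~\ref{lem-bounds-on-eigenvalues-3}, which gives $\bigl(n^\alpha\lambda^{(\alpha,n)}_w\bigr)^{-2}\le C\|w\|^{-2\alpha}$ for every surviving $w$, and bound the surviving weight by $|m_{n,\nu}|^2 n^{2d}=\prod_{j=1}^d \frac{n^2\sin^2(\pi\nu_j/n)}{\pi^2\nu_j^2}\le 1$, using $|\sin t|\le|t|$ in each coordinate. Since any surviving $w\in\mathbb{Z}^d_n\backslash\{0\}$ is a nonzero integer vector, $\|w\|\ge 1$, whence $\|w\|^{-2\alpha}\le 1$. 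Combining these, the whole expression is bounded by a constant $C^\prime=C^\prime_{d,\alpha}$ independent of both $\nu$ and $n$, which is exactly \eqref{eq-claim-bound-for-tightness-1}.

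The main obstacle is really only the clean identification of $A_w$: one must verify the factorisation of the $x,y$ sums, carry out the character orthogonality carefully on the discretised torus, and confirm that the surviving frequency has $\|w\|\ge 1$. Once this collapse to a single frequency is in place, the quantitative input from Lemma~\ref{lem-bounds-on-eigenvalues-3} together with the exact cancellation $|m_{n,\nu}|^2 n^{2d}\le 1$ of the $n^{2d}$ factor makes the uniform bound immediate; the remaining steps are routine bookkeeping.
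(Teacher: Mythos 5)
Your proof is correct, but it takes a genuinely different and in fact sharper route than the paper. The paper never exploits the specific structure of $F_{n,\nu}$: it only uses positivity of the diagonal terms (the identity $\sum_{x,y}\exp(2\pi i(x-y)\cdot w)F_{n,\nu}(x)\overline{F_{n,\nu}(y)}=|\widehat{F_{n,\nu}}(w)|^2 n^{2d}\ge 0$) to replace $(n^\alpha\lambda^{(\alpha,n)}_w)^{-2}$ by $C\|w\|^{-2\alpha}$ termwise via Lemma~\ref{lem-bounds-on-eigenvalues-3}, and must then confront the non-summability of $\sum_w\|w\|^{-2\alpha}$ when $d\ge 2\alpha$. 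It does so with a two-scale mollifier argument: splitting the kernel into $\widehat{\rho_\kappa}(w)$ and $1-\widehat{\rho_\kappa}(w)$ pieces, bounding the latter by $C\kappa^{\gamma}$ through Parseval and $|F_{n,\nu}(x)|\le n^{-d}$, and the former by extending the sum to all of $\mathbb{Z}^d$ and invoking the Schwartz decay \eqref{eq-decay-of-mollifier}. You instead observe that $F_{n,\nu}(x)=e^{2\pi i\nu\cdot x}m_{n,\nu}$ is a pure discrete character times a real scalar, so the quadratic form diagonalises exactly and character orthogonality collapses it to the single aliased frequency $w^*\equiv\nu\bmod n$ in $\mathbb{Z}^d_n\backslash\{0\}$, with weight $|m_{n,\nu}|^2 n^{2d}\le 1$ by $|\sin t|\le|t|$; the eigenvalue bound (for which Lemma~\ref{lem-bounds-on-eigenvalues-0} alone already suffices, since you only need $(n^\alpha\lambda^{(\alpha,n)}_{w^*})^{-2}\le C\|w^*\|^{-2\alpha}\le C$) then finishes in one line, with no mollifiers and no scale-splitting. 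Your argument additionally handles the degenerate case $\nu\equiv 0\bmod n$ correctly for free (the resonant frequency $w=0$ is excluded, so the sum vanishes), and your parenthetical about the boundary of the fundamental domain $\left[-\frac n2,\frac n2\right]^d\cap\mathbb{Z}^d$ for even $n$ is the right caveat — the intended reading, consistent with \eqref{def-of-covariance} being a spectral decomposition over $n^d$ orthonormal characters, is one residue per coordinate class, under which orthogonality is exact. What your approach buys is brevity and an essentially exact evaluation of the left-hand side; what the paper's approach buys is robustness, since it would survive replacing $\phi_\nu$ by an arbitrary bounded test function and it reuses the mollifier machinery already built for Proposition~\ref{prop-convariances-converge}.
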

Supposing that the claim above is valid, we have that
\begin{align*}
    \mathbb{E} \Big[\|\Xi^{\alpha}_{w_n}\|_{\mathcal
    H_{-\frac{\varepsilon}{2}}}^2\Big]
&=
    (\tilde{c}^{(\alpha)})^2
\!\!\!
    \sum_{\nu\in \mathbb{Z}^d \backslash\{o\}}\|\nu\|^{-2\varepsilon}
\!\!\! 
    \sum_{x, \, y\in \mathbb{T}^d_n}n^{d-2\alpha}
    C^{(\alpha)}_n(nx,\,ny)F_{n,\nu }(x)\overline{F_{n,\nu}(y)}
\\&\le
    C'\sum_{k\ge 1}k^{d-1-2\varepsilon}  \le
    C.
\end{align*}
It remains to prove the Claim~\ref{claim-bound-for-tightness}.

\begin{proof}[Proof of Claim~\ref{claim-bound-for-tightness}]
Again, we will rely on the bounds of Lemma~\ref{lem-bounds-on-eigenvalues-0},
we will also use that
\[
    \sum_{x,\,y\in \mathbb{T}^d_n} \exp(2\pi i (x-y)\cdot w)
    F_{n,\nu}(x) \overline{F_{n,\nu}(y)}
=
    \Big|\widehat{F_{n,\nu}}(w)\Big|^2 n^{2d}\ge 0.
\]
Now, we analyse
\begin{align}\label{claim-bound-for-tightness-eq-1}
&
    \sum_{x,\,y\in \mathbb{T}^d_n}n^{d-2\alpha}C^{(\alpha)}_n(nx,\,ny)
    F_{n,\nu}(x)\overline{F_{n,\nu}(y)}
\nonumber\\&=
    \sum_{x,\,y\in \mathbb{T}^d_n} \sum_{w \in \mathbb{Z}^d_n \backslash\{o\}}
    \frac{\exp(2 \pi i (x-y)\cdot w)}{\left(n^\alpha \lambda^{(\alpha,n)}_w\right)^2}
    F_{n,\nu}(x)\overline{F_{n,\nu}(y)}
\nonumber\\& \!\!\!\!\!\!\!\!  \le
    c \sum_{x,\,y\in \mathbb{T}^d_n} \sum_{w \in \mathbb{Z}^d_n \backslash\{o\}}
    \frac{\exp(2 \pi i (x-y)\cdot w)}{\|w\|^{2\alpha}}
    F_{n,\nu}(x)\overline{F_{n,\nu}(y)}.
\end{align}
Where in the last inequality, we are using Lemma~\ref{lem-bounds-on-eigenvalues-0}.
Again consider  mollifiers $\rho_\kappa$ as before. We rewrite the sum in right-hand
side of \eqref{claim-bound-for-tightness-eq-1} as

\begin{align}\label{claim-bound-for-tightness-eq-2}
    &\sum_{x,\,y\in \mathbb{T}^d_n} \sum_{w \in \mathbb{Z}^d_n \backslash\{o\}}
    \widehat{\rho_\kappa}(w) \frac{\exp(2 \pi i (x-y)\cdot w)}{\|w\|^{2\alpha}}
    F_{n,\nu}(x)\overline{F_{n,\nu}(y)}
\nonumber\\&+
    \sum_{x,\,y\in \mathbb{T}^d_n} \sum_{w \in \mathbb{Z}^d_n \backslash\{o\}}
    \Big( 1- \widehat{\rho_\kappa}(w) \Big)\frac{\exp(2 \pi i (x-y)\cdot w)}{\|w\|^{2\alpha}}
    F_{n,\nu}(x)\overline{F_{n,\nu}(y)}.
\end{align}
In the sequel we will bound the two summands independently, starting with the second.
Consider $G_{n,\nu}: \mathbb{Z}^d_n \longrightarrow \mathbb{C} $, given by
$G_{n,\nu}(z):= F_{n,\nu}(\frac{z}{n})$. We have
\begin{align*}
&
    \sum_{x,\,y\in \mathbb{T}^d_n} \sum_{w \in \mathbb{Z}^d_n \backslash\{o\}}
    \Big( 1- \widehat{\rho_\kappa}(w) \Big)
    \frac{\exp(2 \pi i (x-y)\cdot w)}{\|w\|^{2\alpha}}
    F_{n,\nu}(x)\overline{F_{n,\nu}(y)}
\\& =
     \sum_{w \in \mathbb{Z}^d_n \backslash\{0\}}
    \Big( \frac{1- \widehat{\rho_\kappa}(w)}{\|w\|^{2\alpha}} \Big)
    \sum_{x,\,y\in \mathbb{Z}^d_n}
    \exp \Big(2 \pi i (x-y)\cdot \frac{w}{n} \Big)
    F_{n,\nu}\Big(\frac{x}{n} \Big)\overline{F_{n,\nu}\Big(\frac{y}{n}\Big)}
\\& =
     \sum_{w \in \mathbb{Z}^d_n \backslash\{0\}}
    \Big( \frac{1- \widehat{\rho_\kappa}(w)}{\|w\|^{2\alpha}} \Big)
    \widehat{G_{n,\nu}}(w)\overline{\widehat{G_{n,\nu}}(w)}
\\ & \!\!\!\!\stackrel{\eqref{eq-error-fourier-of-mollifier-2}}{\le}
C \kappa^{\delta} n^{2d} \sum_{w \in \mathbb{Z}^d_n} |\widehat{G_{n,\nu}}(w)|^2,
\end{align*}
where, $\delta = \min\{1,2\alpha\}$, as done before. In the last inequality, we
also used that $|\widehat{G_{n,\nu}}(0)|^2 \ge 0$. Since
$|F_{n,\nu}(x)|\le n^{-d}$ and due to Parseval's identity we get

\begin{align}\label{claim-bound-for-tightness-eq-3}
    \sum_{w \in \mathbb{Z}^d_n} |\widehat{G_{n,\nu}}(w)|^2
& =
    n^{-d}\sum_{x \in \mathbb{Z}^d_n} |{G_{n,\nu}}(x)|^2
\nonumber =
    n^{-d}\sum_{x \in \mathbb{T}^d_n} |{F_{n,\nu}}(x)|^2
\nonumber\\& \le
    n^{-2d}\sum_{x \in \mathbb{T}^d_n} \int_{B(x,\frac{1}{2n})} |\phi_\nu(z)|\text{d}z
 =
    n^{-2d} \|\phi_\nu(z) \|_{L^1(\mathbb{T}^d)}
 \le    Cn^{-2d}.
\end{align}
Therefore,
\begin{align}\label{claim-bound-for-tightness-eq-4}
    \sum_{x,\,y\in \mathbb{T}^d_n} \sum_{w \in \mathbb{Z}^d_n \backslash\{o\}}
    \Big( 1- \widehat{\rho_\kappa}(w) \Big)
    \frac{\exp(2 \pi i (x-y)\cdot w)}{\|w\|^{2\alpha}}
    F_{n,\nu}(x)\overline{F_{n,\nu}(y)}
    \le C \kappa^\delta.
\end{align}
We can then concentrate on bounding the first term of
\eqref{claim-bound-for-tightness-eq-2}.

\begin{align}\label{claim-bound-for-tightness-eq-5}
&
    \sum_{x,\,y\in \mathbb{T}^d_n} \sum_{w \in \mathbb{Z}^d_n \backslash\{o\}}
    \widehat{\rho_\kappa}(w) \frac{\exp(2 \pi i (x-y)\cdot w)}{\|w\|^{2\alpha}}
    F_{n,\nu}(x)\overline{F_{n,\nu}(y)}
\nonumber \\ & =
    \sum_{x,\,y\in \mathbb{T}^d_n} \sum_{w \in \mathbb{Z}^d \backslash\{o\}}
    \widehat{\rho_\kappa}(w) \frac{\exp(2 \pi i (x-y)\cdot w)}{\|w\|^{2\alpha}}
    F_{n,\nu}(x)\overline{F_{n,\nu}(y)}
\nonumber \\ & -
    \sum_{x,\,y\in \mathbb{T}^d_n}
    \sum_{\substack{w \in \mathbb{Z}^d \\ \|w\|_\infty > \frac{n}{2} }}
    \widehat{\rho_\kappa}(w) \frac{\exp(2 \pi i (x-y)\cdot w)}{\|w\|^{2\alpha}}
    F_{n,\nu}(x)\overline{F_{n,\nu}(y)}.
\end{align}
Again we use the fast decay of $\widehat{\rho_\kappa}$.
We can just apply \eqref{eq-decay-of-mollifier} for $m>d$  to get

\begin{align}\label{claim-bound-for-tightness-eq-6}
&
    \sum_{x,\,y\in \mathbb{T}^d_n}
    \sum_{\substack{w \in \mathbb{Z}^d \\ \|w\|_\infty > \frac{n}{2} }}
    \widehat{\rho_\kappa}(w) \frac{\exp(2 \pi i (x-y)\cdot w)}{\|w\|^{2\alpha}}
    F_{n,\nu}(x)\overline{F_{n,\nu}(y)}
\nonumber \\ & \le
    \frac{C}{n^{2\alpha}}
    \sum_{\substack{w \in \mathbb{Z}^d \\ \|w\|_\infty > \frac{n}{2} }}
    \widehat{\rho_\kappa}(w)
    \sum_{x,\,y\in \mathbb{T}^d_n}F_{n,\nu}(x)\overline{F_{n,\nu}(y)}
\nonumber \\ & \le
    C
    \sum_{\substack{w \in \mathbb{Z}^d \\ \|w\|_\infty > \frac{n}{2} }}
    \widehat{\rho_\kappa}(w)
    \Big|\sum_{x\in \mathbb{T}^d_n}F_{n,\nu}(x)    \Big|^2
  \le
    C
    \sum_{\substack{w \in \mathbb{Z}^d \\ \|w\|_\infty > \frac{n}{2} }}
    \frac{\| \phi_\nu\|^2_{L^1(\mathbb{T}^d)}}{(1+\|w\|)^m} \le
    C.
\end{align}

Analogously, we select $m>d$ and use \eqref{eq-decay-of-mollifier} to get
\begin{align}\label{claim-bound-for-tightness-eq-7}
    \sum_{x,\,y\in \mathbb{T}^d_n} \sum_{w \in \mathbb{Z}^d \backslash\{o\}}
&
    \widehat{\rho_\kappa}(w) \frac{\exp(2 \pi i (x-y)\cdot w)}{\|w\|^{2\alpha}}
    F_{n,\nu}(x)\overline{F_{n,\nu}(y)}
\nonumber \\ &\!\!\!\! \stackrel{\eqref{eq-decay-of-mollifier}}{\le}
    \sum_{x,\,y\in \mathbb{T}^d_n} \sum_{w \in \mathbb{Z}^d \backslash\{o\}}
    |F_{n,\nu}(x)\overline{F_{n,\nu}(y)}|
    \frac{1}{(1+\|w\|)^m}
\nonumber \\ & \le
    \sum_{w \in \mathbb{Z}^d \backslash\{o\}}
    \frac{\| \phi_\nu\|^2_{L^1(\mathbb{T}^d)}}{(1+\|w\|)^m}
 \le
    C.
\end{align}

By plugging \eqref{claim-bound-for-tightness-eq-2},
\eqref{claim-bound-for-tightness-eq-4},\eqref{claim-bound-for-tightness-eq-5},
\eqref{claim-bound-for-tightness-eq-6} and \eqref{claim-bound-for-tightness-eq-7}
in \eqref{claim-bound-for-tightness-eq-1}, we conclude the proof
of the claim, and hence of the Theorem~\ref{theorem-main-non-Gaussian} in the 
case when all moments of $\sigma$  are finite.
\end{proof}

\subsubsection{Truncation method}\label{sec-proof-truncation-method}

In the first part of the argument, we had to restrict ourselves to weights with all moments finite.
We will now show how to reconstruct the general case. We will need to fix an arbitrarily large (but finite) constant $\mathcal{R} >0$. Set

\begin{align*}
    w_n^{<\mathcal{R} }(x)&:=\sum_{y\in \mathbb{Z}_n^d}
    g^{(\alpha)}(x,\,y)\sigma(y)1_{\{|\sigma(y)|< \mathcal{R} \}},\\
    w_n^{\ge\mathcal{R} }(x)&:=\sum_{y\in \mathbb{Z}_n^d}
    g^{(\alpha)}(x,\,y)\sigma(y)1_{\{|\sigma(y)|\ge \mathcal{R} \}}.
\end{align*}
Clearly we have that
$w_n(\cdot)=w_n^{<\mathcal{R} }(\cdot)+w_n^{\ge \mathcal{R} }(\cdot)$. To
prove our result, we will use the following theorem from Theorem 4.2 from
\cite{billingsley2013convergence}.

\begin{theorem}\label{thm-from-billingsley}
    Let $S$ be a metric space with metric $\rho$. Suppose that $(X_{n,\,u},\,X_n)$ are elements of $S\times S$. If
    \[
      \lim_{u\to\infty} \overline{\lim_{n\to\infty}}
        \mathbb{P} \Big(\rho(X_{n,\,u},\,X_n)\ge \tau\Big)=0
    \]
    for all $\tau>0$, and $X_{n,\,u}\longrightarrow_{n}Z_u\longrightarrow_u X$,
    where ``$\longrightarrow_x$'' indicates convergence in law as
    $x \longrightarrow \infty$, then $X_n\longrightarrow_n X$.
\end{theorem}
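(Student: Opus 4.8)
The plan is to prove this classical ``convergence together'' statement through the portmanteau characterisation of weak convergence. Recall that on a metric space $(S,\rho)$ one has $X_n \Rightarrow X$ if and only if $\mathbb{E}[f(X_n)] \to \mathbb{E}[f(X)]$ for every bounded uniformly continuous $f\colon S \to \mathbb{R}$; it is this uniformly continuous class, rather than merely continuous test functions, that makes the modulus-of-continuity estimate below available. So first I would fix such an $f$ with $\|f\|_\infty \le M$ and reduce the problem to showing $\mathbb{E}[f(X_n)] \to \mathbb{E}[f(X)]$.

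The next step is the triangle-inequality decomposition
\[
    \big|\mathbb{E} f(X_n) - \mathbb{E} f(X)\big|
\le
    \big|\mathbb{E} f(X_n) - \mathbb{E} f(X_{n,u})\big|
+
    \big|\mathbb{E} f(X_{n,u}) - \mathbb{E} f(Z_u)\big|
+
    \big|\mathbb{E} f(Z_u) - \mathbb{E} f(X)\big|,
\]
valid for every $u$ and $n$. The middle term tends to $0$ as $n\to\infty$ for each fixed $u$, by the hypothesis $X_{n,u}\Rightarrow Z_u$; the last term tends to $0$ as $u\to\infty$, by $Z_u \Rightarrow X$. The content is therefore concentrated in controlling the first term uniformly in $n$ as $u$ grows. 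For that I would use uniform continuity: given $\varepsilon > 0$ choose $\tau > 0$ with $\rho(x,y) < \tau \Rightarrow |f(x) - f(y)| < \varepsilon$, and split the expectation over the events $\{\rho(X_n, X_{n,u}) < \tau\}$ and its complement, bounding $|f| \le M$ on the latter, to obtain
\[
    \big|\mathbb{E} f(X_n) - \mathbb{E} f(X_{n,u})\big|
\le
    \varepsilon + 2M\, \mathbb{P}\big(\rho(X_n, X_{n,u}) \ge \tau\big).
\]

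Feeding these three estimates back into the decomposition, I would take $\limsup_{n\to\infty}$ first: the middle term drops out, and I am left with $\limsup_n|\mathbb{E} f(X_n)-\mathbb{E} f(X)| \le \varepsilon + 2M\,\limsup_n \mathbb{P}(\rho(X_n,X_{n,u})\ge\tau) + |\mathbb{E} f(Z_u)-\mathbb{E} f(X)|$. Now letting $u\to\infty$, the probability term vanishes by the standing hypothesis $\lim_u \limsup_n \mathbb{P}(\rho(X_{n,u},X_n)\ge\tau)=0$ and the $Z_u$-term vanishes by $Z_u\Rightarrow X$, leaving $\limsup_n|\mathbb{E} f(X_n)-\mathbb{E} f(X)|\le\varepsilon$. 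Since $\varepsilon>0$ was arbitrary the limit is $0$, which is exactly $X_n\Rightarrow X$.

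The hard part is purely one of bookkeeping: the hypotheses supply an iterated limit $\lim_u\limsup_n$, not a joint limit, so the argument must be arranged so that $u$ is sent to infinity only after the $n$-limsup has been taken, and one must check that the single threshold $\tau=\tau(\varepsilon)$ works simultaneously for all $n$ and $u$. This is precisely why uniform continuity of $f$, which yields a $u$- and $n$-independent $\tau$, is indispensable, and why boundedness of $f$ is needed to absorb the tail event. As this is verbatim Billingsley's Theorem~4.2, in practice I would simply invoke \cite{billingsley2013convergence} rather than reproduce the argument.
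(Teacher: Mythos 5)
Your proof is correct, and your closing instinct matches the paper exactly: the paper does not prove this statement at all, it simply imports it with the remark that it is Theorem 4.2 of \cite{billingsley2013convergence}. The argument you supply is the standard test-function proof: reduce to bounded uniformly continuous $f$ via the portmanteau characterisation, use the three-term triangle decomposition, and exploit the fact that uniform continuity gives a threshold $\tau=\tau(\varepsilon)$ independent of $n$ and $u$, so that the iterated limit $\lim_{u}\overline{\lim}_{n}$ in the hypothesis is exactly what is needed after taking $\limsup_n$ first. All steps check out, including the bound $|\mathbb{E} f(X_n)-\mathbb{E} f(X_{n,u})|\le \varepsilon + 2M\,\mathbb{P}(\rho(X_n,X_{n,u})\ge\tau)$ and the observation that the left-hand side of the final estimate is $u$-free. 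For comparison, Billingsley's own proof (the one the paper defers to) runs instead through closed sets: for closed $F$ one bounds $\mathbb{P}(X_n\in F)\le \mathbb{P}(X_{n,u}\in F^{\tau})+\mathbb{P}(\rho(X_{n,u},X_n)\ge\tau)$, where $F^{\tau}$ is the closed $\tau$-enlargement, applies the portmanteau inequality twice (first $n\to\infty$ against $Z_u$, then $u\to\infty$ against $X$), and finally lets $\tau\downarrow 0$ using $\bigcap_{\tau>0}F^{\tau}=F$. The two routes are equivalent in content and length; yours avoids set enlargements at the cost of explicit modulus-of-continuity bookkeeping, and either is acceptable here, since all the paper actually uses is the statement itself.
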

Therefore, we need to prove two statements.

\begin{itemize}
\item[(S1)] $\lim_{\mathcal{R} \to\infty} \overline{\lim}_{n\to\infty} \mathbb{P} \Big(\Big\|\Xi^{\alpha}_{w_n}-\Xi^{\alpha}_{w_n^{<\mathcal{R} }}\Big\|_{\mathcal H_{-\varepsilon}}\ge \tau\Big)=0$ for all $\tau>0$.
\item[(S2)] For a constant $v_\mathcal{R} >0$, we have
    $\Xi^{\alpha}_{w_n^{<\mathcal{R} }} \longrightarrow_n
    \sqrt{v_\mathcal{R}}\Xi^{\alpha}
    \longrightarrow_{\mathcal{R} }\Xi^{\alpha}    $
    in the topology of $\mathcal H_{-\varepsilon}$.
\end{itemize}
It follows that $\Xi^{\alpha}_{w_n}$ converges to
$\Xi^{\alpha}$ in law in the topology of $\mathcal H_{-\varepsilon}$.

Since the proof of (S1) and (S2) does not present any extra technical difficulties, and therefore the argument is almost unchanged when compared to the proof in Section 5.2 in \cite{Cipriani2016}, we will leave it to the reader.

\section{Acknowledgements}
The authors would like to thank Alessandra Cipriani and Rajat Hazra for the
insightful discussions, Jan de Graaff for the pictures and Hester Kronenberg
for pointing out a mistake in a previous version of this article. Furthermore, we would like to express our gratitude to the anonymous referees who pointed out typos and mistakes in the previous version of the article and helped to improve the presentation substantially. M. Jara acknowledges
CNPq for its support through the Grant 305075/2017-9, FAPERJ for its support
through the Grant E-29/203.012/2018 and ERC for its support through the European
Unions Horizon 2020 research and innovative program (Grant Agreement No.715734).

\bibliographystyle{abbrv}
\bibliography{library}

\end{document}